\long\def\comment#1{}
\long\def\comment#1{}
\newtheorem{algorithm}{Algorithm}
\newtheorem{theorem}{Theorem}
\newtheorem{corollary}{Corollary}
\newtheorem{lemma}{Lemma}
\theoremstyle{definition}
\newtheorem{remark}{Comment}[section]
\newcommand{\citen}{\citeasnoun}
\newcommand{\be}{\begin{eqnarray}}
\newcommand{\ee}{\end{eqnarray}}
\newcommand{\ba}{\begin{array}}
\newcommand{\ea}{\end{array}}
\newcommand{\bs}{\begin{align}\begin{split}\nonumber}
\newcommand{\bsnumber}{\begin{align}\begin{split}}
\newcommand{\es}{\end{split}\end{align}}
\renewcommand{\(}{\left(}
\renewcommand{\)}{\right)}
\renewcommand{\[}{\left[}
\renewcommand{\]}{\right]}
\renewcommand{\hat}{\widehat}
\newcommand{\Gn}{\mathbb{G}_n}
\newcommand{\Ep}{{\mathrm{E}}}
\newcommand{\En}{{\mathbb{E}_n}}
\newcommand{\conflvl}{\gamma}
\newcommand{\barEp}{\bar \Ep}
\newcommand{\PX}{\mathcal{P}_{\hat I}}
\newcommand{\MX}{\mathcal{M}_{\hat I}}
\newcommand{\MXd}{\mathcal{M}_{\hat I_1}}
\newcommand{\MXy}{\mathcal{M}_{\hat I_2}}
\renewcommand{\Pr}{{\mathrm{P}}}
\def\RR{ {\Bbb{R}}}
\def\supp{{\rm support}}
\newcommand{\semin}[1]{\phi_{{\rm min}}(#1)}
\newcommand{\semax}[1]{\phi_{{\rm max}}(#1)}
\renewcommand{\hat}{\widehat}
\renewcommand{\leq}{\leqslant}
\renewcommand{\geq}{\geqslant}
\def\aa{{a}}
\begin{document}

\title[Inference  after Model Selection]{ Inference on Treatment Effects After Selection Amongst High-Dimensional Controls}
\author[Belloni \ Chernozhukov \ Hansen]{A. Belloni \and V. Chernozhukov \and C. Hansen}

\date{First version:  May 2010.  This version is of  \today. This is a revision of a  2011 ArXiv/CEMMAP  paper entitled ``Estimation of Treatment Effects with High-Dimensional Controls"}

\thanks{We thank  Ted Anderson, Takeshi Amemiya, St\'ephane Bonhomme, Mathias Cattaneo, Gary Chamberlain, Denis Chetverikov, Graham Elliott, Eric Gretchen, Bruce Hansen, James Hamilton, Han Hong, Guido Imbens,  Tom MaCurdy, Anna Mikusheva,  Whitney Newey, Alexei Onatsky, Joseph Romano, Andres Santos, Chris Sims, and  participants of 10th Econometric World Congress in Shanghai 2010,  CIREQ-Montreal, Harvard-MIT, UC San-Diego, Princeton, Stanford, and  Infometrics Workshop for helpful comments.}

\maketitle

\begin{abstract}
We propose robust methods for inference on the effect of a treatment variable on a scalar outcome in the presence of very many controls.  Our setting is a partially linear model with possibly non-Gaussian and heteroscedastic disturbances where  the number of controls may be much larger than the sample size.  To make informative inference feasible, we require the model to be approximately sparse; that is, we require that the effect of confounding factors can be controlled for up to a small approximation error by conditioning on a relatively small number of controls whose identities are unknown. The latter condition makes it possible to estimate the treatment effect by selecting approximately the right set of controls. We develop a novel estimation and uniformly valid inference method for the treatment effect in this setting, called the ``post-double-selection" method. Our results apply to Lasso-type methods used for covariate selection as well as to any other model selection method that is able to find a sparse model with good approximation properties.

The main attractive feature of our method is that it allows for imperfect selection of the controls and provides confidence intervals that are valid uniformly across a large class of models.   In contrast, standard post-model selection estimators fail to provide uniform inference even in simple cases with a small, fixed number of controls. Thus our method resolves the problem of uniform inference after model selection for a large, interesting class of models. We illustrate the use of the developed methods with numerical simulations and an application to the effect of abortion on crime rates. \\

\emph{Key Words:}  treatment effects,  partially linear model, high-dimensional-sparse regression,  inference under imperfect model selection, uniformly valid inference after model selection

\end{abstract}

\section{Introduction}

Many empirical analyses in economics focus on estimating the structural, causal, or treatment effect of some variable on an outcome of interest.  For example, we might be interested in estimating the causal effect of some government policy on an economic outcome such as employment.  Since economic policies and many other economic variables are not randomly assigned, economists rely on a variety of quasi-experimental approaches based on observational data when trying to estimate such effects.  One important method is based on the assumption that the variable of interest can be taken as randomly assigned once a sufficient set of other factors has been controlled for.  Economists, for example, might argue that changes in state-level public policies can be taken as randomly assigned relative to unobservable factors that could affect changes in state-level outcomes after controlling for aggregate macroeconomic activity, state-level economic activity, and state-level demographics; see, for example, \citen{heckman:metricslabormarkets} or \citen{imbens:review}.

A problem empirical researchers face when relying on an identification strategy for estimating a structural effect that relies on a conditional on observables argument is knowing which controls to include.  Typically, economic intuition will suggest a set of variables that might be important but will not identify exactly which variables are important or the functional form with which variables should enter the model.  This lack of clear guidance about what variables to use leaves researchers with the problem of selecting a set of controls from a potentially vast set of control variables including raw regressors available in the data as well as interactions and other transformations of these regressors.  A typical economic study will rely on an \emph{ad hoc} sensitivity analysis in which a researcher reports results for several different sets of controls in an attempt to show that the parameter of interest that summarizes the causal effect of the policy variable is insensitive to changes in the set of control variables.  See \citen{levitt:abortion}, which we use as the basis for the empirical study in this paper, or examples in \citen{AngristBook} among many other references.

We present an approach to estimating and performing inference on structural effects in an environment where the treatment variable may be taken as exogenous conditional on observables that complements existing strategies.  We pose the problem in the framework of a partially linear model
\begin{equation}\label{plm:intro}
y_{i} = d_i \alpha_0 + g(z_i) + \zeta_i
\end{equation}
where $d_i$ is the treatment/policy variable of interest, $z_i$ is a set of control variables, and $\zeta_i$ is an unobservable that satisfies $\textnormal{E}[\zeta_i\mid d_i,z_i] = 0$.\footnote{ We note that $d_i$ does not need to be binary.}  The goal of the econometric analysis is to conduct inference on the treatment effect $\alpha_0$.  We examine the problem of selecting a set of variables from among $p$ potential controls $x_i=P(z_i)$, which may consist of $z_i$ and transformations of $z_i$, to adequately approximate $g(z_i)$ allowing for $p > n$. Of course, useful inference about $\alpha_0$ is unavailable in this framework without imposing further structure on the data.  We impose such structure by assuming that exogeneity of $d_i$ may be taken as given once one controls linearly for a relatively small number $s < n$ of variables in $x_i$  whose identities are \textit{a priori} unknown.  This assumption implies that a linear combination of these $s$ unknown controls provides an approximation to $g(z_i)$ which produces relatively small approximation errors.\footnote{We carefully define what we mean by small approximation errors in Section 2.}  This assumption, which is termed approximate sparsity or simply sparsity, allows us to approach the problem of estimating $\alpha_0$ as a variable selection problem.
This framework allows for the realistic scenario in which the researcher is unsure about exactly which variables or transformations are important for approximating $g(z_i)$ and so must search among a broad set of controls.

The assumed sparsity includes as special cases the most common approaches to parametric and nonparametric regression analysis. Sparsity justifies the use of fewer variables than there are observations in the sample.  When the initial number of
variables is high, the assumption justifies the use of variable selection methods to reduce the number of variables to a manageable size. In many economic applications, formal and informal strategies are often used to select such smaller sets of potential control variables.   Most of these standard variable selection strategies are non-robust and may produce poor inference.\footnote{An example of inference going wrong is
given in Figure 1 (left panel), presented in the next section, where a standard post-model selection estimator has a bimodal distribution which sharply deviates from the standard normal distribution. More examples are given in Section 6 where we document the poor inferential performance of a standard post-model selection method. }
In an effort to demonstrate robustness of their conclusions, researchers often employ \emph{ad hoc} sensitivity analyses which examine the robustness of inferential conclusions to variations in the set of controls.  Such sensitivity analyses are useful but lack rigorous justification. As a complement to these \emph{ad hoc} approaches, we propose a formal, rigorous approach to inference allowing for selection of controls.  Our proposal uses modern variable selection methods in a novel manner which results in  robust and valid inference.

 The main contributions of this paper are providing a  robust estimation and inference method within a partially linear model
  with potentially very high-dimensional  controls and developing the supporting theory.  The method relies on the use of Lasso-type or other sparsity-inducing  procedures for variable selection.
Our approach differs from usual post-model-selection methods that rely on a single selection step.  Rather, we use two different variable selection steps followed by a final estimation step as follows:
  \begin{itemize}
   \item[1.] In the first step, we select a set of control variables that are useful for predicting the treatment $d_i$. This step helps to insure robustness by finding control variables that are strongly related to the treatment
and thus potentially important confounding factors.

\item[2.] In the second step, we select additional variables by selecting control variables
that predict $y_{i}$.  This step helps to insure that we have captured important elements in the equation of interest,
ideally helping keep the residual variance small as well as intuitively providing an additional chance to find important
confounds.

 \item[3.] In the final step, we estimate the treatment effect $\alpha_0$ of interest by the linear regression of $y_{i}$ on the treatment
$d_i$ and the union of the set of variables selected in the two variable selection steps.

\end{itemize}

We provide theoretical results on the properties of the resulting treatment effect estimator and show that it provides inference
that is uniformly valid over large classes of models and also achieves
 the semi-parametric efficiency
 bound under some conditions.  Importantly, our theoretical results allow for imperfect variable selection in either of
  the two variable selection steps as well as allowing for non-Gaussianity and heteroscedasticity of the model's errors.\footnote{In a companion paper that presents an overview of results for $\ell_1$-penalized estimators, \citen{BCH2011:InferenceGauss}, we provide similar results in the idealized Gaussian homoscedastic framework. }

We illustrate the theoretical results through an examination of the effect of abortion on crime rates following \citen{levitt:abortion}.  In this example, we find that the formal variable selection procedure produces a qualitatively different result than that obtained through the \textit{ad hoc} set of sensitivity results presented in the original paper.  By using formal variable selection, we select a small set of between eight and fourteen variables depending on the outcome, compared to the set of eight variables considered by \citen{levitt:abortion}.  Once this set of variables is linearly controlled for, the estimated abortion effect is rendered imprecise.  It is interesting that the key variable selected by the variable selection procedure is the initial condition for the abortion rate.  The selection of this initial condition and the resulting imprecision of the estimated treatment effect suggest that one cannot determine precisely whether the effect attributed to abortion found when this initial condition is omitted from the model is due to changes in the abortion rate or some other persistent state-level factor that is related to relevant changes in the abortion rate and current changes in the crime rate.\footnote{Note that all models are estimated in first-differences to eliminate any state-specific factors that might be related to both the relevant level of the abortion rate and the level of the crime rate.}  It is interesting that \citen{FooteGoetzAbortion} raise a similar concern based on intuitive grounds and additional data in a comment on \citen{levitt:abortion}.  \citen{FooteGoetzAbortion} find that a linear trend interacted with crime rates before abortion could have had an effect renders the estimated abortion effects imprecise.\footnote{\citen{DLAbortionResponse} provide yet more data and a more complicated specification in response to \citen{FooteGoetzAbortion}.  In a supplement available at http://faculty.chicagobooth.edu/christian.hansen/research/, we provide additional results based on \citen{DLAbortionResponse}.  The conclusions are similar to those obtained in this paper in that we find the estimated abortion effect becomes imprecise once one allows for a broad set of controls and selects among them.  However, the specification of \citen{DLAbortionResponse} relies on a large number of district cross time fixed effects and so does not immediately fit into our regularity conditions.  We conjecture the methodology continues to work in this case but leave verification to future research.}  Overall, finding that a formal, rigorous approach to variable selection produces a qualitatively different result than a more \textit{ad hoc} approach suggests that these methods might be used to complement economic intuition in selecting control variables for estimating treatment effects in settings where treatment is taken as exogenous conditional on observables.

\textbf{Relationship to literature.}  We  contribute to several existing literatures.  First, we contribute to the literature on series estimation of partially linear models (\citeasnoun{donald:newey:pl}, \citeasnoun{hardle:pl}, \citeasnoun{robinson}, and others).   We differ from most of the existing literature which considers $p\ll n$ series terms by allowing $p \gg n$ series terms from which we select $\hat s \ll n$ terms to construct the regression fits. Considering an initial broad set of terms allows for more refined approximations of regression functions relative to the usual approach that uses only a few low-order terms.  See, for example, \citeasnoun{BCH2011:InferenceGauss} for a wage function example and Section 5 for theoretical examples.  However, our most important contribution is to  allow for data-dependent selection of the appropriate series terms. The previous literature on inference in the partially linear model generally takes the series terms as given without allowing for their data-driven selection. However, selection of series terms is crucial for achieving consistency when $p \gg n$ and is needed for increasing efficiency even when $p  =C n$ with $C<1$.  That the standard estimator can be be highly inefficient in the latter case follows from results in \citeasnoun{CJN:PLMStandardError}.\footnote{ \citeasnoun{CJN:PLMStandardError} derive properties of series estimator under $p = Cn$, $C<1$,  asymptotics. It follows  from their
results that under homoscedasticity the series estimator achieves the semiparametric efficiency bound only if $C\to 0$. }    We focus on Lasso for performing this selection as a theoretically and computationally attractive device but note that any other method, such as selection using the traditional generalized cross-validation criteria, will work as long as the method guarantees sufficient sparsity in its solution.    After model selection, one may apply conventional standard errors or the refined standard errors proposed by \citen{CJN:PLMStandardError}.\footnote{If the selected number of terms $\hat s$ is a substantial fraction of $n$, we recommend using \citeasnoun{CJN:PLMStandardError} standard errors after applying our model selection procedure. }


Second, we contribute to the literature on the estimation of treatment effects. We note that the policy variable $d_i$ does not have to be binary in our framework.  However, our method has a useful interpretation related to the propensity score when $d_i$ is binary.   In the first selection step, we select terms from $x_i$ that predict the treatment $d_i$,  i.e. terms that explain the propensity score.  We also select terms from $x_i$ that predict $y_i$, i.e. terms that explain the outcome regression function. Then we run a final regression of $y_i$ on the treatment $d_i$ and the union of selected terms.   Thus, our procedure relies on the selection of variables relevant for both the propensity score and the outcome  regression.  Relying on selecting variables that are important for both objects allows us to achieve two goals: we obtain uniformly valid confidence sets for $\alpha_0$ despite imperfect model selection and we achieve full efficiency for estimating $\alpha_0$ in the homoscedastic case.  The relation of our approach to the propensity score brings about interesting connections to the treatment effects literature.  \citen{hahn:prop}, \citen{heckman:ichimura:smith:mathching}, and \citen{abadie:imbens} have constructed efficient regression or matching-based estimates of average treatment effects.   \citen{hahn:prop} also shows that conditioning on the propensity score is unnecessary for efficient estimation of average treatment effects.  \citen{hirano:imbens:ridder} demonstrate that one can efficiently estimate average treatment effects using estimated propensity score weighting alone. \citen{robins:dr} have shown that using propensity score modeling coupled with a parametric regression model leads to efficient estimates if either the propensity score model or the parametric regression model is correct.  While our contribution  is quite distinct from these approaches, it also highlights the important  robustness role played by the propensity score model in the selection of the right control terms for the final regression.

Third, we contribute to the literature on estimation and inference with high-dimensional data and to the uniformity literature.  There has been extensive work on estimation and perfect model selection in both low and high-dimensional contexts,\footnote{For  reviews focused on econometric applications, see, e.g., \citen{Hansen2005} and \citen{BellChernHans:Gauss}.} but there has been little work on inference after imperfect model selection.  Perfect model selection relies on unrealistic assumptions, and  model selection mistakes can have serious consequences for inference as has been shown in \citen{potscher}, \citen{leeb:potscher:pms}, and others.    In work on instrument selection for estimation of a linear instrumental variables model, \citen{BellChenChernHans:nonGauss} have shown that model selection mistakes do not prevent valid inference about low-dimensional structural parameters due to the inherent adaptivity of the problem: Omission of a relevant instrument does not affect consistency of an IV estimator as long as there is another relevant instrument. The partially linear regression model (\ref{plm:intro}) does not have the same adaptivity structure, and model selection based on the outcome regression alone produces non-robust confidence intervals.\footnote{The poor performance of inference on a treatment effect after model selection on only the outcome equation is shown through simulations in Section 6.}
Our post-double selection procedure creates the necessary adaptivity by performing two separate model selection steps, making it possible to perform robust/uniform inference after model selection.  The uniformity holds over large, interesting classes of high-dimensional sparse models.  In that regard, our contribution is in the spirit and builds upon  the classical contribution by \citen{romano:uniform} on the uniform validity of t-tests for the univariate mean.  It also shares the spirit of recent contributions, among others, by \citen{mikusheva} on uniform inference in autoregressive models, by \citen{andrews:cheng}  on uniform inference in moment condition models that are potentially unidentified, and  by \citen{andrews:cheng:guggen}  on a generic framework for uniformity analysis.

Finally, we contribute to the broader literature on high-dimensional estimation.  For variable selection we use  $\ell_1$-penalization methods, though our method and theory will allow for the use of other methods. $\ell_1$-penalized methods have been proposed for model selection problems in high-dimensional least squares problems, e.g. Lasso in \citen{FF:1993} and \citen{T1996}, in part because they are computationally efficient. Many $\ell_1$-penalized methods have been shown to have good estimation properties even when perfect variable selection is not feasible; see, e.g., \citen{CandesTao2007}, \citen{MY2007}, \citen{BickelRitovTsybakov2009},  \citen{horowitz:lasso}, \citen{BC-PostLASSO} and the references therein. Such methods have also been shown to extend suitably to nonparametric and non-Gaussian cases as in \citen{BickelRitovTsybakov2009} and \citen{BellChenChernHans:nonGauss}. These methods also produce models with a relatively small set of variables. The last property is important in that it leaves the researcher with a set of variables that may be examined further; in addition it corresponds to the usual approach in economics that relies on considering a small number of controls.

\textbf{Paper Organization.}   In Section 2, we formally present the modeling environment including the key sparsity condition and develop our advocated estimation and inference method.  We  establish the consistency and asymptotic normality of our estimator of $\alpha_0$ uniformly over large classes of models in Section 3.  In Section 4, we present a generalization of the basic procedure to allow for model selection methods other than Lasso.  In Section 5, we present a
series of theoretical examples in which we provide primitive condition that imply the higher-level conditions of Section 3.   In Section 6, we present a series of numerical examples that verify our theoretical results numerically, and we apply our method to the abortion and crime example of \citen{levitt:abortion} in Section 7.  In appendices, we provide the proofs.

\textbf{Notation.}   In what follows, we work with triangular array data $\{\(\omega_{i,n}, i=1,...,n\), n=1,2,3,...\}$
defined on probability space $(\Omega, \mathcal{A}, \Pr_n)$,
where $\Pr = \Pr_n$ can change with $n$.    Each  $\omega_{i,n}= (y_{i,n}', z_{i,n}', d_{i,n}')'$
is a vector with components defined below, and these vectors are i.n.i.d. -- independent across $i$, but not necessarily identically distributed. Thus, all parameters that characterize the distribution of  $\{\omega_{i,n}, i=1,...,n\}$ are
implicitly indexed by $\Pr_n$ and thus by $n$.  We omit the dependence on these objects from the notation in what follows for notational simplicity.  We use array asymptotics to better capture some finite-sample phenomena and to insure the robustness of conclusions with respect to perturbations of the data-generating process $\Pr$
along various sequences. This robustness, in turn, translates into uniform validity of confidence regions over certain regions of data-generating processes.

We use the following empirical process notation, $\En[f] := \En[f(\omega_i)] := \sum_{i=1}^n f(\omega_i)/n,$  and $\Gn(f) := \sum_{i=1}^n ( f(\omega_i)
- \Ep[f(\omega_i)] )/\sqrt{n}.$
Since we want to deal with i.n.i.d. data, we also introduce the average expectation operator:
$
\barEp[f] := \Ep \En[f] =  \Ep \En[f(\omega_i)] = \sum_{i=1}^n \Ep[f(\omega_i)]/n.
$
The ${l}_2$-norm is denoted by
$\|\cdot\|$, and the ${l}_0$-norm, $\|\cdot\|_0$, denotes the number of non-zero components of a vector.  We use $\| \cdot \|_{\infty}$ to denote the maximal element of a vector.  
Given a vector $\delta \in \RR^p$, and a set of
indices $T \subset \{1,\ldots,p\}$, we denote by $\delta_T \in \RR^p$ the vector in which $\delta_{Tj} = \delta_j$ if $j\in T$, $\delta_{Tj}=0$ if $j \notin T$. We use the notation $(a)_+ = \max\{a,0\}$, $a \vee b = \max\{ a, b\}$, and $a \wedge b = \min\{ a , b \}$. We also use the notation $a \lesssim b$ to denote $a \leqslant c b$ for some constant $c>0$ that does not depend on $n$; and $a\lesssim_P b$ to denote $a=O_P(b)$. For an event $E$, we say that $E$ wp $\to$ 1 when $E$ occurs with probability approaching one as $n$ grows.  Given a $p$-vector $b$, we denote
$\text{support}(b) = \{ j \in \{1,...,p\}: b_j \neq 0\} $.

\section{Inference on Treatment and Structural Effects Conditional on Observables}\label{Sec:Treatment}

\subsection{Framework}

We consider the partially linear model
\begin{eqnarray}\label{eq: PL1}
 & y_{i}  = d_i\alpha_0 + g(z_i) + \zeta_i,  &  \Ep[\zeta_i \mid z_i, d_i]= 0,\\
  & d_i  = m(z_i) + v_i, \label{eq: PL2}  &   \Ep[v_i \mid z_i] = 0,
\end{eqnarray}
where $y_{i}$ is the outcome variable, $d_i$ is the policy/treatment variable whose impact $\alpha_0$ we would like to infer, $z_i$ represents confounding factors on which we need to condition, and $\zeta_i$ and $v_i$ are disturbances. 
The parameter $\alpha_0$ is the average treatment or structural effect under appropriate conditions given, for example, in \citen{heckman:metricslabormarkets} or \citen{imbens:review} and is of major interest in many empirical studies.

The confounding factors $z_i$ affect the policy variable via the function $m(z_i)$ and the outcome variable via the function $g(z_i)$. Both of these functions are unknown and potentially complicated. We use linear combinations of control terms $x_i = P(z_i)$ to approximate $g(z_i)$ and $m(z_i)$, writing (\ref{eq: PL1}) and (\ref{eq: PL2}) as
\begin{eqnarray}\label{eq: appPL1}
& & y_{i}  = d_i\alpha_0 + \underbrace{x_i'\beta_{g0} + r_{gi}}_{g(z_i)} + \zeta_i, \\
& & d_i  = \underbrace{x_i'\beta_{m0} + r_{mi}}_{m(z_i)} + v_i \label{eq: appPL2},
\end{eqnarray}
where $ x_i'\beta_{g0}$ and $x_i'\beta_{m0}$ are approximations to $g(z_i)$ and $m(z_i)$, and $r_{gi}$ and $r_{mi}$ are the corresponding approximation errors. In order to allow for a flexible specification and incorporation of pertinent confounding factors, the vector of controls, $x_i = P(z_i)$, can have a dimension $p=p_n$ which can be large relative to the sample size. Specifically, our results only require $\log p = o(n^{1/3})$ along with other technical conditions. High-dimensional regressors $x_i = P(z_i)$ could arise for different reasons.  For instance, the list of available controls could be large, i.e. $x_i=z_i$ as in e.g. \citen{koenker:jappliedeconometircs}.  It could also be that many technical controls are present; i.e. the list $x_i=P(z_i)$ could be composed of a large number of transformations of elementary regressors $z_i$ such as B-splines, dummies, polynomials, and various interactions as in \citen{newey:series} or \citen{chen:Chapter}.

Having very many controls creates a challenge for estimation and inference.  A key condition that makes it possible to perform constructive estimation and inference in such cases is termed sparsity.  Sparsity is the condition that there exist approximations $ x_i'\beta_{g0}$ and $x_i'\beta_{m0}$ to $g(z_i)$ and $m(z_i)$ in  (\ref{eq: appPL1})-(\ref{eq: appPL2}) that require only a small number of non-zero coefficients to render the approximation errors $r_{gi}$ and $r_{mi}$ sufficiently small relative to estimation error.  More formally, sparsity relies on two conditions.  First, there exist $\beta_{g0}$ and $\beta_{m0}$ such  that at most $s=s_n \ll n$ elements of $\beta_{m0}$ and $\beta_{g0}$ are non-zero so that
$$
\|\beta_{m0}\|_0 \leq s \text{ and } \|\beta_{g0}\|_0 \leq s.
$$
Second, the sparsity condition requires the size of the resulting approximation errors to be small compared to the conjectured size of the estimation error:
$$
\{\barEp[r^{2}_{gi}]\}^{1/2} \lesssim \sqrt{s/n}  \text{ and } \{\barEp[r^{2}_{mi}]\}^{1/2} \lesssim  \sqrt{s/n}.
$$
Note that the size of the approximating model $s=s_n$ can grow with $n$ just as in standard series estimation.

The high-dimensional-sparse-model framework outlined above extends the standard framework in the treatment effect literature which assumes both that the identities of the relevant controls are known and that the number of such controls $s$ is much smaller than the sample size.  Instead, we  assume that there are many, $p$, potential controls of which at most $s$ controls suffice to achieve a desirable approximation to the unknown functions $g(\cdot)$ and $m(\cdot)$ and allow the identity of these controls to be unknown.  Relying on this assumed sparsity, we use selection methods to select approximately the right set of controls and then estimate the treatment effect $\alpha_0$.

\subsection{The Method: Least Squares after Double Selection}\label{Sec:DoubleSelection}

We propose the following method for estimating and performing inference about $\alpha_0$.  The most important feature of this method is that it does not rely on the highly unrealistic assumption of perfect model selection which is often invoked to justify inference after model selection. To the best of our knowledge, our result is the first of its kind in this setting. This result extends our previous results on inference under imperfect model selection in the instrumental variables model given in \citen{BellChenChernHans:nonGauss}. The problem is fundamentally more difficult in the present paper due to lack of adaptivity in estimation which we overcome by introducing additional model selection steps. The construction of our advocated procedure reflects our effort to offer a method that has attractive robustness/uniformity properties for inference.  The estimator is $\sqrt{n}$-consistent and asymptotically normal under mild conditions and provides confidence intervals that are robust to various perturbations of the data-generating process that preserve approximate sparsity.

To define the method, we first write the reduced
form corresponding to (\ref{eq: PL1})-(\ref{eq: PL2}) as:
\begin{eqnarray}\label{eq: RPL1}
& & y_{i}  = x_i'\bar \beta_0 +  \bar r_{i} + \bar \zeta_i, \\
& & d_i  = x_i'\beta_{m0} + r_{mi} +  v_i \label{eq: RPL2},
\end{eqnarray}
where $\bar \beta_0 := \alpha_0  \beta_{m0} + \beta_{g0}, \ \ \bar r_i := \alpha_0 r_{mi} + r_{gi}, \ \ \bar \zeta_i := \alpha_0 v_i + \zeta_i.$

We have two equations and hence can apply model selection methods to each equation to select control terms.  The chief method we discuss is the Lasso method described in more detail below. Given the set of selected controls from (\ref{eq: RPL1}) and (\ref{eq: RPL2}), we can estimate $\alpha_0$ by a least squares regression of $y_{i}$ on $d_i$ and the union of the selected controls.  Inference on $\alpha_0$ may then be performed using conventional methods for inference about parameters estimated by least squares.   Intuitively, this procedure works well since we are more likely to recover key controls by considering selection of controls from both equations instead
of just considering selection of controls from the single equation (\ref{eq: appPL1}) or (\ref{eq: RPL1}). In finite-sample experiments, single-selection methods essentially fail, providing poor inference relative to the double-selection method outlined above.  This performance is also supported theoretically by the fact that the double-selection method requires weaker regularity conditions for its validity and for attaining the efficiency bound\footnote{Semi-parametric efficiency is attained in the homoscedastic case.} than the single selection method.

Now we formally define the post-double-selection estimator:   Let $\hat I_1 = {\rm support }(\hat \beta_1)$
denote the control terms selected by a feasible Lasso estimator $\hat \beta_1$
computed using data $(\tilde y_i,\tilde x_i) = (d_{i}, x_i), \ i =1,...,n$.
Let $\hat I_2 = {\rm support }(\hat \beta_2)$ denote the control terms selected by a feasible Lasso estimator
$\hat \beta_2$ computed using  data $(\tilde y_i,\tilde x_i) = (y_{i}, x_i), \ i =1,...,n$.
The post-double-selection estimator  $\check \alpha$ of $\alpha_0$ is defined as the least
squares  estimator obtained by regressing $y_{i}$ on $d_i$ and the selected control terms
$x_{ij}$ with $j \in \hat I \supseteq \hat I_1 \cup \hat I_2$:
 \begin{equation}\label{eq: define pds}
(\check \alpha, \check \beta) = \underset{ \alpha \in \Bbb{R}, \beta \in \Bbb{R}^p}{\rm argmin}\{ \En[(y_{i} - d_i \alpha - x_i'\beta)^2] \ : \ \beta_j = 0, \forall j \not \in \hat I \}.
 \end{equation}
The set $\hat I$ may contain variables that were not selected in the variable selection steps with indices in $\hat I_3$ that the analyst thinks are important
for ensuring robustness.  We call $\hat I_3$ the amelioration set.  Thus, $\hat I =  \hat I_1 \cup \hat I_2 \cup \hat I_3$; let $\hat s = |\hat I|$
and $\hat s_j = |\hat I_j|$ for $j =1,2,3$.

We define feasible Lasso estimators below and note that other selection methods could be used
as well.  When a feasible Lasso  is used to construct $\hat I_1$ and $\hat I_2$, we refer to the post-double-selection estimator as the \emph{post-double-Lasso estimator}.  When other model selection devices are used to construct  $\hat I  = \hat I_1 \text{ and } \hat I_2$, we shall refer the estimator as the generic post-double-selection estimator.

The main theoretical result of the paper shows that the post-double-selection estimator
$\check \alpha$ obeys
\begin{equation}\label {normal}
([\barEp v_i^2]^{-1}\barEp[ v_i^2\zeta_i^2] [\barEp v_i^2]^{-1})^{-1/2} \sqrt{n} (\check \alpha - \alpha_0) \rightsquigarrow N(0,1)
\end{equation}
under approximate sparsity conditions,\emph{ uniformly} within a rich set of data generating processes. We also show that the standard plug-in estimator for standard errors is consistent in these settings.
All of these results imply uniform validity of confidence regions over large, interesting classes of models.   Figure \ref{fig:figure0} (right panel) illustrates the result (\ref{normal}) by showing that the finite-sample distribution of our post-double-selection estimator is very close to the normal distribution.  In contrast, Figure \ref{fig:figure0} (left panel)
illustrates the classical problem with the  traditional post-single-selection estimator based on (\ref{eq: appPL1}), showing that its distribution is bimodal and sharply deviates from the normal distribution.  Finally, it is worth noting that the estimator achieves the semi-parametric efficiency bound under homoscedasticity.

\begin{figure}
\includegraphics[width=\textwidth]{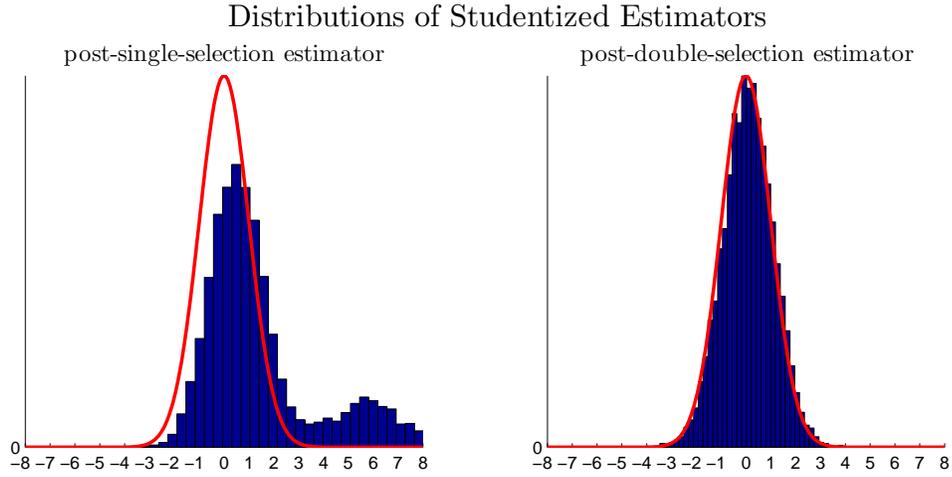}
	\label{fig:figure0}
\caption{\footnotesize The finite-sample distributions (densities) of the standard post-single selection estimator (left panel) and of our proposed post-double selection estimator (right panel).  The distributions are given for centered and studentized quantities.  The results are based on 10000 replications of Design 1 described in Section 6, with $R^2$'s in equation  (\ref{eq: RPL1}) and (\ref{eq: RPL2})  set to $0.5$.}
\end{figure}

\subsection{Selection of controls via feasible Lasso Methods}\label{Sec:FeasibleLasso} Here we describe feasible variable selection via Lasso. Note that each of the regression equations above is of the form
$$
\tilde y_i = \underbrace{\tilde x_i'\beta_0 + r_i}_{f(\tilde z_i)} + \epsilon_i,
$$
where $f(\tilde z_i)$ is the regression function, $\tilde x_i'\beta_0$ is the approximation
based on the dictionary $\tilde x_i=P(\tilde z_i)$, $r_i$ is the approximation error, and $\epsilon_i$ is the error. The Lasso estimator is defined as a solution to \begin{equation}\label{Def:LASSOmain}
 \min_{\beta \in \Bbb{R}^p} \En[(\tilde y_{i} - \tilde x_i'\beta)^2] +  \frac{\lambda}{n} \|\beta \|_{1},
\end{equation}
where $\|\beta\|_{1} = \sum_{j=1}^p | \beta_j|$; see \cite{FF:1993} and \cite{T1996}.  The kinked nature of the penalty function induces
the solution  $\widehat \beta$ to have many zeroes, and thus the Lasso solution may be used for model selection. The selected
model $\hat T = \text{support}(\widehat \beta)$ is often used for further refitting by least squares,
leading to the so called post-Lasso or Gauss-Lasso estimator, see, e.g., \citen{BC-PostLASSO}. The Lasso estimator/selector is computationally attractive because it minimizes a convex function. In the homoskedastic Gaussian case, a basic choice for penalty level suggested by \citen{BickelRitovTsybakov2009} is
 \begin{equation}\label{Def:LambdaLASSOboound0}\lambda = 2 \cdot c \sigma  \sqrt{2 n \log(2p/\conflvl)},
 \end{equation}
where $c>1$, $1-\gamma$ is a confidence level that needs to be set close to 1, and $\sigma$ is the standard deviation of the noise. The formal motivation for this penalty is that it leads to near-optimal rates of convergence of the estimator under approximate sparsity. The good behavior of the estimator of $\beta_0$ in turn implies good approximation properties of the selected model $\widehat T$, as noted in  \citen{BC-PostLASSO}. Unfortunately, even in the homoskedastic case the penalty level specified above is not feasible since it depends on the unknown $\sigma$.

\citen{BellChenChernHans:nonGauss} formulate a feasible Lasso estimator/selector $\widehat \beta$ geared for heteroscedastic, non-Gaussian cases, which solves
\begin{equation}\label{Def:LASSOmain2}
 \min_{\beta \in \Bbb{R}^p} \En[(\tilde y_{i} - \tilde x_i'\beta)^2] +  \frac{\lambda}{n} \|\hat \Psi \beta\|_1,
\end{equation} where $\hat \Psi = {\rm diag}(\hat l_1,\ldots,\hat l_p)$ is a diagonal matrix of penalty loadings. The penalty level $\lambda$
and loadings $\hat l_j$'s are set as
 \begin{equation}\label{Def:LambdaLASSOboound}\lambda = 2 \cdot c \sqrt{n}\Phi^{-1}(1-\conflvl/2p)  \mbox{ and }  \hat l_j = \ l_j + o_{P}(1) , \ \ l_j=\sqrt{\En[\tilde x_{ij}^2  \epsilon_i^2]},  \text{ uniformly in } j=1,\ldots,p,
 \end{equation}
where $c>1$  and $1-\gamma$ is a confidence level.\footnote{Practical recommendations include the choice $c=1.1$ and $\gamma=.05$.}  The $\l_j$'s are ideal penalty loadings that are not observed, and we estimate $\l_j$ by $\hat \l_j$ obtained via an iteration method given in Appendix A. We refer to the resulting feasible Lasso method as the \emph{Iterated Lasso}. The estimator $\widehat \beta$ has statistical performance that is similar to that of the (infeasible) Lasso described above in Gaussian cases and delivers similar performance in non-Gaussian, heteroscedastic cases; see \citen{BellChenChernHans:nonGauss}. In this paper, we only use $\widehat \beta$ as a model selection device.  Specifically, we only make use of
$$
\widehat T = \text{support}(\widehat \beta),
$$
the labels of the regressors with non-zero estimated coefficients. We show that the selected model $\widehat T$ has good approximation properties for the regression function $f$ under approximate sparsity in Section 3.

\citen{BCW-SqLASSO} propose another feasible variant of Lasso called the \emph{Square-root Lasso} estimator, $\widehat \beta$, defined as a solution to
\begin{equation}\label{Def:SQLASSOmain}
\min_{\beta \in \Bbb{R}^p} \sqrt{\En[(\tilde y_{i} - \tilde x_i'\beta)^2]} +  \frac{\lambda}{n}  \| \hat \Psi \beta \|_{1},
\end{equation}
with the penalty level
 \begin{equation}\label{lambda: root lasso}
 \lambda = c \cdot \sqrt{n}\Phi^{-1}(1-\conflvl/2p),
 \end{equation}
where $c>1$, $\gamma \in (0,1)$ is a confidence level, and
$\hat \Psi = {\rm diag}(\hat l_1,\ldots,\hat l_p)$ is a diagonal matrix of penalty loadings.  The main attractive feature of (\ref{Def:SQLASSOmain}) is that one can set $\hat \l_j =\{\En[\tilde x_{ij}^2]\}^{1/2}$ which depends only on observed data in the homoscedastic case.

In the heteroscedastic case, we would like to choose $\hat l_j$ so that
 \begin{equation}\label{loadings: root lasso}
l_j + o_P(1) \leq \hat l_j \lesssim_P l_j,   \text{ where } \ l_j= \{\En[\tilde x_{ij}^2\epsilon_i^2]]/\En[\epsilon_i^2]\}^{1/2}, \text{ uniformly in } j =1,...,p.
\end{equation}
As a simple bound, we could use $\hat \l_j = 2 \{\En[\tilde x_{ij}^4]\}^{1/4}$
 since $$\{\En[\tilde x_{ij}^2\epsilon_i^2]]/\En[\epsilon_i^2]\}^{1/2} \leq \{\En[\tilde x_{ij}^4]\}^{1/4}\{\En[\epsilon_i^4]\}^{1/4}/\{\En[\epsilon_i^2]\}^{1/2}.$$
This bound gives $l_j + o_P(1) \leq \hat \l_j$
if $\{\En[\epsilon_i^4]\}^{1/4}/\{\En[\epsilon_i^2]\}^{1/2} \leq 2 + o_P(1)$,
which covers a wide class of marginal distributions for error $\epsilon_i$.  For example, all $t$-distributions with degrees of freedom greater than five satisfy this condition.  As in the previous case, we can also iteratively re-estimate the penalty loadings using estimates of the $\epsilon_i$'s to approximate the ideal penalty loadings:
 \begin{equation}\label{refined loadings: root lasso}
\hat l_j = l_j + o_P(1), \text{ uniformly in } j =1,...,p.
\end{equation}
The resulting Square-root Lasso and post-Square-root Lasso estimators based on these penalty loadings achieve near optimal rates of convergence even in non-Gaussian, heteroscedastic cases. This good performance implies
good approximation properties for the selected model $\widehat T$.

In what follows, we shall use the term \emph{feasible Lasso} to refer to  either the Iterated Lasso estimator $\widehat \beta $ solving
(\ref{Def:LASSOmain2})-(\ref{Def:LambdaLASSOboound}) or the Square-root Lasso estimator $\widehat \beta$ solving (\ref{Def:SQLASSOmain})-(\ref{loadings: root lasso}) with $c > 1$ and $1-\conflvl$ set such that \begin{equation}\label{Def: conf level}
\conflvl = o(1) \textrm{ and } \log(1/\conflvl) \lesssim \log (p \vee n).\end{equation}

\section{Theory of Estimation and Inference}\label{Sec:LargeSample}

\subsection{Regularity Conditions}
In this section, we provide regularity conditions that are sufficient for validity of the main estimation and inference result. We begin
by stating our main condition, which contains the previously defined approximate sparsity as well as other more technical assumptions.
Throughout the paper, we let $c$, $C$, and $q$ be  absolute constants, and let $\ell_n \nearrow \infty, \delta_n \searrow 0$, and $\Delta_n \searrow 0$ be sequences of absolute positive constants. By absolute constants, we  mean constants that are given, and do not depend the dgp $\Pr= \Pr_n$.

We  assume that for each $n$ the following condition holds on dgp $\Pr = \Pr_n$.

\textbf{Condition ASTE ($\Pr$)}. \textit{(i)  $\{(y_{i}, d_i, z_i), i = 1,...,n\}$ are i.n.i.d. vectors on $(\Omega, \mathcal{F}, \Pr)$ that obey the  model
(\ref{eq: PL1})-(\ref{eq: PL2}), and  the vector  $x_i = P(z_i)$ is a dictionary of transformations of $z_i$, which  may depend on $n$ but not on $\Pr$. (ii)  The true parameter value $\alpha_0$, which may depend on $\Pr$,  is bounded, $|\alpha_0| \leq C$. (iii) Functions $m$ and $g$ admit
an approximately sparse form.  Namely there exists $s \geq 1$ and $\beta_{m0}$ and $\beta_{g0}$, which depend on $n$ and $\Pr$, such that
\begin{eqnarray}
&&m(z_i) = x_i' \beta_{m0} + r_{mi},  \ \ \|\beta_{m0}\|_0 \leq s, \ \  \{ \barEp [r_{mi}^2]\}^{1/2} \leq C \sqrt{s/n}, \\
&&g(z_i) = x_i' \beta_{g0} + r_{gi},  \ \ \ \ \ \|\beta_{g0}\|_0 \leq s, \ \ \   \{ \barEp[r_{gi}^2]\}^{1/2} \leq  C \sqrt{s/n}.
\end{eqnarray}
(iv) The sparsity index obeys $s^2 \log^2 (p\vee n)/n \leq \delta_n$ and
 the size of the amelioration set obeys $ \hat s_3 \leq C (1\vee \hat s_1 \vee \hat s_2)$. (v)  For $\tilde v_i = v_i + r_{mi}$ and $\tilde \zeta_i = \zeta_i + r_{gi}$ we have $|\barEp[ \tilde v_i^2\tilde \zeta_i^2 ] - \barEp[ v_i^2\zeta_i^2 ]| \leq  \delta_n$, and $\barEp[|\tilde v_i|^q+|\tilde \zeta_i|^q] \leq C$ for some $q>4$. Moreover, $\max_{i\leq n} \| x_{i}\|^2_\infty s n^{-1/2+2/q}  \leq \delta_n$ wp $1-\Delta_n$.}

\begin{remark}  The approximate sparsity (iii) and the growth condition (iv) are the main conditions for establishing the key inferential result.  We present a number of primitive examples to show that these conditions contain standard models used in empirical research as well as more flexible models.  Condition (iv) requires that the size $\hat s_3$ of the amelioration set $\hat I_3$ should not be substantially larger than the size of the set of variables selected by the Lasso method. Simply put, if we decide to include controls in addition to those selected by Lasso,  the total number of additions should not dominate the number of controls selected by Lasso.  This and other conditions will ensure that the total number $\hat s$ of controls obeys $ \hat s \lesssim_P s$, and we also require that $s^2 \log^2 (p\vee n)/n \to 0$. This condition can be relaxed using the sample-splitting method of \citen{FanGuoHao2011}, which is done in the Supplementary Appendix.  Condition (v) is simply a set of sufficient conditions for consistent estimation of the variance of the double selection estimator. If the regressors are uniformly bounded and the approximation errors are going to zero a.s., it is implied by other conditions stated below; and it can also be demonstrated under other sorts of more primitive conditions. \qed
\end{remark}

The next condition concerns the  behavior of the Gram matrix $\En [x_ix_i']$.
Whenever $p>n$, the empirical Gram matrix $\En[x_ix_i']$ does not
have full rank and in principle is not well-behaved. However, we only need good behavior of smaller submatrices. Define
the minimal and maximal $m$-sparse eigenvalue of a semi-definite matrix $M$ as
\begin{equation}\label{Def:RSE1}
\semin{m}[M] : = \min_{1\leq \|\delta \|_{0} \leqslant m} \frac{  \delta 'M \delta  }{\|\delta\|^2} \ \ \mbox{and} \ \ \hfill
 \semax{m}[M] : = \max_{1\leq \|\delta \|_{0} \leqslant m } \frac{ \delta 'M \delta }{\|\delta\|^2}.
\end{equation}
To assume that $\semin{m}[\En [x_ix_i']] >0$ requires that all empirical Gram submatrices formed by any $m$ components of $x_i$  are positive definite. We shall employ the following condition as a sufficient condition for our results.

\textbf{Condition SE ($\Pr$).} \textit{ There is an absolute sequence of constants  $\ell_n \to \infty$ such that
the maximal and minimal $\ell_n s$-sparse eigenvalues are bounded from below and away from zero, namely with probability at least $1-\Delta_n$,
$$\kappa' \leq \semin{\ell_n s}[\En [x_ix_i']] \leq \semax{\ell_n s}[\En [x_ix_i']] \leq \kappa'',$$
where $0< \kappa' <  \kappa'' < \infty$ are absolute constants.
}
\begin{remark}
It is well-known that Condition SE is quite plausible for many designs of interest. For instance, Condition SE holds if 
\begin{itemize}
\item[(a)] $ x_i$, $i = 1,\ldots,n$, are i.i.d. zero-mean sub-Gaussian random vectors that have population Gram matrix $\Ep[ x_i  x_i']$ with  minimal and maximal $s\log n$-sparse eigenvalues bounded away from zero and from above by absolute constants where $s(\log n )(\log p)/n \leq \delta_n \to 0$;
\item[(b)] $ x_i$, $i=1,\ldots,n$, are i.i.d. bounded zero-mean random vectors with $\| x_i\|_\infty \leq K_n$ a.s. that have population Gram  matrix $\Ep[ x_i  x_i']$ with minimal and maximal $s\log n$-sparse eigenvalues bounded from above and away from zero by absolute constants where $K_n^2s(\log^3 n)\{\log(p\vee n)\} /n \leq \delta_n \to 0$.
\end{itemize}
The claim (a) holds by Theorem 3.2 in \citen{RudelsonZhou2011} (see also \citen{Zhou2009a} and \citen{Baraniuketal2008}) and claim (b) holds by Lemma 1 in \citen{BC-PostLASSO}  or by Theorem 1.8 \citen{RudelsonZhou2011}. Recall that a standard assumption in econometric research is to assume that the population Gram matrix $\Ep[x_i x_i']$ has eigenvalues bounded from above and away from zero, see e.g. \citen{newey:series}. The conditions above allow for this and more general behavior, requiring only that the $s \log n$ sparse eigenvalues of the population Gram matrix $\Ep[x_i x_i']$ are bounded from below and from above. 
\qed \end{remark}

The next condition imposes moment conditions on the structural errors and regressors.

\textbf{Condition SM ($\Pr$)}.  \textit{ There are absolute constants
$0< c< C < \infty$ and $4< q < \infty$ such that for  $(\tilde y_i, \epsilon_i) = (y_i, \zeta_i) $ and $(\tilde y_i, \epsilon_i) = (d_i, v_i)$ the following conditions hold:
\begin{itemize}
\item[(i)]  $\displaystyle \barEp [|d_i|^q] \leq C, \ \  c \leq \Ep[\zeta_i^2\mid x_i, v_i]\leq C   \ \mbox{and}  \ c\leq \Ep[v_i^2\mid x_i]\leq C \  \mbox{a.s. }  1 \leq i \leq n$,
    \item[(ii)] $\displaystyle \barEp[|\epsilon_i|^q]+\barEp [\tilde y_i^2] + \max_{1\leq j\leq p}  \{ \barEp[x_{ij}^2{\tilde y}_i^2]+\barEp[|x_{ij}^3 \epsilon_i^3|]+ 1/\barEp[ x_{ij}^2] \} \leq C$,
        \item[(iii)]$ \displaystyle \log^3 p / n \leq \delta_n$, 
                \item[(iv)]$ \displaystyle \max_{1\leq j\leq p}  \{ |(\En-\barEp)[ x_{ij}^2\epsilon_i^2]|+|(\En-\barEp)[x_{ij}^2{\tilde y}_i^2]|\} + \max_{1\leq i\leq n}\| x_i\|_\infty^2 \frac{s\log(n\vee p)}{n} \leq \delta_n \text{ wp } 1-\Delta_n.$
\end{itemize}}
These conditions, which are rather mild, 
ensure good model selection performance of feasible Lasso applied to equations (\ref{eq: RPL1}) and (\ref{eq: RPL2}).  These conditions also allow us to invoke moderate deviation theorems for self-normalized sums from \citen{jing:etal} to bound some important error components.

\subsection{The Main Result} The following is the main result of this paper. It shows that
the post-double selection estimator is root-$n$ consistent and asymptotically normal. Under homoscedasticity this estimator achieves
the semi-parametric efficiency bound.  The result also verifies that plug-in estimates of the standard errors are consistent.

\begin{theorem}[Estimation and Inference on Treatment Effects]\label{theorem:inference}  Let $\{\Pr_n\}$ be a sequence of data-generating processes.  Assume conditions ASTE ($\Pr$),  SM  ($\Pr$), and SE  ($\Pr$)  hold for $\Pr = \Pr_n$ for each $n$. Then, the post-double-Lasso estimator $\check \alpha$, constructed in the previous section,  obeys as $n \to \infty$
$$
\sigma_n^{-1} \sqrt{n} (\check \alpha - \alpha_0) \rightsquigarrow N(0,1),
$$
where $\sigma^2_n=  [\barEp v_i^2]^{-1}\barEp[ v_i^2\zeta_i^2] [\barEp v_i^2]^{-1}$.  Moreover, the result continues to apply if $\sigma^2_n$ is replaced by $\hat \sigma^2_n = [\En \hat v_i^2]^{-1}\En[\hat v_i^2\hat \zeta_i^2][\En \hat v_i^2]^{-1}$, for $\hat \zeta_i := [y_i - d_i\check \alpha - x_i'\check \beta]\{n/(n - \hat s-1)\}^{1/2}$
and $\hat v_i:=d_i - x_i'\hat\beta$, $i=1,\ldots,n$ where $\hat \beta \in \arg\min_\beta \{\En[(d_i-x_i'\beta)^2]:\beta_j=0, \forall j\notin \widehat I\}$.
\end{theorem}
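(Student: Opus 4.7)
The plan is to reduce $\sqrt n(\check\alpha-\alpha_0)$ to the average $n^{-1/2}\sum v_i\zeta_i$ via Frisch--Waugh--Lovell, then apply a Lyapunov CLT.  Let $\tilde\beta\in\RR^p$ denote the OLS coefficient from regressing $d_i$ on $x_i$ with support restricted to $\hat I$, and set $\hat v_i:=d_i-x_i'\tilde\beta$; FWL gives
\begin{equation*}
\sqrt n(\check\alpha-\alpha_0)=\frac{n^{-1/2}\sum_{i=1}^n\hat v_i\bigl(g(z_i)+\zeta_i\bigr)}{\En[\hat v_i^2]}.
\end{equation*}
The program is to show that $\En[\hat v_i^2]-\barEp[v_i^2]=o_P(1)$ and that $n^{-1/2}\sum\hat v_i(g(z_i)+\zeta_i)=n^{-1/2}\sum v_i\zeta_i+o_P(1)$; the normal limit in (\ref{normal}) then follows from a Lyapunov CLT for the i.n.i.d.\ sum $n^{-1/2}\sum v_i\zeta_i$, whose target variance $\barEp[v_i^2\zeta_i^2]$ is bounded away from zero by SM(i) and for which $\barEp[|v_i\zeta_i|^{q/2}]\leq C$ for some $q/2>2$ by SM(ii) and Cauchy--Schwarz.

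First I would control the selected model.  Feasible-Lasso theory from \citen{BellChenChernHans:nonGauss} applied to equations (\ref{eq: RPL1}) and (\ref{eq: RPL2}) under Conditions ASTE, SM, SE yields $\hat s_1\vee\hat s_2\lesssim_P s$, hence $\hat s\lesssim_P s$ by ASTE(iv), together with $\ell_2$-rates $\|\tilde\beta-\beta_{m0}\|+\|\tilde\beta_y-\bar\beta_0\|\lesssim_P\sqrt{s\log(p\vee n)/n}$, where $\tilde\beta_y$ is the analogous OLS refit for $y$ on $x_{i,\hat I}$, and $L^2(\En)$-prediction rates of the same order.  Since $\hat I\supseteq\hat I_1\cup\hat I_2$ and both Lasso estimators are supported inside $\hat I$, the corresponding coordinates off $\hat I$ are small: $\|\bar\beta_{0,\hat I^c}\|+\|\beta_{m0,\hat I^c}\|\lesssim_P\sqrt{s\log(p\vee n)/n}$, and hence $\|\beta_{g0,\hat I^c}\|\lesssim_P\sqrt{s\log(p\vee n)/n}$ by $\beta_{g0}=\bar\beta_0-\alpha_0\beta_{m0}$ and $|\alpha_0|\leq C$.

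Next I linearize the numerator.  Write $\hat v_i=v_i+u_i$ with $u_i:=r_{mi}+x_i'(\beta_{m0}-\tilde\beta)$ and take $\theta:=\beta_{g0,\hat I}$, which is supported on $\hat I$; since $\hat v_i\perp x_{i,\hat I}$,
\begin{equation*}
n^{-1/2}\sum\hat v_i\bigl(g(z_i)+\zeta_i\bigr)=n^{-1/2}\sum v_i\zeta_i+\mathrm{I}+\mathrm{II}+\mathrm{III},
\end{equation*}
where $\mathrm{I}=n^{-1/2}\sum v_i(x_i'\beta_{g0,\hat I^c}+r_{gi})$, $\mathrm{II}=n^{-1/2}\sum u_i\zeta_i$, and $\mathrm{III}=n^{-1/2}\sum u_i(x_i'\beta_{g0,\hat I^c}+r_{gi})$.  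I would bound each by $o_P(1)$: the non-approximation pieces of $\mathrm{I}$ and $\mathrm{II}$ are of the form $(b^*)'\sqrt n\En[x_i\xi_i]$ with $b^*\in\{\beta_{g0,\hat I^c},\,\beta_{m0}-\tilde\beta\}$ satisfying $\|b^*\|_1\leq\sqrt{\|b^*\|_0}\|b^*\|_2\lesssim_P s\sqrt{\log(p\vee n)/n}$ and $\xi_i\in\{v_i,\zeta_i\}$; Condition SM combined with the self-normalized moderate-deviation theorem of \citen{jing:etal} gives $\|\sqrt n\En[x_i\xi_i]\|_\infty=O_P(\sqrt{\log(p\vee n)})$, so these pieces are $O_P(s\log(p\vee n)/\sqrt n)=o_P(1)$ by ASTE(iv).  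The pure-approximation pieces $n^{-1/2}\sum v_ir_{gi}$ and $n^{-1/2}\sum r_{mi}\zeta_i$ are $O_P(\sqrt{s/n})$ by direct variance calculation under SM, and $\mathrm{III}=O_P(s\log(p\vee n)/n)$ by Cauchy--Schwarz against the $L^2(\En)$ rates.  The denominator decomposes as $\En[\hat v_i^2]=\En[v_i^2]+2\En[v_iu_i]+\En[u_i^2]$ and is handled analogously, converging to $\barEp[v_i^2]$.  Consistency of $\hat\sigma_n^2$ follows from the same machinery applied to $\En[\hat v_i^2\hat\zeta_i^2]$: using $\check\alpha-\alpha_0=O_P(1/\sqrt n)$, the $L^2(\En)$-prediction rates for $\hat v_i$ and $\hat\zeta_i$, Condition ASTE(v), and the moment bounds in SM(ii), one expands the square and bounds each cross term by Cauchy--Schwarz.

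I expect the main obstacle to be the cross-term analysis in the third step, because uniform control over the data-dependent $\hat I$ is what forces the growth condition $s^2\log^2(p\vee n)/n\to 0$, and this is precisely where the \emph{double} selection is essential.  Under single selection on only the $y$-equation, $\hat I$ would reduce to $\hat I_2$, which need not cover $\mathrm{supp}(\beta_{m0})$ (consider $\alpha_0$ small or zero), so $\|\beta_{m0}-\tilde\beta\|_1$ would not admit a sharp sparsity-driven bound and $\mathrm{II}=n^{-1/2}\sum u_i\zeta_i$ would generally fail to be $o_P(1)$; the first selection step on the $d$-equation is exactly what guarantees $\|\beta_{m0}-\tilde\beta\|_1=o_P(1/\sqrt{\log(p\vee n)})$ and closes the argument.
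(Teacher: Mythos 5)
Your proposal is correct and follows essentially the same route as the paper's proof: the Frisch--Waugh--Lovell reduction is the paper's Step 1 with the projection $\MX$, the terms $\mathrm{I}$, $\mathrm{II}$, $\mathrm{III}$ correspond to the paper's $i_b,i_c,i_d$ and $i_a$, the bounds rest on the same post-Lasso sparsity/rate lemma, the self-normalized moderate-deviation bound $\|\sqrt{n}\En[x_i\xi_i]\|_\infty\lesssim_P\sqrt{\log(p\vee n)}$, and the growth condition $s^2\log^2(p\vee n)/n\to 0$ for the cross term, with the Lyapunov CLT and variance-estimator expansion closing the argument exactly as in Steps 1 and 7. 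The only slip is cosmetic: the Cauchy--Schwarz bound on $\mathrm{III}$ should read $O_P(s\log(p\vee n)/\sqrt{n})$ rather than $O_P(s\log(p\vee n)/n)$, which is still $o_P(1)$ under ASTE(iv) and is evidently what you intended given your closing remark about the growth condition.
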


A consequence of this result is the following corollary.

\begin{corollary}[\textbf{Uniformly Valid Confidence Intervals}]  (i) Let $\mathbf{P}_n$ be the collection of all data-generating processes $\Pr$ for which conditions ASTE($\Pr$), SM ($\Pr$), and SE ($\Pr$) hold for given $n$. Let $c(1-\xi) = \Phi^{-1} (1-\xi/2)$. Then as $n \to \infty$, uniformly in $\Pr \in \mathbf{P}_n$
$$
\Pr \left ( \alpha_0 \in [ \check \alpha \pm c(1-\xi) \hat \sigma_n /\sqrt{n}]\right) \to 1- \xi .
$$
(ii) Let $\mathbf{P} = \cap_{n \geq n_0} \mathbf{P}_n$ be the collection of data-generating processes for which the conditions above hold for all $n \geq n_0$ for some $n_0$.  Then
as $n \to \infty$, uniformly in $\Pr \in \mathbf{P}$
$$
\Pr \left ( \alpha_0 \in [ \check \alpha \pm c(1-\xi) \hat \sigma_n /\sqrt{n}]\right) \to 1- \xi.
$$
\end{corollary}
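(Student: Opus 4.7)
The plan is to reduce part (i) to Theorem 1 via a standard subsequence argument, and then obtain part (ii) as a one-line corollary. First, I observe that Theorem 1 already delivers, for any sequence of dgps $\{\Pr_n\}$ with $\Pr_n \in \mathbf{P}_n$, the convergence $\hat\sigma_n^{-1}\sqrt{n}(\check\alpha - \alpha_0)\rightsquigarrow N(0,1)$ under $\Pr_n$ (using the studentized version, which the theorem also establishes). Since the standard normal c.d.f.\ is continuous at $\pm c(1-\xi)$, the Portmanteau lemma gives
\[
\Pr_n\bigl(\alpha_0 \in [\,\check\alpha \pm c(1-\xi)\hat\sigma_n/\sqrt{n}\,]\bigr) = \Pr_n\bigl(|\hat\sigma_n^{-1}\sqrt{n}(\check\alpha - \alpha_0)| \leq c(1-\xi)\bigr) \to 2\Phi(c(1-\xi))-1 = 1-\xi
\]
along any such sequence.

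For part (i), abbreviate $f_n(\Pr) := \bigl|\Pr(\alpha_0 \in [\check\alpha \pm c(1-\xi)\hat\sigma_n/\sqrt{n}]) - (1-\xi)\bigr|$ and suppose for contradiction that $\sup_{\Pr \in \mathbf{P}_n} f_n(\Pr) \not\to 0$. Then there exist $\eta>0$, a subsequence $\{n_k\}$, and distributions $\Pr_{n_k}^\star \in \mathbf{P}_{n_k}$ with $f_{n_k}(\Pr_{n_k}^\star)>\eta$ for every $k$. For each $n\notin\{n_k\}$ pick any $\tilde\Pr_n \in \mathbf{P}_n$, and for $n=n_k$ set $\tilde\Pr_n=\Pr_{n_k}^\star$. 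The so-constructed sequence $\{\tilde\Pr_n\}$ satisfies the hypotheses of Theorem 1, so by the reduction above $f_n(\tilde\Pr_n)\to 0$, contradicting $f_{n_k}(\tilde\Pr_{n_k})>\eta$ along $n_k$.

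Part (ii) is then immediate: since $\mathbf{P} \subset \mathbf{P}_n$ for every $n \geq n_0$, one has $\sup_{\Pr\in\mathbf{P}} f_n(\Pr) \leq \sup_{\Pr\in\mathbf{P}_n} f_n(\Pr)$, and the right-hand side tends to zero by part (i). The only conceptual content beyond Theorem 1 is the passage from along-sequence convergence to uniformity, which is automatic because the constants and sequences $c, C, q, \ell_n, \delta_n, \Delta_n$ in conditions ASTE, SM, SE are absolute (common to every $\Pr\in\mathbf{P}_n$). That uniform-in-$\Pr$ structure is precisely what makes Theorem 1 strong enough to drive the subsequence argument, and constitutes the main (already handled) obstacle; the remaining work here is routine.
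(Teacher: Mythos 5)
Your proposal is correct and follows essentially the same route as the paper: reduce to Theorem 1 along sequences $\Pr_n\in\mathbf{P}_n$ and then argue by contradiction via a subsequence, with part (ii) following from $\mathbf{P}\subseteq\mathbf{P}_n$. Your explicit step of completing the offending subsequence $\{\Pr_{n_k}^{\star}\}$ to a full sequence before invoking Theorem 1 is a small tidying of a detail the paper leaves implicit, but the argument is the same.
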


By exploiting both equations (\ref{eq: appPL1}) and (\ref{eq: appPL2}) for model selection, the post-double-selection method creates the necessary adaptivity that makes it robust to imperfect model selection.  Robustness of the post-double selection method is reflected in the fact that Theorem \ref{theorem:inference} permits the data-generating process to change with $n$.  Thus, the conclusions of the theorem are valid for a wide variety of sequences of data-generating processes which in turn define the regions $\mathbf{P}$ of uniform validity of the resulting confidence sets. These regions appear to be substantial, as we demonstrate via a sequence of theoretical and numerical examples in Section 5 and 6.  In contrast, the standard post-selection method based on (\ref{eq: appPL1}) generates non-robust confidence intervals. 

\begin{remark} Our approach to uniformity analysis is most similar to that  of \citen{romano:uniform}, Theorem 4. It  proceeds under triangular array asymptotics, with the sequence of dgps obeying certain constraints; then these results imply uniformity over sets of dgps that obey the constraints for all sample sizes.  This approach is also  similar to the classical central limit theorems for sample means under triangular arrays, and does not require the dgps to be parametrically (or otherwise tightly) specified, which then translates  into uniformity of confidence regions.  This  approach is somewhat different in spirit to the generic uniformity analysis suggested by \citen{andrews:cheng:guggen}.
\qed\end{remark}

\begin{remark} Uniformity holds over a large class of approximately sparse models, which cover conventional models used in series estimation of partially linear models as shown in Section 5. Of course, for every interesting class of models and any inference method, one could find an even bigger class of models where the uniformity does not apply. In particular, our models do not cover models with many small coefficients.  In the series case, a model with many small coefficients corresponds to a deviation from smoothness towards highly non-smooth functions, namely functions generated as realized paths of an approximate white noise process. The fact that our results do not cover such models motivates further research work on inference procedures that have robustness properties to deviations from the given class of models that are deemed important. In the simulations in Section 6, we consider incorporating the ridge fit along the other controls to be selected over using lasso to build extra robustness against ``many small coefficients" deviations away from approximately sparse models.  \qed \end{remark}

\subsection{Auxiliary Results on Model Selection via Lasso and Post-Lasso}\label{Sec:ResultsHDSM}

The post-double-selection estimator applies the least squares estimator to the union of variables selected for equations (\ref{eq: RPL1}) and (\ref{eq: RPL2}) via feasible Lasso. Therefore, the
model selection properties of feasible Lasso as well as properties
of least squares estimates for $m$ and $g$ based on the selected
model play an important role in the derivation of the main result.  The purpose of this section is to describe these properties.  The proof of Theorem 1 relies on these properties.

Note that each of the regression models (\ref{eq: RPL1})-(\ref{eq: RPL2}) obeys the following conditions.

\textbf{Condition ASM.}\textit{  Let $\{\Pr_n\}$ be a sequence of data-generating processes.  For each $n$, we have  data $\{(\tilde y_i,\tilde z_i,\tilde x_i=P(\tilde z_i))  :  1 \leq i \leq  n\}$  defined on $(\Omega, \mathcal{A}, \Pr_n) $ consisting of i.n.i.d vectors that obey the following approximately sparse regression model for each $n$:
 \begin{eqnarray*}
 &  & \tilde y_i = f(\tilde z_i) + \epsilon_i = \tilde x_i'\beta_0 + r_i + \epsilon_i,  \\
  & & \Ep[\epsilon_i\mid \tilde x_i ]=0,    \barEp[\epsilon_i^2] = \sigma^2,\\
  &  & \|\beta_0\|_0\leq s,\  \barEp[r_i^2]\lesssim  \sigma^2 s/n.
 \end{eqnarray*}
 }
Let  $\widehat T$ denote the model
selected by the feasible Lasso estimator $\hat\beta$:
$$\widehat T = \supp( \hat \beta ) = \{ j \in \{1,\ldots,p\} \ : \ |\hat\beta_j| > 0\},$$
The Post-Lasso estimator $\widetilde \beta$ is  is ordinary least squares applied to the data after removing the regressors that were not selected by the feasible Lasso: \begin{equation}\label{Def:TwoStep} \widetilde \beta \in \arg\min_{\beta \in \mathbb{R}^p} \ \En[(\tilde y_i-\tilde x_i'\beta)^2]\ \ :  \ \ \beta_j = 0 \text{ for each } j \notin \widehat T.
\end{equation}

The following  regularity conditions are imposed to deal with  non-Gaussian, heteroscedastic errors.


\textbf{Condition RF.}  \textit{In addition to ASTE, we have
\begin{itemize}
\item[(i)] $\log^{3} p / n \to 0 \ \text{ and } \ s \log (p\vee n) /n \to 0$,
\item[(ii)] $ \barEp[\tilde y_i^2] + \max_{1\leq j\leq p} \{\barEp[\tilde x_{ij}^2\tilde y_i^2]+\barEp[|\tilde x_{ij}^3 \epsilon_i^3|]+ 1/\barEp[\tilde x_{ij}^2\epsilon_i^2]\} \lesssim 1$,
\item[(iii)] $\displaystyle \max_{1\leq j\leq p} \{|(\En-\barEp)[\tilde x_{ij}^2\epsilon_i^2]|+|(\En-\barEp)[\tilde x_{ij}^2\tilde y_i^2]|\} + \max_{1\leq i\leq n}\|\tilde x_i\|_\infty^2 \frac{s\log(n\vee p)}{n} = o_P(1)$. \end{itemize}
}

The main auxiliary result that we use in proving the main result is as follows.

\begin{lemma}[Model Selection Properties of Lasso
and Properties of Post-Lasso]\label{corollary3:postrate} Let $\{\Pr_n\}$ be a sequence of data-generating processes.  Suppose that conditions ASM and RF hold, and that Condition SE $(\Pr_n)$ holds for $\En[\tilde x_i \tilde x_i']$. Consider a feasible Lasso estimator
with penalty level and loadings specified as in Section 3.3.

(i) Then the data-dependent model $\widehat T$ selected by a feasible Lasso estimator satisfies with probability approaching 1:
\begin{equation}\label{eq: sparsity bound}
\hat s = | \widehat T | \lesssim s
 \end{equation}
and
 \begin{equation}\label{eq: model selection bound}
\min_{\beta \in \Bbb{R}^p: \ \beta_j = 0 \  \forall j \not \in \widehat T} \sqrt{\En[ f(\tilde z_i) -  \tilde x_i'\beta]^2} \lesssim \sigma \sqrt{ \frac{ s \log (p \vee n)}{n} }.
 \end{equation}
(ii) The Post-Lasso estimator obeys
$$
\sqrt{\En[ f (\tilde z_i) -  \tilde x_i'\widetilde \beta]^2} \lesssim_P \sigma \sqrt{ \frac{ s \log (p \vee n)}{n} }.
$$
and
\begin{equation}
\|\widetilde \beta - \beta_0\| \lesssim_P \sqrt{\En[\{\tilde x_i'\widetilde \beta - \tilde x_i'\beta_0\}^2]} \lesssim_P \sigma \sqrt{ \frac{ s \log (p \vee n)}{n} }.
\end{equation}
\end{lemma}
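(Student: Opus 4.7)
I would derive Lemma~\ref{corollary3:postrate} by specializing the existing theory of the feasible Lasso and Post-Lasso under heteroscedastic, non-Gaussian errors (\citen{BellChenChernHans:nonGauss} for the Iterated Lasso, \citen{BCW-SqLASSO} for the Square-root Lasso, and \citen{BC-PostLASSO} for the Post-Lasso step) to the approximately sparse regression of Condition ASM. The argument decomposes naturally into three pieces: (a) a regularization event ensuring the penalty dominates the noise score; (b) a prediction-norm rate for $\hat\beta$ plus the sparsity bound $\hat s \lesssim s$, which together yield (i); and (c) the Post-Lasso analysis and the conversion from prediction norm to $\ell_2$-norm via sparse eigenvalues, which yields (ii).

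\textbf{Step 1 (regularization event).} The starting point is to show that, with probability tending to $1$,
$$\lambda/n \;\geq\; 2c\,\|\hat\Psi^{-1} \En[\tilde x_i \epsilon_i]\|_\infty$$
for some $c>1$. Since the penalty loadings are designed so that $\hat l_j = l_j(1+o_P(1))$ uniformly in $j$ with $l_j=\sqrt{\En[\tilde x_{ij}^2\epsilon_i^2]}$ (or its Square-root Lasso analogue), this reduces to controlling the self-normalized statistics
$$S_j \;:=\; \frac{|\sqrt{n}\En[\tilde x_{ij}\epsilon_i]|}{\sqrt{\En[\tilde x_{ij}^2\epsilon_i^2]}}$$
uniformly in $j\leq p$. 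Condition RF (in particular $\barEp[|\tilde x_{ij}^3\epsilon_i^3|]\lesssim 1$ and $\log^3 p/n \to 0$) enables the moderate-deviation theorem of \citen{jing:etal} for self-normalized sums, giving $\max_{j\leq p} S_j \leq \sqrt{2\log(2p/\gamma)}(1+o_P(1))$, which matches the chosen $\lambda/(2c\sqrt{n})$. A secondary task is verifying that the iterated loadings actually achieve $\hat l_j = l_j(1+o_P(1))$: I would bootstrap this from an initial conservative loading choice producing preliminary residuals, then plug those residuals into the loading formula and iterate, as sketched in Appendix~A.

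\textbf{Step 2 (prediction rate, sparsity bound, and (i)).} On the regularization event, the restricted-set argument of Bickel--Ritov--Tsybakov, translated to the restricted eigenvalue implied by the sparse-eigenvalue Condition SE, delivers
$$\sqrt{\En[(\tilde x_i'(\hat\beta - \beta_0))^2]} \;\lesssim\; \sigma \sqrt{s\log(p\vee n)/n}.$$
For the sparsity bound, the Lasso KKT conditions give $|\En[\tilde x_{ij}(\tilde y_i - \tilde x_i'\hat\beta)]| = \lambda \hat l_j/(2n)$ on $\widehat T$; combining this with Condition SE via the argument of \citen{BC-PostLASSO} yields $\hat s \lesssim s$. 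Evaluating the minimum in (\ref{eq: model selection bound}) at $\hat\beta$ itself and using the triangle inequality with the prediction-norm rate above and the approximation-error bound $\sqrt{\En[r_i^2]} \lesssim_P \sigma\sqrt{s/n}$ (following from $\barEp[r_i^2]\lesssim\sigma^2 s/n$ plus a standard concentration step) then delivers (\ref{eq: model selection bound}).

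\textbf{Step 3 (Part (ii) and main obstacle).} From part (i), $\widehat T$ has cardinality $\lesssim s$ and admits a coefficient vector supported on it with prediction error $\lesssim \sigma\sqrt{s\log(p\vee n)/n}$. The Post-Lasso argument of \citen{BC-PostLASSO} then bounds OLS on $\widehat T$ by this approximation error plus a noise term of order $\sigma\sqrt{|\widehat T|/n}$, which yields the first bound in (ii). Since $\supp(\tilde\beta - \beta_0) \subseteq \widehat T \cup \supp(\beta_0)$ has cardinality $\lesssim s$, Condition SE gives $\kappa' \|\tilde\beta-\beta_0\|^2 \leq \En[(\tilde x_i'(\tilde\beta-\beta_0))^2]$, and the $\ell_2$-rate follows. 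The main obstacle throughout is the non-Gaussian, heteroscedastic setting, which forbids direct Gaussian concentration in Step~1 and forces reliance on self-normalized moderate deviations together with data-dependent loadings whose consistency must itself be bootstrapped. A secondary delicacy is that the sparsity argument in Step~2 needs sparse eigenvalues bounded at the larger order $\ell_n s$ rather than just $s$, which is precisely why Condition SE is formulated with the slowly-diverging multiplier $\ell_n$.
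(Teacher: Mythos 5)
Your outline is correct and follows essentially the same route the paper relies on: the paper gives no self-contained proof of Lemma~1, instead attributing it to \citen{BellChenChernHans:nonGauss} (Iterated Lasso) and the Square-root Lasso references, which in turn build on \citen{BickelRitovTsybakov2009} for the rate analysis and \citen{BC-PostLASSO} for the sparsity and Post-Lasso analysis, and your three steps (score domination via the self-normalized moderate deviations of \citen{jing:etal} under Condition RF, restricted-eigenvalue prediction rate plus KKT-based sparsity bound, then OLS on $\widehat T$ and conversion to the $\ell_2$-norm via the $\ell_n s$-sparse eigenvalues of Condition SE) are exactly the structure of those cited arguments. Nothing in your sketch deviates from or conflicts with what the paper invokes.
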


Lemma \ref{corollary3:postrate} was derived in \citen{BellChenChernHans:nonGauss} for Iterated Lasso and by \citen{BCW-SqLASSO2} for Square-root Lasso.  These analyses build on the rate analysis of infeasible Lasso by \citen{BickelRitovTsybakov2009} and on sparsity analysis and rate analysis of Post-Lasso by \citen{BC-PostLASSO}. Lemma \ref{corollary3:postrate} shows that feasible Lasso methods select a model $\widehat T$ that provides a high-quality approximation to the regression function $f(\tilde z_i)$; i.e. they find a sparse model that can  approximate the function at the ``near-oracle" rate $\sqrt{s/n} \sqrt{\log (p\vee n)}$. If we knew the ``best" approximating model $ T= \supp(\beta_0)$, we could achieve the ``oracle" rate of $\sqrt{s/n}$.  Note that Lasso methods generally will not recover $T$ perfectly. Moreover, no method can recover $T$ perfectly in general, except under the restrictive condition that all non-zero coefficients in $\beta_0$ are bounded away from zero by a factor that exceeds estimation error.  We do not require this condition to hold in our results.  All that we need is that the selected model $\widehat T$ can approximate the regression function well
and that the size of the selected model, $\hat s = |\widehat T|$, is of the same stochastic order as $s = |T|$.  This condition holds in many cases in which some non-zero coefficients are close to zero.

The lemma above also shows that feasible Post-Lasso achieves the same near-oracle rate as feasible Lasso.  The coincidence in rates occurs despite the fact that feasible Lasso will in general fail to correctly select the best-approximating model $T$ as a subset of the variables selected; that is, $T \not \subseteq \widehat T$.  The intuition for this result is that any components of $T$ that feasible Lasso misses are  unlikely to be important; otherwise, (\ref{eq: model selection bound}) would be impossible.  This result was first derived in the context of median regression by \citen{BC-SparseQR} and extended to least squares in reference cited above.


\section{Generalization: Inference after Double Selection by a Generic  Selection Method}

The conditions provided so far are simply a set sufficient conditions
that are tied to the use of Lasso as the model selector.  The purpose
of this section is to prove that the main results apply to any other model selection method  that  is able to select a sparse model with good approximation properties. As in the case of Lasso, we
allow for imperfect model selection. Next we state a high-level condition that summarizes a sufficient condition on the performance of a model selection method that allows the post-double selection estimator to attain good inferential properties.

%
%


\textbf{Condition HLMS ($\Pr$)}.\textit{ A model selector provides
possibly data-dependent sets  $ \widehat I_1 \cup \widehat I_2 \subseteq \widehat I  \subset \{1,...,p\}$ of covariate names such that, with probability $1-\Delta_n$,   $|\widehat I| \leq C s$ and  $$\displaystyle \min_{\beta: \beta_j=0, j\not\in \widehat I_1} \sqrt{\En[ (m(z_i)-x_i'\beta)^2]} \leq \delta_n  n^{-1/4} \textrm{ and } \min_{\beta: \beta_j=0, j\not\in \widehat I_2} \sqrt{\En[ (g(z_i)-x_i'\beta)^2]} \leq \delta_n  n^{-1/4}.$$}

Condition HLMS requires that with high probability the selected models are sparse and generates a good approximation for the functions $g$ and $m$. Examples of  methods producing such models include the
Dantzig selector \cite{CandesTao2007}, feasible Dantzig selector \cite{gautier:tsybakov},
Bridge estimator \cite{HHS2008},
SCAD penalized least squares \cite{FanLi2001},
and thresholded Lasso \cite{BC-PostLASSO},
to name a few. We emphasize that, similarly to the previous arguments, we allow for imperfect model selection.

The following result establishes the inferential properties of a generic post-double-selection estimator.

\begin{theorem}[Estimation and Inference on Treatment Effects under High-Level Model Selection]\label{theorem:inferenceHLMS}
Let $\{\Pr_n\}$ be a sequence of data-generating processes
 and the model selection device be such that conditions ASTE ($\Pr$),  SM  ($\Pr$), SE  ($\Pr$), and  HLSM($\Pr$) hold for $\Pr = \Pr_n$  for each $n$.  Then the generic post-double-selection estimator
$\check \alpha$ based on $\widehat I$, as defined in (\ref{eq: define pds}), obeys
$$
([\barEp v_i^2]^{-1}\barEp[ v_i^2\zeta_i^2] [\barEp v_i^2]^{-1})^{-1/2} \sqrt{n} (\check \alpha - \alpha_0) \rightsquigarrow N(0,1).
$$
Moreover, the result continues to apply if $\barEp[v_i^2]$ and $\barEp[v_i^2\zeta_i^2]$ are replaced by  $\En[\hat v_i^2]$ and $\En[\hat v_i^2\hat \zeta_i^2]$ for $\hat \zeta_i := [y_i - d_i\check \alpha - x_i'\check \beta]\{n/(n - \hat s-1)\}^{1/2}$
and $\hat v_i:=d_i - x_i'\hat\beta$, $i=1,\ldots,n$ where $\hat \beta \in \arg\min_\beta \{\En[(d_i-x_i'\beta)^2]:\beta_j=0, \forall j\notin \widehat I\}$.
\end{theorem}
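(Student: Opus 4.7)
The plan is to mirror the architecture of the proof of Theorem~\ref{theorem:inference}, replacing every appeal to Lemma~\ref{corollary3:postrate} (which delivered sparsity and approximation properties specifically for feasible Lasso) with the directly postulated properties in Condition HLMS($\Pr$). The upshot is that the analysis no longer depends on how $\widehat I$ was produced, only on the two consequences of HLMS: (a) $|\widehat I|\lesssim s$ w.p.\ $1-\Delta_n$, and (b) there exist $\beta^{g}, \beta^{m}\in\RR^p$ with $\mathrm{supp}(\beta^g)\subseteq \widehat I_2\subseteq\widehat I$ and $\mathrm{supp}(\beta^m)\subseteq \widehat I_1\subseteq\widehat I$ such that the empirical $L^2$ approximation errors for $g$ and $m$ are both $O_P(\delta_n n^{-1/4})$.

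First I would apply Frisch-Waugh-Lovell. Letting $\check v_i=d_i-x_i'\hat\beta$ for $\hat\beta\in\arg\min\{\En[(d_i-x_i'\beta)^2]:\beta_j=0,\,j\notin\widehat I\}$, we have
\[
\check\alpha-\alpha_0 \;=\; \frac{\En[\check v_i\, g(z_i)]+\En[\check v_i\zeta_i]}{\En[\check v_i^{\,2}]}.
\]
Because $\check v_i$ is orthogonal (in the $\En$ inner product) to every vector supported on $\widehat I$, the numerator bias term satisfies $\En[\check v_i g(z_i)]=\En[\check v_i(g(z_i)-x_i'\beta^g)]$, so by Cauchy-Schwarz it is $O_P(\sqrt{\En[\check v_i^{\,2}]}\cdot \delta_n n^{-1/4})$. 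An entirely parallel projection argument, combined with Condition SE($\Pr$) restricted to support sets of size $\leq Cs$, yields $\sqrt{\En[(\check v_i-v_i)^2]}=O_P(\delta_n n^{-1/4})$. Using this with the moment conditions of SM($\Pr$), the denominator satisfies $\En[\check v_i^{\,2}]\to_P \barEp[v_i^2]>0$, and the bias contribution to $\sqrt{n}(\check\alpha-\alpha_0)$ is $O_P(\delta_n n^{1/4}\cdot n^{-1/4})=o_P(1)$.

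The crux of the proof is the noise term $\sqrt{n}\En[\check v_i\zeta_i]$. I would split it as
\[
\sqrt{n}\En[\check v_i\zeta_i]=\sqrt{n}\En[v_i\zeta_i]+\sqrt{n}\En[(m(z_i)-x_i'\tilde\beta)\zeta_i]+\sqrt{n}(\tilde\beta-\hat\beta)'\En[x_i\zeta_i],
\]
where $\tilde\beta$ is the empirically best $L^2$ approximation to $m(z_i)$ among vectors supported on $\widehat I$. The first summand converges in law to $N(0,\barEp[v_i^2\zeta_i^2])$ by a Lindeberg CLT under the moment bounds of SM($\Pr$)(i)-(ii). The third summand is controlled by $\|\tilde\beta-\hat\beta\|_1\le \sqrt{|\widehat I|}\,\|\tilde\beta-\hat\beta\|_2$ together with the self-normalized moderate deviation bound $\max_{j\le p}|\En[x_{ij}\zeta_i]|=O_P(\sqrt{\log p/n})$ from \citen{jing:etal}; sparse eigenvalue control of $\|\tilde\beta-\hat\beta\|_2=O_P(\sqrt{s/n})$ then yields a bound of order $\sqrt{s^2\log(p\vee n)/n}=o(1)$ by ASTE(iv). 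The middle summand is the delicate one, since $\tilde\beta$ depends on the random set $\widehat I$; I would handle it by taking a supremum over all subsets $S\subset\{1,\dots,p\}$ with $|S|\le Cs$ of the linearized form $\sqrt{n}\En[\rho_{i,S}\zeta_i]$, bounding it uniformly via a Bernstein-type maximal inequality together with the HLMS approximation rate $\delta_n n^{-1/4}$, Condition SE($\Pr$), and the moment bounds on $\zeta_i$. The main obstacle is exactly this uniform control; everything else is a calibration of rates.

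Finally, I would establish consistency of $\hat\sigma_n^2$ by writing $\hat v_i=v_i+(m(z_i)-x_i'\hat\beta)$ and $\hat\zeta_i-\zeta_i=-d_i(\check\alpha-\alpha_0)+(g(z_i)-x_i'\check\beta)$ up to the degrees-of-freedom rescaling, plugging these into $\En[\hat v_i^2\hat\zeta_i^2]$, and showing the cross terms are $o_P(1)$ by Cauchy-Schwarz, the $q>4$ moments of ASTE(v) and SM($\Pr$), and the approximation rates already established. ASTE(v) then upgrades $\En[\tilde v_i^2\tilde\zeta_i^2]\to_P \barEp[v_i^2\zeta_i^2]$, and $\En[\hat v_i^2]\to_P\barEp[v_i^2]$ follows from the denominator argument. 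Combining the three displays, Slutsky gives the stated asymptotic normality statement both with $\sigma_n^2$ and with $\hat\sigma_n^2$ substituted.
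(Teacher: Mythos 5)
Your high-level architecture is the right one and matches the paper's: the Frisch--Waugh decomposition $\check\alpha-\alpha_0=\En[\check v_i(g(z_i)+\zeta_i)]/\En[\check v_i^2]$ is exactly the paper's $ii^{-1}\cdot i$ with $i=D'\MX(g+\zeta)/\sqrt n$ and $ii=D'\MX D/n$, and the paper's proof of Theorem \ref{theorem:inferenceHLMS} is indeed Theorem \ref{theorem:inference}'s proof with every appeal to Lemma \ref{corollary3:postrate} replaced by Condition HLMS. However, two of your steps have genuine gaps. First, the bias term: bounding $\En[\check v_i(g(z_i)-x_i'\beta^g)]$ by Cauchy--Schwarz against $\sqrt{\En[\check v_i^2]}=O_P(1)$ yields only $O_P(\delta_n n^{-1/4})$, so after multiplying by $\sqrt n$ you get $O_P(\delta_n n^{1/4})$, which does \emph{not} vanish; the extra factor of $n^{-1/4}$ you insert in the final display is unjustified as written. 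The whole point of double selection is that this factor must come from pairing the \emph{two} approximation errors: writing $\check v=\MX m+\MX V$, the piece $m'\MX g/\sqrt n$ is bounded by $\sqrt n\,\|\MX m/\sqrt n\|\,\|\MX g/\sqrt n\|=\sqrt n\cdot o_P(n^{-1/4})\cdot o_P(n^{-1/4})=o_P(1)$ (the paper's $i_a$), while the remaining piece $V'\MX g/\sqrt n$ (the paper's $i_c$) cannot be handled by Cauchy--Schwarz at all, since $\|V\|_{2,n}\asymp 1$; it needs the decomposition $R_g'V/\sqrt n+(\tilde\beta_g(\hat I)-\beta_{g0})'X'V/\sqrt n$, with Chebyshev for the fixed sequence $R_g$ and the $\ell_1$--$\ell_\infty$ H\"older bound plus the self-normalized moderate-deviation bound $\|X'V/\sqrt n\|_\infty\lesssim_P\sqrt{\log(p\vee n)}$ for the second part. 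Your sketch never supplies either ingredient for the bias term, even though you deploy the analogous technique for the noise term.

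Second, your treatment of $\sqrt n\,\En[(m(z_i)-x_i'\tilde\beta)\zeta_i]$ by a supremum over all subsets $S$ with $|S|\leq Cs$ is the wrong route and would not go through under the stated conditions. HLMS controls the approximation error only at the realized $\hat I_1$, not uniformly over all sparse supports, so the supremum of $\sqrt n\,\En[\rho_{i,S}\zeta_i]$ over all $\binom{p}{Cs}$ sets is not controlled by $\delta_n n^{-1/4}$; and a Bernstein-type union bound over $e^{Cs\log(p/s)}$ sets requires exponential tails for $\zeta_i$, whereas Condition SM only gives $q>4$ moments. The paper avoids the uniformity problem entirely via the algebraic identity $m'\MX\zeta=R_m'\zeta-(\tilde\beta_m(\hat I)-\beta_{m0})'X'\zeta$: the first piece involves the nonrandom sequence $R_m$ and is $O_P(\sqrt{s/n})$ by Chebyshev, and the second is bounded by $\|\tilde\beta_m(\hat I)-\beta_{m0}\|_1\|X'\zeta/\sqrt n\|_\infty\lesssim_P\sqrt s\,\{o(n^{-1/4})+\sqrt{s/n}\}\sqrt{\log(p\vee n)}=o_P(1)$, where the $\ell_2$ bound on $\tilde\beta_m(\hat I)-\beta_{m0}$ comes from HLMS together with the sparse-eigenvalue condition SE (the paper's Step 4). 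You correctly flag this summand as "the main obstacle," but the obstacle is one the paper's argument dissolves rather than confronts, and your proposed confrontation fails under the paper's moment assumptions. Your variance-estimation outline and your handling of the third noise summand are essentially the paper's.
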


Theorem \ref{theorem:inferenceHLMS} can also be used to establish uniformly valid confidence intervals as shown is the following corollary.

\begin{corollary}[\textbf{Uniformly Valid Confidence Intervals}]  (i) Let $\mathbf{P}_n$ be the collection of all data-generating processes $\Pr$ for which conditions ASTE($\Pr$), SM ($\Pr$), SE ($\Pr$), and HLSM ($\Pr$) hold for given $n$.  Let $c(1-\xi) = \Phi^{-1} (1-\xi/2)$. Then as $n \to \infty$, uniformly in $\Pr \in \mathbf{P}_n$
$$
\Pr \left ( \alpha_0 \in [ \check \alpha \pm c(1-\xi) \hat \sigma_n /\sqrt{n}]\right) \to 1- \xi .
$$
(ii) Let $\mathbf{P} = \cap_{n \geq n_0} \mathbf{P}_n$ be the collection of data-generating processes for which the conditions above hold for all $n \geq n_0$ for some $n_0$.  Then
as $n \to \infty$, uniformly in $\Pr \in \mathbf{P}$
$$
\Pr \left ( \alpha_0 \in [ \check \alpha \pm c(1-\xi) \hat \sigma_n /\sqrt{n}]\right) \to 1- \xi.
$$
\end{corollary}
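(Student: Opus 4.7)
The plan is to derive uniformity from the triangular-array asymptotic normality in Theorem~\ref{theorem:inferenceHLMS} by a standard subsequence/contradiction argument. First I observe that part (ii) reduces to part (i): since $\mathbf{P} = \bigcap_{n \geq n_0}\mathbf{P}_n$, for every $n \geq n_0$ we have $\mathbf{P}\subseteq \mathbf{P}_n$, hence
\[
\sup_{\Pr \in \mathbf{P}}\Bigl|\Pr\bigl(\alpha_0 \in [\check\alpha \pm c(1-\xi)\hat\sigma_n/\sqrt{n}]\bigr)-(1-\xi)\Bigr|
\leq
\sup_{\Pr \in \mathbf{P}_n}\Bigl|\Pr\bigl(\alpha_0 \in [\check\alpha \pm c(1-\xi)\hat\sigma_n/\sqrt{n}]\bigr)-(1-\xi)\Bigr|,
\]
so any uniform statement over $\mathbf{P}_n$ propagates to one over $\mathbf{P}$. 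All the real work is in (i).

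For (i), I argue by contradiction. Suppose uniformity fails, i.e.\ there exist $\xi\in(0,1)$ and $\varepsilon>0$ and a subsequence $\{n_k\}$ together with $\Pr_{n_k}\in \mathbf{P}_{n_k}$ such that
\[
\Bigl|\Pr_{n_k}\bigl(\alpha_0 \in [\check\alpha \pm c(1-\xi)\hat\sigma_{n_k}/\sqrt{n_k}]\bigr)-(1-\xi)\Bigr|\;\geq\;\varepsilon
\qquad \text{for all } k.
\]
For every $n$ not in $\{n_k\}$, pick any $\Pr_n \in \mathbf{P}_n$ (the set is nonempty because, without loss of generality, we are assuming the problem is nontrivial; otherwise there is nothing to prove). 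This produces a full sequence $\{\Pr_n\}$ with $\Pr_n\in\mathbf{P}_n$ for every $n$, so by construction conditions ASTE$(\Pr_n)$, SM$(\Pr_n)$, SE$(\Pr_n)$, and HLSM$(\Pr_n)$ hold for $\Pr = \Pr_n$ at each $n$.

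Now I apply Theorem~\ref{theorem:inferenceHLMS} along this sequence. The theorem delivers
\[
\hat\sigma_n^{-1}\sqrt{n}\,(\check\alpha-\alpha_0)\;\rightsquigarrow\;N(0,1),
\]
with the studentization by the plug-in $\hat\sigma_n$ already handled in the second part of Theorem~\ref{theorem:inferenceHLMS}. The event $\{\alpha_0 \in [\check\alpha \pm c(1-\xi)\hat\sigma_n/\sqrt n]\}$ equals $\{|\hat\sigma_n^{-1}\sqrt n(\check\alpha-\alpha_0)|\leq c(1-\xi)\}$, so by the portmanteau theorem
\[
\Pr_n\bigl(\alpha_0 \in [\check\alpha \pm c(1-\xi)\hat\sigma_n/\sqrt{n}]\bigr)\;\longrightarrow\;2\Phi(c(1-\xi))-1\;=\;1-\xi.
\]
Restricting to the subsequence $\{n_k\}$ then contradicts the defining inequality with $\varepsilon$. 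Hence (i) holds, and (ii) follows from the reduction above.

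The only place where something nontrivial is at stake is the fact that Theorem~\ref{theorem:inferenceHLMS} was proved for arbitrary sequences of DGPs with the regularity conditions satisfied at each $n$, i.e.\ under triangular-array asymptotics; this is exactly what allows the subsequence argument to go through without any pointwise-vs-uniform gap. There is no genuine technical obstacle at this stage beyond being careful that the chosen $\Pr_n$ indeed satisfy the required conditions for all $n$, which is immediate from $\Pr_n\in\mathbf{P}_n$.
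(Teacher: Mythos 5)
Your proposal is correct and follows essentially the same route as the paper: the paper proves this corollary by the identical subsequence/contradiction argument (its proof of Corollary 1, to which the proof of Corollary 2 defers), invoking the triangular-array asymptotic normality of Theorem \ref{theorem:inferenceHLMS} along an arbitrary sequence $\Pr_n \in \mathbf{P}_n$ and deducing part (ii) from part (i) via $\mathbf{P} \subseteq \mathbf{P}_n$. Your explicit step of extending the offending subsequence to a full sequence is a mild (and welcome) tightening of the paper's phrasing, but it is not a different argument.
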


\section{Theoretical Examples}\label{Sec:Examples}

The purpose of this section is to give a sequence of examples -- progressing from simple to somewhat involved -- that highlight the range of the applicability and robustness of the proposed method.  In these examples, we specify primitive conditions which cover a broad range of applications including nonparametric models and high-dimensional parametric models. We emphasize that our main regularity conditions cover even more general models which combine various features of these examples such as models with both nonparametric and high-dimensional parametric components.

In all examples, the model is
\begin{equation}\label{model}
\begin{array}{ll}
 y_i = d_i \alpha_0 +  g(z_i) + \zeta_i, & \Ep[\zeta_i \mid z_i, v_i] = 0, \\
 d_i = m(z_i) + v_i, &  \Ep [v_i \mid z_i] = 0, \\
\end{array}
\end{equation}
however, the structure for $g$ and $m$ will vary across examples, and so will the assumptions on the error terms $\zeta_i$ and $v_i$.

We start out with a simple example, in which  the dimension $p$ of the regressors is fixed.  In practical terms this example approximates cases with  $p$ small compared to $n$. This simple example is important since standard post-single-selection methods fail even in this simple case.  Specifically, they produce confidence intervals that are \emph{not} valid uniformly in the underlying data-generating process; see \citen{leeb:potscher:pms}.  In contrast, the post-double-selection method produces confidence intervals that are valid uniformly in the underlying data-generating process.

\textbf{Example 1.} (Parametric Model with Fixed $p$.)
Consider  $(\Omega, \mathcal{A}, \Pr)$ as the probability space, on which we have $(y_i, z_i, d_i)$
as i.i.d. vectors for $i=1,...,n$ obeying the model (\ref{model}) with
\begin{equation}
\begin{array}{ll}
 g(z_i) =  & \sum_{j=1}^p \beta_{g0j}z_{ij} , \\
m(z_i) = & \sum_{j=1}^p \beta_{m0j}z_{ij}.   \\
\end{array}
\end{equation}
For estimation we use $x_i = (z_{ij}, j=1,...,p)'$. We assume that  there are some absolute constants $0< b < B<\infty$, $q_x \geq q > 4$,  with $4/q_x + 4/q < 1$,  such that
\begin{equation}
 \begin{array}{ll}
& \Ep[\|x_i\|^{ q_x}] \leq B, \ \  \|\alpha_0\| + \| \beta_{g0}\|+ \| \beta_{m0}\| \leq B, \ \  b \leq \lambda_{\text{min}} (\Ep[x_i x_i']),\\
&  b \leq \Ep[\zeta_i^2 \mid x_i, v_i], \ \ \Ep[|\zeta_i^q| \mid x_i, v_i] \leq B, \ \ b\leq \Ep[v_i^2 \mid x_i], \ \  \Ep[|v_i^q| \mid x_i] \leq B.

\end{array}
\end{equation}

Let $\mathbf{P}$ be the collection of all regression models
$\Pr$ that obey the conditions set forth above for all $n$ for the given constants $(p, b, B, q_x, q)$.  Then, as established in Appendix \ref{Sec:verify},  any $\Pr \in \mathbf{P}$ obeys Conditions ASTE ($\Pr$) with $s=p$,  SE ($\Pr$), and SM ($\Pr$) for all $n \geq n_0$, with the constants $n_0$ and $( \kappa', \kappa'', c, C)$ and sequences $\Delta_n$ and $\delta_n$ in those conditions depending
only on $(p, b, B, q_x, q)$. Therefore, the conclusions of Theorem 1  hold for any sequence $\Pr_n \in \mathbf{P}$, and the conclusions of Corollary 1 on the uniform validity of confidence intervals apply
uniformly in $\Pr \in \mathbf{P}$.  \qed

The next  examples are more substantial
and include infinite-dimensional models which we approximate
with linear functional forms with potentially
very many regressors, $p \gg n$.  The key to estimation
in these models is a smoothness condition which requires
regression coefficients to decay at some rates.  In series
estimation, this condition is often directly connected to smoothness
of the regression function.

Let $\aa$ and $A$ be positive constants. We shall say that a sequence of coefficients $$\theta = \{ \theta_j, j=1,2,...\}$$ is $\aa$-smooth with constant $A$ if
$$
| \theta_j| \leq A j^{-\aa},  \  j=1,2,... ,
$$
which will be denoted as $\theta \in S^{\aa}_A$. We shall say that a sequence of coefficients $\theta = \{ \theta_j, j=1,2,...\}$ is $\aa$-smooth with constant $A$ after $p$-rearrangement if
$$
| \theta_{(j)}| \leq A j^{-\aa},  \  j=1,2,..., p, \ \  | \theta_j| \leq A j^{-\aa},  \  j=p+1, p+2,...,
$$
which will be denoted as $\theta \in S^{\aa}_A(p)$, where
$\{| \theta_{(j)}|, j=1,...,p\}$ denotes the decreasing rearrangement of the numbers  $\{ | \theta_j|, j =1,...,p\}$.   Since $S^{\aa}_A \subset S^{\aa}_A(p)$, the second kind
of smoothness is strictly more general than the first kind.

Here we use the term ``smoothness" motivated by Fourier series analysis where smoothness of functions often translates into smoothness of  the Fourier coefficients in the sense that is stated above; see, e.g., \citen{kerk:picard}. 
For example, if a function $h: [0,1]^d \mapsto \Bbb{R}$
possesses $r>0$ continuous derivatives uniformly bounded by a constant $M$ and the terms $P_j$ are compactly supported
Daubechies wavelets, then  $h$ can be represented
as $h(z) = \sum_{j=1}^\infty P_j(z) \theta_{hj}$, with $|\theta_{hj}| \leq A j^{-r/d-1/2}$ for some constant $A$; see \citen{kerk:picard}.
We also note that the second kind of smoothness is considerably more general than the first since it allows relatively large coefficients to appear anywhere in the series of the first $p$ coefficients. In contrast, the first kind of smoothness only allows relatively large coefficients among the early terms in the series.  Lasso-type methods are specifically designed to deal with the generalized smoothness of the second kind and
perform equally well under both kinds of smoothness.   In the context
of series applications, smoothness of the second kind allows one to approximate functions that exhibit oscillatory phenomena or spikes, which are associated with ``high order" series terms.  An example
of this is  the wage function example given in \citen{BCH2011:InferenceGauss}.

Before we proceed to other examples we discuss a way
 to generate sparse approximations in infinite-dimensional examples. Consider, for example, a function $h$ that can be represented a.s. as $h(z_i) = \sum_{j=1}^{\infty}\theta_{hj}P_j(z_i) $ with coefficients $\theta_h \in S_A^\aa(p)$.  In this case we can construct sparse approximations by simply thresholding to zero all coefficients smaller than $1/\sqrt{n}$ and with indices $j \geq p$. This generates a sparsity index $ s \leq A^{\frac{1}{a}} n^{\frac{1}{2a}}$. The non-zero coefficient could be further reoptimized by using the least squares projection.  More formally,   given a sparsity index $s>0$, a target function $h(z_i)$, and terms  $x_i =( P_j(z_i) : j=1,\ldots,p)'\in \RR^p$, we let
\begin{equation}\label{DefBetah}\begin{array}{rl}
\beta_{h0} := & \displaystyle \arg\min_{\|\beta\|_0\leq s } \Ep[ (h(z_i)-x_i'\beta)^2],\end{array}\end{equation}
and define $x_i'\beta_{h0}$ as the best $s$-sparse approximation to $h(z_i)$.

\textbf{Example 2.} (Gaussian Model with Very Large $p$.) Consider  $(\Omega, \mathcal{A}, \Pr)$ as the probability space on which we have $(y_i, z_i, d_i)$
as i.i.d. vectors for $i=1,...,n$ obeying the model (\ref{model}) with
\begin{equation}\label{gauss}
\begin{array}{rl}
g(z_i) = & \sum_{j=1}^\infty \theta_{gj}z_{ij},  \\
m(z_i) = & \sum_{j=1}^\infty \theta_{mj}z_{ij}.
\end{array}
\end{equation}
Assume that the infinite dimensional vector $w_i = (z_i', \zeta_i, v_i)'$  is jointly Gaussian with  minimal and maximal eigenvalues of the matrix (operator) $\Ep[w_i w_i']$  bounded below by an absolute constant $\underline{\kappa}>0$ and above by an absolute constant $\overline{\kappa}< \infty$. 

The main assumption that guarantees approximate sparsity is the smoothness condition on the coefficients. Let $\aa>1$
and $0<A<\infty$ be some absolute constants.  We require that the coefficients of the expansions in (\ref{gauss}) are $\aa$-smooth with constant $A$ after $p$-rearrangement, namely
$$
\theta_m = (\theta_{mj}, j=1,2,...) \in S^{\aa}_A(p), \ \ \theta_g = (\theta_{gj}, j=1,2,...) \in S^{\aa}_A(p).
$$
For estimation purposes we shall use $x_i = (z_{ij}, j=1,...,p)',$
and assume that $\|\alpha_0\|\leq B$ and $p = p_n$ obeys
$$
n^{[(1-\aa)/\aa]+\chi}\log^2(p\vee n) \leq\bar\delta_n,  \ \  A^{1/\aa}n^{\frac{1}{2\aa}}  \leq p \bar\delta_n,  \ \text{ and } \  \log^3 p /n \leq \bar \delta_n, $$
 for some absolute sequence $\bar\delta_n \searrow 0$ and absolute constants $B$ and $\chi > 0$.

Let $\mathbf{P}_n$ be the collection of all dgp
$\Pr$  that obey the conditions set forth in this example for a given $n$ and for the given constants  $(\underline{\kappa}, \overline{\kappa}, \aa, A, B,\chi)$ and sequences $p=p_n$ and $\bar\delta_n$. Then, as established in Appendix \ref{Sec:verify},   any $\Pr \in \mathbf{P}_n$ obeys Conditions ASTE ($\Pr$) with $s=A^{1/\aa} n^{\frac{1}{2\aa}}$, SE ($\Pr$), and SM ($\Pr$) for all $n \geq n_0$, with constants $n_0$ and $( \kappa', \kappa'', c, C)$ and sequences $\Delta_n$ and $\delta_n$ in those conditions depending
only on  $(\underline{\kappa}, \bar\kappa, \aa, A, B,\chi)$, $p$, and $\bar\delta_n$.  Therefore, the conclusions of Theorem 1  hold for any sequence $\Pr_n \in \mathbf{P}_n$, and the conclusions of Corollary 1 on the uniform validity of confidence intervals apply
uniformly for any $\Pr \in \mathbf{P}_n$.  In particular, these conclusions apply uniformly in  $  \Pr \in \mathbf{P} = \cap_{n \geq n_0} \mathbf{P}_n$.  \qed

\textbf{Example 3.} (Series Model with Very Large $p$.) Consider  $(\Omega, \mathcal{A}, \Pr)$ as the probability space, on which we have $(y_i, z_i, d_i)$
as i.i.d. vectors for $i=1,..., n$ obeying the model:
\begin{equation}\label{series}
\begin{array}{rl}
g(z_i) = & \sum_{j=1}^\infty \theta_{gj}P_{j}(z_i),\\
m(z_i) = & \sum_{j=1}^\infty \theta_{mj}P_{j}(z_i),
\end{array}
\end{equation}
where $z_i$ has support $[0,1]^d$ with density bounded from below by constant $\underline{f}>0$ and
above by constant $\bar f$, and $\{ P_j, j =1,2,..\}$ is an orthonormal
basis on $L^2[0,1]^d$ with bounded elements, i.e. $ \max_{z \in [0,1]^d}|P_j(z)| \leq B$ for all $j =1,2,...$.
Here all constants are taken to be absolute.  Examples of such orthonormal bases include 
canonical trigonometric bases. 

Let $\aa>1$
and $0<A<\infty$ be some absolute constants.  We require that the coefficients of the expansions in (\ref{series}) are $\aa$-smooth with constant $A$ after $p$-rearrangement, namely
$$
\theta_m = (\theta_{mj}, j=1,2,...) \in S^{\aa}_A(p), \ \ \theta_g = (\theta_{gj}, j=1,2,...) \in S^{\aa}_A(p).
$$

For estimation purposes we shall use
$x_i = (P_j(z_{i}), j=1,...,p)',$ and assume that $p = p_n$ obeys
$$
n^{(1-\aa)/\aa}\log^2(p\vee n) \leq\bar\delta_n, \ \ A^{1/\aa}n^{\frac{1}{2\aa}}  \leq p \bar\delta_n  \ \text{ and } \  \log^3 p /n \leq  \bar\delta_n, $$
 for some sequence of absolute constants $\bar\delta_n \searrow 0$. We assume that  there are some absolute constants $b>0$, $B<\infty$, $q > 4$, with $(1-\aa)/\aa+ 4/q <0$, such that
\begin{equation}
 \begin{array}{ll}
& \ \  \|\alpha_0\| \leq B,   \ b \leq \Ep[\zeta_i^2 \mid x_i,v_i], \ \ \Ep[|\zeta_i^q| \mid x_i,v_i] \leq B, \ \ b \leq \Ep[v_i^2 \mid x_i], \ \ \Ep[|v_i^q| \mid x_i] \leq B.
\end{array}
\end{equation}

Let $\mathbf{P}_n$ be the collection of all regression models
$\Pr$  that obey the conditions set forth above for a given $n$. Then, as established in Appendix \ref{Sec:verify}, any $\Pr \in \mathbf{P}_n$ obeys Conditions ASTE ($\Pr$) with $s=A^{1/\aa} n^{\frac{1}{2\aa}}$,  SE ($\Pr$), and SM ($\Pr$) for all $n \geq n_0$, with absolute constants in those conditions depending
only on $(\underline{f}, \bar f, \aa, A, b, B, q)$ and $\bar\delta_n$.  Therefore, the conclusions of Theorem 1  hold for any sequence $\Pr_n \in \mathbf{P}_n$, and the conclusions of Corollary 1 on the uniform validity of confidence intervals apply
uniformly for any $\Pr \in \mathbf{P}_n$.  In particular, as a special case, the same conclusion
applies uniformly in  $  \Pr \in \mathbf{P} = \cap_{n \geq n_0} \mathbf{P}_n$.  \qed

\section{Monte-Carlo Examples}\label{Sec:MonteCarlo}

 In this section, we examine the finite-sample properties of the post- double-selection method through a series of simulation exercises and compare its performance to that the standard post-single-selection method.

 All of the simulation results are based on the structural model
\begin{equation}\label{ModelMCPLMy}
y_i = d_i'\alpha_0 +  x_i'\theta_g + \sigma_{y}(d_i,x_i) \zeta_i, \ \  \zeta_i \sim N(0,1)
\end{equation}
where $p = \dim(x_i) = 200$, the covariates $ x_i \sim N(0,\Sigma)$ with $\Sigma_{kj} = (0.5)^{|j-k|}$, $\alpha_0 = .5$, and
the sample size $n$ is set to $100$.  In each design, we generate
\begin{equation}\label{ModelMCPLMd}
d_i =  x_i'\theta_m  + \sigma_{d}(x_i) v_i, \ \  v_i \sim N(0,1)
\end{equation}
with E[$\zeta_i v_i] = 0 $.
Inference results for all designs are based on conventional t-tests with standard errors calculated using the heteroscedasticity consistent jackknife variance estimator discussed in \citen{mackinnon:white}. Another option would be to use the standard error estimator recently proposed in \citen{CJN:PLMStandardError}.  

We report results from three different dgp's.  In the first two dgp's, we set $\theta_{g,j} =  c_y\beta_{0,j}$ and $\theta_{m,j} = c_d\beta_{0,j}$ with $\beta_{0,j} = (1/j)^2$ for $j = 1,...,200$.  The first dgp, which we label ``Design 1,'' uses homoscedastic innovations with $\sigma_y = \sigma_d = 1$.  The second dgp, ``Design 2,'' is heteroscedastic with
$\sigma_{d,i} = \sqrt{\frac{(1+x_i'\beta_0)^2}{\En(1+x_i'\beta_0)^2}}$ and $\sigma_{y,i} = \sqrt{\frac{(1+\alpha_0 d_i + x_i'\beta_0)^2}{\En(1+\alpha_0 d_i+x_i'\beta_0)^2}}$.  The constants $c_y$ and $c_d$ are chosen to generate desired population values for the reduced form $R^2$'s, i.e. the $R^2$'s for equations (\ref{eq: RPL1}) and (\ref{eq: RPL2}).  For each equation, we choose $c_y$ and $c_d$ to generate $R^2 = 0, .2, .4, .6,$ and $.8$.  In the heteroscedastic design, we choose $c_y$ and $c_d$ based on $R^2$ as if (\ref{ModelMCPLMy}) and (\ref{ModelMCPLMd}) held with $v_i$ and $\zeta_i$ homoscedastic and label the results by $R^2$ as in Design 1.  In the third design (``Design 3''), we use a combination of deterministic and random coefficients.  For the deterministic coefficients, we set $\theta_{g,j} = c_y(1/j)^2$ for $j \le 5$ and $\theta_{m,j} = c_d (1/j)^2$ for $j \le 5$.  We then generate the remaining coefficients as iid draws from $(\theta_{g,j},\theta_{m,j})' \sim N(0_{2 \times 1},(1/p) I_2)$.  For each equation, we choose $c_y$ and $c_d$ to generate $R^2 = 0, .2, .4, .6,$ and $.8$ in the case that all of the random coefficients were exactly equal to 0 and label the results by $R^2$ as in Design 1.
We draw new $x$'s, $\zeta$'s, and $v$'s at every simulation replication, and we also generate new $\theta$'s at every simulation replication in Design 3.

We consider Designs 1 and 2 to be baseline designs.  These designs do not have exact sparse representations but have coefficients that decay quickly so that approximately sparse representations are available. Design 3 is meant to introduce a modest deviation from the approximately sparse model towards a model
with many small, uncorrelated coefficients.  Using this we shall document that our proposed procedure still performs reasonably well, although it could be improved by incorporation of a ridge fit as one of regressors over which selection occurs.  In a working paper version of this paper \citen{BCH2011:SuppMatTE}, we present results for 26 additional designs.  The results presented in this section are sufficient to illustrate the general patterns from the larger set of results.\footnote{ In particular, the post-double-Lasso performed very well across all simulations designs where approximate sparsity provides a reasonable description of the dgp.  Unsurprisingly, the performance deteriorates as one deviates from the smooth/approximately sparse case.  However, in no design was the post-double-Lasso outperformed by other feasible procedures.  In extensive initial simulations, we also found that Square-Root Lasso and Iterated Lasso performed very similarly and thus only report Lasso results. }

We report results for five different procedures.  Two of the procedures are infeasible benchmarks: Oracle and Double-Selection Oracle estimators, which use of knowledge of the true coefficient structures $\theta_g$ and $\theta_m$ and are thus unavailable in practice.  The Oracle estimator is the ordinary least squares of $y_i - x_i'\theta_g$ on $d_i$, and the Double-Selection Oracle is the ordinary least squares of $y - x_i'\theta_g$ on $d_i - x_i'\theta_m$.   The other procedures we consider are feasible.  In all of them, we rely on Lasso and set  $\lambda$ according to the algorithm outlined in Appendix A with $1-\conflvl = .95$.  One procedure is the standard post-single selection estimator -- the  Post-Lasso -- which applies Lasso to equation (\ref{ModelMCPLMy}) without penalizing $\alpha$, the coefficient on $d$, to select additional control variables from among $x$.  Estimates of $\alpha_0$ are then obtained by OLS regression of $y$ on $d$ and the set of additional controls selected in the Lasso step and inference using the Post-Lasso estimator proceeds using conventional heteroscedasticity robust OLS inference from this regression.  Post-Double-Selection or Post-Double-Lasso is the feasible procedure advocated in this paper.  We run Lasso of $y$ on $x$ to select a set of predictors for $y$ and run Lasso of $d$ on $x$ to select a set of predictors for $d$.  $\alpha_0$ is then estimated by running OLS regression of $y$ on $d$ and the union of the sets of regressors selected in the two Lasso runs, and inference is simply the usual heteroscedasticity robust OLS inference from this regression.  Post-Double-Selection $+$ Ridge is an \textit{ad hoc} variant of Post-Double-Selection in which we add the ridge fit from equation (\ref{ModelMCPLMd}) as an additional potential regressor that may be selected by Lasso.  The ridge fit is obtained with a single ridge penalty parameter that is chosen using 10-fold cross-validation.  This procedure is motivated by a desire to add further robustness in the case that many small coefficients are suspected.  Further exploration of procedures that perform well, both theoretically and in simulations, in the presence of many small coefficients is an interesting avenue for additional research.

We start by summarizing results in Table 1 for $(R^2_y,R^2_d) = (0,.2), (0,.8), (.8,.2),$ and $(.8,.8)$ where $R^2_y$ is the population $R^2$ from regressing $y$ on $x$ (Structure $R^2$) and $R^2_d$ is the population $R^2$ from regressing $d$ on $x$ (First Stage $R^2$).  We report root-mean-square-error (RMSE) for estimating $\alpha_0$ and size of 5\% level tests (Rej. Rate).  As should be the case, the Oracle and Double-Selection Oracle, which are reported to provide the performance of an infeasible benchmark, perform well relative to the feasible procedures across the three designs.  We do see that the feasible Post-Double-Selection procedures perform similarly to the Double-Selection Oracle without relying on \textit{ex ante} knowledge of the coefficients that go in to the control functions, $\theta_g$ and $\theta_m$.  On the other hand, the Post-Lasso procedure generally does not perform as well as Post-Double-Selection and is very sensitive to the value of $R^2_d$.  While Post-Lasso performs adequately when $R^2_d$ is small, its performance deteriorates quickly as $R^2_d$ increases.  This lack of robustness of traditional variable selection methods such as Lasso which were designed with forecasting, not inference about treatment effects, in mind is the chief motivation for our advocating the Post-Double-Selection procedure when trying to infer structural or treatment parameters.

We provide further details about the performance of the feasible estimators in Figures 1, 2, and 3 which plot size of 5\% level tests, bias, and standard deviation for the Post-Lasso, Double-Selection (DS), and Double-Selection Oracle (DS Oracle) estimators of the treatment effect across the full set of $R^2$ values considered.  Figure 1, 2, and 3 respectively report the results from Design 1, 2, and 3.  The figures are plotted with the same scale to aid comparability and for readability rejection frequencies for Post-Lasso were censored at .5.  Perhaps the most striking feature of the figures is the poor performance of the Post-Lasso estimator.  The Post-Lasso estimator performs poorly in terms of size of tests across many different $R^2$ combinations and can have an order of magnitude more bias than the corresponding Post-Double-Selection estimator.  The behavior of Post-Lasso is quite non-uniform across $R^2$ combinations, and Post-Lasso does not reliably control size distortions or bias except in the case where the controls are uncorrelated with the treatment (where First-Stage $R^2$ equals 0) and thus ignorable.  In contrast, the Post-Double-Selection estimator performs relatively well across the full range of $R^2$ combinations considered. The Post-Double-Selection estimator's performance is also quite similar to that of the infeasible Double-Selection Oracle across the majority of $R^2$ values considered.  Comparing across Figures 1 and 2, we see that size distortions for both the Post-Double-Selection estimator and the Double-Selection Oracle are somewhat larger in the presence of heteroscedasticity but that the basic patterns are more-or-less the same across the two figures.  Looking at Figure 3, we also see that the addition of small independent random coefficients results in somewhat larger size distortions for the Post-Double-Selection estimator than in the other homoscedastic design, Design 1, though the procedure still performs relatively well.

In the final figure, Figure 4, we compare the performance of the Post-Double-Selection procedure to the \textit{ad hoc} Post-Double-Selection procedure which selects among the original set of variables augmented with the ridge fit obtained from equation (\ref{ModelMCPLMd}).  We see that the addition of this variable does add robustness relative to Post-Double-Selection using only the raw controls in the sense of producing tests that tend to have size closer to the nominal level.  This additional robustness is a good feature,  though it comes at the cost of increased RMSE which is especially prominent for small values of the first-stage $R^2$.

The simulation results are favorable to the Post-Double-Selection estimator.  In the simulations, we see that the Post-Double-Selection procedure provides an estimator of a treatment effect in the presence of a large number of potential confounding variables that performs similarly to the infeasible estimator that knows the values of the coefficients on all of the confounding variables.  Overall, the simulation evidence supports our theoretical results and suggests that the proposed Post-Double-Selection procedure can be a useful tool to researchers doing structural estimation in the presence of many potential confounding variables. It also shows, as a contrast,  that  the standard  Post-Single-Selection procedure provides poor inference and therefore can not be a reliable tool to these researchers.

\section{Empirical Example: Estimating the Effect of Abortion on Crime}\label{Sec:Empirical}

In the preceding sections, we have provided results demonstrating how variable selection methods, focusing on the case of Lasso-based methods, can be used to estimate treatment effects in models in which we believe the  variable of interest is exogenous conditional on observables.  We further illustrate the use of these methods in this section by reexamining Donohue III and Levitt's \citeyear{levitt:abortion} study of the impact of abortion on crime rates.  In the following, we briefly review \citen{levitt:abortion} and then present estimates obtained using the methods developed in this paper.

\citen{levitt:abortion} discuss two key arguments for a causal channel relating abortion to crime.  The first is simply that more abortion among a cohort results in an otherwise smaller cohort and so crime 15 to 25 years later, when this cohort is in the period when its members are most at risk for committing crimes, will be otherwise lower given the smaller cohort size.  The second argument is that abortion gives women more control over the timing of their fertility allowing them to more easily assure that childbirth occurs at a time when a more favorable environment is available during a child's life.  For example, access to abortion may make it easier to ensure that a child is born at a time when the family environment is stable, the mother is more well-educated, or household income is stable.  This second channel would mean that more access to abortion could lead to lower crime rates even if fertility rates remained constant.

The basic problem in estimating the causal impact of abortion on crime is that state-level abortion rates are not randomly assigned, and it seems likely that there will be factors that are associated to both abortion rates and crime rates. It is clear that any association between the current abortion rate and the current crime rate is likely to be spurious.  However, even if one looks at say the relationship between the abortion rate 18 years in the past and the crime rate among current 18 year olds, the lack of random assignment makes establishing a causal link difficult without adequate controls.  An obvious confounding factor is the existence of persistent state-to-state differences in policies, attitudes, and demographics that are likely related to the overall state level abortion and crime rates.  It is also important to control flexibly for aggregate trends.  For example, it could be the case that national crime rates were falling over this period while national abortion rates were rising but that these trends were driven by completely different factors.  Without controlling for these trends, one would mistakenly associate the reduction in crime to the increase in abortion.  In addition to these overall differences across states and times, there are other time varying characteristics such as state-level income, policing, or drug-use to name a few that could be associated with current crime and past abortion.

To address these confounds, \citen{levitt:abortion} estimate a model for state-level crime rates running from 1985 to 1997 in which they condition on a number of these factors.  Their basic specification is
\begin{align}\label{LevittModel}
y_{cit} = \alpha a_{cit} + w_{it}'\beta + \delta_i + \gamma_t + \varepsilon_{it}
\end{align}
where $i$ indexes states, $t$ indexes times, $c \in \{\textnormal{violent, property, murder}\}$ indexes type of crime, $\delta_i$ are state-specific effects that control for any time-invariant state-specific characteristics, $\gamma_t$ are time-specific effects that control flexibly for any aggregate trends, $w_{it}$ are a set of control variables to control for time-varying confounding state-level factors, $a_{cit}$ is a measure of the abortion rate relevant for type of crime $c$,\footnote{This variable is constructed as weighted average of abortion rates where weights are determined by the fraction of the type of crime committed by various age groups.  For example, if 60\% of violent crime were committed by 18 year olds and 40\% were committed by 19 year olds in state $i$, the abortion rate for violent crime at time $t$ in state $i$ would be constructed as .6 times the abortion rate in state $i$ at time $t-18$ plus .4 times the abortion rate in state $i$ at time $t-19$.  See \citen{levitt:abortion} for further detail and exact construction methods.} and $y_{cit}$ is the crime-rate for crime type $c$.  \citen{levitt:abortion} use the log of lagged prisoners per capita, the log of lagged police per capita, the unemployment rate, per-capita income, the poverty rate, AFDC generosity at time $t - 15$, a dummy for concealed weapons law, and beer consumption per capita for $w_{it}$, the set of time-varying state-specific controls.  Tables IV and V in \citen{levitt:abortion} present baseline estimation results based on (\ref{LevittModel}) as well as results from different models which vary the sample and set of controls to show that the baseline estimates are robust to small deviations from (\ref{LevittModel}).  We refer the reader to the original paper for additional details, data definitions, and institutional background.

For our analysis, we take the argument that the abortion rates defined above may be taken as exogenous relative to crime rates once observables have been conditioned on from \citen{levitt:abortion} as given.  Given the seemingly obvious importance of controlling for state and time effects, we account for these in all models we estimate.  We choose to eliminate the state effects via differencing rather than including a full set of state dummies but include a full set of time dummies in every model.  Thus, we will estimate models of the form
\begin{align}\label{LassoLevittModel}
y_{cit}-y_{cit-1} = \alpha (a_{cit}-a_{cit-1}) + z_{it}'\kappa + \gamma_t + \eta_{it}.
\end{align}
We use the same state-level data as \citen{levitt:abortion} but delete Alaska, Hawaii, and Washington, D.C. which gives a sample with 48 cross-sectional observations and 12 time series observations for a total of 576 observations.  With these deletions, our baseline estimates using the same controls as in (\ref{LevittModel}) are quite similar to those reported in \citen{levitt:abortion}.  Baseline estimates from Table IV of \citen{levitt:abortion} and our baseline estimates based on the differenced version of (\ref{LevittModel}) are given in the first and second row of Table 2 respectively.

Our main point of departure from \citen{levitt:abortion} is that we allow for a much richer set $z_{it}$ than allowed for in $w_{it}$ in model (\ref{LevittModel}).  Our $z_{it}$ includes higher-order terms and interactions of the control variables defined above.  In addition, we put initial conditions and initial differences of $w_{it}$ and $a_{it}$ into our vector of controls $z_{it}$.  This addition allows for the possibility that there may be some feature of a state that is associated both with its growth rate in abortion and its growth rate in crime.  For example, having an initially high-levels of abortion could be associated with having high-growth rates in abortion and low growth rates in crime.  Failure to control for this factor could then lead to misattributing the effect of this initial factor, perhaps driven by policy or state-level demographics, to the effect of abortion.  Finally, we allow for more general trends by allowing for an aggregate quadratic trend in $z_{it}$ as well as interactions of this quadratic trend with control variables.
This gives us a set of 251 control variables to select among in addition to the 12 time effects that we include in every model.\footnote{The exact identities of the 251 potential controls is available upon request.  It consists of linear and quadratic terms of each continuous variable in $w_{it}$, interactions of every variable in $w_{it}$, initial levels and initial differences of $w_{it}$ and $a_{it}$, and interactions of these variables with a quadratic trend.}

Note that interpreting estimates of the effect of abortion from model (\ref{LevittModel}) as causal relies on the belief that there are no higher-order terms of the control variables, no interaction terms, and no additional excluded variables that are associated both to crime rates and the associated abortion rate.  Thus, controlling for a large set of variables as described above is desirable from the standpoint of making this belief more plausible.  At the same time, naively controlling lessens our ability to identify the effect of interest and thus tends to make estimates far less precise.  The effect of estimating the abortion effect conditioning on the full set of 251 potential controls described above is given in the third row of Table 2.  As expected, all coefficients are estimated very imprecisely.  Of course, very few researchers would consider using 251 controls with only 576 observations due to exactly this issue.

We are faced with a tradeoff between controlling for very few variables which may leave us wondering whether we have included sufficient controls for the exogeneity of the treatment and controlling for so many variables that we are essentially mechanically unable to learn about the effect of the treatment.  The variable selection methods developed in this paper offer one resolution to this tension.  The assumed sparse structure maintains that there is a small enough set of variables that one could potentially learn about the treatment but adds substantial flexibility to the usual case where a researcher considers only a few control variables by allowing this set to be found by the data from among a large set of controls.  Thus, the approach should complement the usual careful specification analysis by providing a researcher an efficient, data-driven way to search for a small set of influential confounds from among a sensibly chosen broad set of potential confounding variables.

In the abortion example, we use the post-double-selection estimator defined in Section \ref{Sec:DoubleSelection} for each of our dependent variables.  For violent crime, ten variables are selected in the abortion equation,\footnote{The selected variables are AFDC generosity squared, beer consumption squared, the initial poverty change, initial income, initial income squared, the initial change in prisoners per capita squared interacted with the trend, initial income interacted with the trend, the initial change in the abortion rate, the initial change in the abortion rate interacted with the trend, and the initial level of the abortion rate.} and one is selected in the crime equation.\footnote{The initial level of the abortion rate interacted with time is selected.}
For property crime, eight variables are selected in the abortion equation,\footnote{The selected variables are income, the initial poverty change, the initial change in prisoners per capita squared, the initial level of prisoners per capita, initial income, the initial change in the abortion rate, the initial change in the abortion rate interacted with the trend, and the initial level of the abortion rate.} and six are selected in the crime equation.\footnote{The six variables are the initial level of AFDF generosity, the initial level of income interacted with the trend and the trend squared, the initial level of income squared interacted with the trend and the trend squared, and the initial level of the abortion rate interacted with the trend.}
For murder, eight variables are selected in the abortion equation,\footnote{The selected variables are AFDC generosity, beer consumption squared, the change in beer consumption squared, the change in beer consumption squared times the trend and the trend squared, initial income times the trend, the initial change in the abortion rate interacted with the trend, and the initial level of the abortion rate.} and none were selected in the crime equation.

Estimates of the causal effect of abortion on crime obtained by searching for confounding factors among our set of 251 potential controls are given in the fourth row of Table 2.  Each of these estimates is obtained from the least squares regression of the crime rate on the abortion rate and the 11, 14, and eight controls selected by the double-post-Lasso procedure for violent crime, property crime, and murder respectively.  The estimates for the effect of abortion on violent crime and the effect of abortion on murder are quite imprecise, producing 95\% confidence intervals that encompass large positive and negative values.  The estimated effect for property crime is roughly in line with the previous estimates though it is no longer significant at the 5\% level but is significant at the 10\% level.  Note that the double-post-Lasso produces models that are not of vastly different size than the ``intuitive'' model (\ref{LevittModel}).  As a final check, we also report results that include all of the original variables from (\ref{LevittModel}) in the amelioration set in the fifth row of the table.  These results show that the conclusions made from using only the variable selection procedure do not qualitatively change when the variables used in the original \citen{levitt:abortion} are added to the equation.  For a quick benchmark relative to the simulation examples, we note that the $R^2$ obtained by regressing the crime rate on the selected variables are .0395,  .1185, and  .0044 for violent crime, property crime, and the murder rate respectively and that the $R^2$'s from regressing the abortion rate on the selected variables are .9447, .9013, and  .9144 for violent crime, property crime, and the murder rate respectively.  These values correspond to regions of the $R^2$ space considered in the simulation where the double selection procedure substantially outperformed simple Lasso procedures.

It is very interesting that one would draw qualitatively different conclusions from the estimates obtained using formal variable selection than from the estimates obtained using a small set of intuitively selected controls.  Looking at the set of selected control variables, we see that initial conditions and interactions with trends are selected across all dependent variables.  The selection of this set of variables suggests that there are initial factors which are associated with the change in the abortion rate.  We also see that we cannot precisely determine the effect of the abortion rate on crime rates once one accounts for initial conditions.  Of course, this does not mean that the effects of the abortion rate provided in the first two rows of Table 2 are not representative of the true causal effects.  It does, however, imply that this conclusion is strongly predicated on the belief that there are not other unobserved state-level factors that are correlated to both initial values of the controls and abortion rates, abortion rate changes, and crime rate changes.  Interestingly, a similar conclusion is given in \citen{FooteGoetzAbortion} based on an intuitive argument.

We believe that the example in this section illustrates how one may use modern variable selection techniques to complement causal analysis in economics.  In the abortion example, we are able to search among a large set of controls and transformations of variables when trying to estimate the effect of abortion on crime.  Considering a large set of controls makes the underlying assumption of exogeneity of the abortion rate conditional on observables more plausible, while the methods we develop allow us to produce an end-model which is of manageable dimension.  Interestingly, we see that one would draw quite different conclusions from the estimates obtained using formal variable selection.  Looking at the variables selected, we can also see that this change in interpretation is being driven by the variable selection method's selecting different variables, specifically initial values of the abortion rate and controls, than are usually considered.  Thus, it appears that the usual interpretation hinges on the prior belief that initial values should be excluded from the structural equation.

\section{Conclusion}

In this paper, we consider estimation of treatment effects or structural parameters in an environment where the treatment is believed to be exogenous conditional on observables.  We do not impose the conventional assumption that the identities of the relevant conditioning variables and the functional form with which they enter the model are known.  Rather, we assume that the researcher believes there is a relatively small number of important factors whose identities are unknown within a much larger known set of potential variables and transformations.  This sparsity assumption allows the researcher to estimate the desired treatment effect and infer a set of important variables upon which one needs to condition by using modern variable selection techniques without \textit{ex ante} knowledge of which are the important conditioning variables.  Since naive application of variable selection methods in this context may result in very poor properties for inferring the treatment effect of interest, we propose a ``double-selection'' estimator of the treatment effect, provide a formal demonstration of its properties for estimating the treatment effect, and provide its approximate distribution under technical regularity conditions and the assumed sparsity in the model.

In addition to the theoretical development, we illustrate the potential usefulness of our proposal through a number of simulation studies and an empirical example.   In Monte Carlo simulations, our procedure outperforms simple variable selection strategies for estimating the treatment effect across the designs considered and does relatively well compared to an infeasible estimator that uses the identities of the relevant conditioning variables.  We then apply our estimator to attempt to estimate the causal impact of abortion on crime following \citen{levitt:abortion}.  We find that our procedure selects a small number of conditioning variables.  After conditioning on these selected variables, one would draw qualitatively different inference about the effect of abortion on crime than would be drawn if one assumed that the correct set of conditioning variables was known and the same as those variables used in \citen{levitt:abortion}.   Taken together, the empirical and simulation examples demonstrate that the proposed method may provide a useful complement to other sorts of specification analysis done in applied research.

\appendix

\section{Iterated Estimation of Penalty Loadings}\label{Sec:EstSigma}

In the case of Lasso under heteroscedasticity, we must specify for the penalty  loadings (\ref{Def:LambdaLASSOboound}). Here we state
algorithms for estimating these loadings.

Let  $I_0$ be an initial set of regressors with bounded
 number of elements, including for example intercept.  Let $\bar\beta(I_0)$ be the least squares estimator of the coefficients on the covariates associated with $I_0$, and define $ \hat l_{j0} := \sqrt{\En[ x_{ij}^2(y_i-x_i'\bar\beta(I_0))^2]}.$


An algorithm for estimating the penalty loadings using Post-Lasso is as follows:
\begin{algorithm}[Estimation of Lasso loadings using Post-Lasso iterations] Set $\hat l_{j,0} := \hat l_{jI_{0}}$, $j=1,\ldots,p$.   Set $k = 0$, and specify a small constant $\nu \geq 0$ as a tolerance level and
a constant $K>1$ as an upper bound on the number of iterations. (1)  Compute the Post-Lasso estimator $\widetilde \beta$ based on the loadings $\hat l_{j,k}$. (2) For $\widehat s = \|\widetilde \beta\|_0 = |\widehat T|$ set  $l_{j,k+1} := \sqrt{\En[ x_{ij}^2(y_i-x_i'\widehat \beta)^2]}\sqrt{n/(n-\widehat s)}.$ (3) If $\max_{1\leq j\leq p}| \hat l_{j,k} - \hat l_{j,k+1}| \leqslant \nu$ or $k> K$, set the loadings to $\hat l_{j,k+1}$, $j=1,\ldots,p$ and stop; otherwise, set $k \leftarrow k+1 $ and go to (1).
\end{algorithm}

A similar algorithm can be defined for using with Post-Square-root Lasso instead of Post-Lasso.\footnote{The algorithms can also be modified in the obvious manner for Lasso or Square-root Lasso.}



\begin{algorithm}[Estimation of Square-root Lasso loadings using Post-Square-root Lasso iterations] Set $k = 0$, and specify a small constant $\nu \geq 0$ as a tolerance level and
a constant $K>1$ as an upper bound on the number of iterations.   (1)  Compute the Post-Square-root Lasso estimator $\widetilde \beta$ based on the loadings $\hat l_{j,k}$.
(2) Set $\hat l_{j,k+1} := \sqrt{\En[ x_{ij}^2(y_i-x_i'\widetilde \beta)^2]}/\sqrt{\En[(y_i-x_i'\widetilde \beta)^2]}.$ (3) If $\max_{1\leq j\leq p}|\hat l_{j,k} - \hat l_{j,k+1}| \leqslant \nu$ or $k> K$, set the loadings to $\hat l_{j,k+1}$, $j=1,\ldots,p$, and stop;
otherwise set $k \leftarrow k+1 $ and go to (1).
\end{algorithm}

\section{Proof of Theorem \ref{theorem:inference}}

The proof proceeds under given sequence of probability
measures $\{\Pr_n\}$, as $n \to \infty$.

Let
$ Y =[y_{1},...,y_{n}]'$, $X= [x_{1},...,x_{n}]'$, $D=[d_1,...,d_n]'$, $V= [v_1,...,v_n]'$,
$\zeta =[\zeta_1,...,\zeta_n]'$, $m = [m_1,...,m_n]'$, $R_m = [r_{m1},...,r_{mn}]'$,
$g = [g_1,...,g_n]'$, $R_g = [r_{g1},...,r_{gn}]'$,
and so on. For $A \subset \{1,...,p\}$,
let $X[A] = \{X_j, j \in A\}$,
where $\{X_j, j=1,...,p\}$ are the columns of $X$.  Let
$$\mathcal{P}_A  = X[A](X[A]'X[A])^{-}X[A]'$$ be the projection operator
sending vectors in $\Bbb{R}^n$ onto ${\rm span}[X[A]]$, and let $\mathcal{M}_A = {\rm I}_n - \mathcal{P}_A$ be the
projection onto the subspace that is orthogonal to  ${\rm span}[X[A]]$.  For a vector $Z \in \Bbb{R}^n$, let
$$
\tilde \beta_Z(A) := \arg\min_{b \in \Bbb{R}^p} \|Z- X'b\|^2: \ b_j = 0, \ \forall j \not \in A,
$$
be the coefficient of linear projection of $Z$ onto ${\rm span}[X[A]]$.   If $A = \varnothing$,
interpret $\mathcal{P}_A = 0_n$, and $\tilde \beta_Z = 0_p$.

Finally, denote $\semin{m} = \semin{m}[\En[x_ix_i']]$ and  $\semax{m} = \semax{m}[\En[x_ix_i']]$.

Step 1.(Main) Write
$
\check \alpha = \[D'\MX D/n\]^{-1}[D'\MX Y/n]
$
so that
$$
\sqrt{n}(\check \alpha - \alpha_0) = \[D'\MX D/n\]^{-1}[D'\MX (g + \zeta)/\sqrt{n}] =: ii^{-1} \cdot i.
$$
By Steps 2 and 3, $$ii = V'V/n + o_P(1) \text{ and } i = V'\zeta/\sqrt{n} + o_P(1).$$ Next note that
$V'V/n = \Ep[V'V/n] + o_P(1)$ by Chebyshev, and because $\Ep[V'V/n]$ is bounded away from zero and from above uniformly in $n$ by Condition SM, we have $ii^{-1} = \Ep[V'V/n]^{-1} + o_P(1)$.


By Condition SM $\sigma_n^2 = \barEp[v_i^2]^{-1}\barEp[\zeta_i^2v_i^2] \barEp[v_i^2]^{-1}$ is bounded away from zero
and from above, uniformly in $n$.
Hence
$$
Z_n = \sigma_n^{-1} \sqrt{n}(\check \alpha - \alpha_0)  = n^{-1/2} \sum_{i=1}^n  z_{i,n} + o_{P}(1),$$
\noindent where $z_{i,n} := \sigma_n^{-1}v_i\zeta_i $ are i.n.i.d. with mean zero. For $\delta>0$ such that $4 + 2 \delta \leq q$
$$
\barEp|z_{i,n}|^{2+\delta} \lesssim \barEp\left[ | v_i|^{2+\delta}|\zeta_i |^{2+\delta}  \right] \lesssim  \sqrt{\barEp |v_i|^{4+2\delta}} \sqrt{\barEp |\zeta_i|^{4 + 2\delta}} \lesssim 1,
$$
by Condition SM. This condition verifies the Lyapunov condition and thus application of the Lyapunov CLT for i.n.i.d. triangular arrays implies that $$Z_n \rightsquigarrow N(0,1).$$

Step 2. (Behavior of $i$.) Decompose, using $ D= m+ V$,
\begin{eqnarray*}
i = V'\zeta/\sqrt{n} + \underset{=:i_{a}}{m' \MX g/\sqrt{n}} + \underset{=:i_{b}}{m'\MX \zeta/\sqrt{n}} + \underset{=:i_{c}}{V'\MX g/\sqrt{n}} - \underset{=:i_{d}}{ V'\PX \zeta/\sqrt{n}}.
\end{eqnarray*}

First, by Step 5 and 6 below we have
$$
|i_{a}| = |m'\MX g/\sqrt{n}| \leq \sqrt{n}\|\MX g/\sqrt{n}\|\|\MX m/\sqrt{n}\| \lesssim_P \sqrt{ [s \log (p\vee n)]^2/n } = o(1),
$$ where the last bound follows from the assumed growth condition $s^2\log^2(p\vee n) = o(n)$.

Second, using that $m = X\beta_{m0} + R_m$ and $m'\MX\zeta = R_m'\zeta-(\tilde \beta_m(\hat I) - \beta_{m0})'X'\zeta$ , conclude
$$
|i_b| \leq |R_m'\zeta/\sqrt{n}| + |(\tilde \beta_m(\hat I) - \beta_{m0})'X'\zeta/\sqrt{n}| \lesssim_P \sqrt{ [s \log (p\vee n)]^2/n } = o_P(1).
$$
This follows since $$
|R_m'\zeta/\sqrt{n}| \lesssim_P  \sqrt{R_m'R_m/n} \lesssim_P \sqrt{s/n},
$$
holding by Chebyshev inequality  and Conditions SM and ASTE(iii), and
$$
|(\tilde \beta_m(\hat I) - \beta_{m0})'X'\zeta/\sqrt{n}| \leq \|\tilde \beta_m (\hat I) - \beta_{m0}\|_1 \|X'\zeta/\sqrt{n}\|_{\infty} \lesssim_P \sqrt{ [s^2 \log (p\vee n)]/n} \sqrt{\log (p\vee n)}.
$$
The latter bound follows by (a)  $$\|\tilde \beta_m (\hat I) - \beta_{m0}\|_1
\leq   \sqrt{\hat s + s} \|\tilde \beta_m (\hat I) - \beta_{m0}\| \lesssim_P
\sqrt{ [s^2 \log (p\vee n)]/n}$$ holding by Step 5 and by $\hat s \lesssim_P s$ implied by Lemma \ref{corollary3:postrate},  and (b)  by $$\|X'\zeta/\sqrt{n}\|_{\infty}\lesssim_P \sqrt{ \log (p\vee n) }$$
holding by Step 4 under Condition SM.

Third, using similar reasoning, decomposition $g = X\beta_{g0} + R_g$, and Steps 4 and 6, conclude
$$
|i_c| \leq |R_g'V/\sqrt{n}| + |(\tilde \beta_g(\hat I) - \beta_{g0})'X'V/\sqrt{n}| \lesssim_P   \sqrt{ [s \log (p\vee n)]^2/n } = o_P(1).
$$

Fourth, we have
$$
|i_d| \leq |\tilde \beta_V(\hat I)' X'\zeta/\sqrt{n}| \leq \| \tilde \beta_V(\hat I)\|_1 \|X'\zeta/\sqrt{n}\|_{\infty} \lesssim_P   \sqrt{ [s \log (p\vee n)]^2/n } = o_P(1),
$$
since by Step 4 below $\|X'\zeta/\sqrt{n}\|_{\infty}\lesssim_P \sqrt{ \log (p\vee n) }$, and
\begin{eqnarray*}
 \| \tilde \beta_V(\hat I)\|_1 & \leq &  \sqrt{\hat s} \|\tilde \beta_V(\hat I)\| \leq \sqrt{\hat s} \| (X[\hat I]'X[\hat I]/n)^{-1} X[\hat I]'V/n\| \\
& \leq &  \sqrt{\hat s} \phi_{\min}^{-1}(\hat s) \sqrt{\hat s}  \| X'V/\sqrt{n}\|_{\infty}/\sqrt{n} \lesssim_P s \sqrt{[\log (p\vee n)]/n}.
 \end{eqnarray*}
The latter bound follows  from $\hat s \lesssim_P s$, holding by Lemma \ref{corollary3:postrate}, so that  $ \phi^{-1}_{\min}(\hat s) \lesssim_P 1$ by Condition SE, and from $\|X'V/\sqrt{n}\|_{\infty}\lesssim_P \sqrt{ \log (p\vee n) }$ holding by Step 4.

Step 3. (Behavior of $ii$.) Decompose
$$
ii = (m+V)'\MX (m+V)/n = V'V/n + \underset{=:ii_a}{m' \MX m/n} +
\underset{=:ii_b}{ 2 m'\MX V/n } - \underset{=:ii_c}{V'\PX V/n}.
$$
Then $|ii_a| \lesssim_P [s \log (p\vee n)]/n= o_P(1)$ by Step 5,
$|ii_b| \lesssim_P [s \log (p\vee n)]/n= o_P(1)$ by reasoning similar to deriving
the bound for $|i_b|$, and $|ii_c| \lesssim_P [s \log (p\vee n)]/n= o_P(1)$
by reasoning similar to deriving the bound for $|i_d|$.

Step 4. (Auxiliary: Bounds on $\|X'\zeta/\sqrt{n}\|_{\infty}$ and $\|X'V/\sqrt{n}\|_{\infty}$) Here we show that
$$\text{(a)} \ \|X'\zeta/\sqrt{n}\|_{\infty}\lesssim_P \sqrt{ \log (p\vee n) } \text{ \  and \ }  \text{(b)} \|X'V/\sqrt{n}\|_{\infty}\lesssim_P \sqrt{ \log (p\vee n) }.$$
To show (a), we use Lemma \ref{Lemma:SNMD} stated in Appendix F on the tail bound for self-normalized deviations to deduce the bound. Indeed, we have that wp $\to 1$
for some $\ell_n \to \infty$ but so slowly that $1/\gamma = \ell_n \lesssim \log n$, with probability $1-o(1)$
\begin{equation}\label{eq: bound sup}
\max_{1 \leq j \leq p}  \left |\frac{ n^{-1/2}  \sum_{i=1}^n x_{ij}\zeta_i}
 {\sqrt{\En [x_{ij}^2 \zeta_i^2]}}  \right |  \leq \Phi^{-1}\left (1- \frac{1}{2\ell_np} \right)\lesssim \sqrt{ 2 \log (2\ell_np) }\lesssim \sqrt{ \log (p\vee n) }.
\end{equation}
By  Lemma \ref{Lemma:SNMD}  the first inequality in (\ref{eq: bound sup})  holds, provided that for all $n$ sufficiently large the following holds,
$$
 \Phi^{-1}\left (1- \frac{1}{2\ell_np} \right) \leq \frac{n^{1/6}}{\ell_n} \min_{1\leq j \leq p} M^2_{j}-1, \ \  M_j := \frac{\barEp[x_{ij}^2 \zeta_i^2]^{1/2}}{\barEp[|x_{ij}^3| |\zeta_i^3|]^{1/3}}.
  $$
 Since we can choose $\ell_n$ to grow as slowly as needed,
a sufficient condition for this are the conditions:
$$\log p = o(n^{1/3}) \text{ and }  \min_{1 \leq j \leq p}  M_j \gtrsim 1,$$
which both hold by Condition SM. Finally,
\begin{equation}\label{eq:bound sup 2}
\max_{1 \leq j \leq p} \En [x_{ij}^2 \zeta_i^2]  \lesssim_{\mathrm{P}} 1,
\end{equation}
by Condition SM.   Therefore (a) follows from the bounds (\ref{eq: bound sup}) and (\ref{eq:bound sup 2}).   Claim (b) follows similarly.

Step 5. (Auxiliary: Bound on $\|\MX m\|$ and related quantities.) This step shows that
$$
\text{(a)}   \  \| \MX m/\sqrt{n}\|  \lesssim_{P} \sqrt{ [s \log (p\vee n)]/n} \text{ and }  \text{(b)} \  \| \tilde \beta_m(\hat I) - \beta_{m0}\|  \lesssim_{P} \sqrt{ [s \log (p\vee n)]/n}.
$$
Observe that
\begin{eqnarray*}
\sqrt{ [s \log (p\vee n)]/n}  \underset{(1)}{\gtrsim_P} \| \MXd m/\sqrt{n}\|    \underset{(2)}{\gtrsim_P}    \| \MX m/\sqrt{n}\| \\
 \end{eqnarray*}
where inequality (1) holds  since  by Lemma 1 $\| \MXd m/\sqrt{n}\| \leq \|(X\tilde \beta_D(\hat I_1) - m)/\sqrt{n}\|
\lesssim_P \sqrt{ [s \log (p\vee n)]/n}$,  and (2) holds by $\hat I_1 \subseteq \hat I$ by construction. This shows claim (a).
To show claim (b) note that
$$
  \| \MX m/\sqrt{n}\|   \underset{(3)}{\gtrsim_P}   |\| X(\tilde \beta_m(\hat I) - \beta_{m0})/\sqrt{n}\| - \| R_m/\sqrt{n}\||
$$
where (3)  holds by the triangle
inequality.  Since $\|R_m/\sqrt{n}\| \lesssim_P \sqrt{s/n}$ by Chebyshev and Condition ASTE(iii), conclude
that 
\begin{eqnarray*}
\sqrt{ [s\log (p\vee n)]/n }  & \gtrsim_P&  \| X(\tilde \beta_m(\hat I) -
\beta_{m0})/\sqrt{n}\|  \\
 &\geq &  \sqrt{\semin{\hat s+s} } \| \tilde \beta_m(\hat I) - \beta_{m0}\|
\gtrsim_P \| \tilde \beta_m(\hat I) - \beta_{m0}\|,
 \end{eqnarray*}
since $\hat s \lesssim_P s$ by Lemma \ref{corollary3:postrate} so that $1/\semin{\hat s+s} \lesssim_P 1$
by condition SE.  This shows claim (b).

Step 6. (Auxiliary: Bound on $\|\MX g\|$ and related quantities.)
This step shows that
$$
\text{(a)}   \  \| \MX g/\sqrt{n}\|  \lesssim_{P} \sqrt{ [s \log (p\vee n)]/n} \text{ and }  \text{(b)} \  \| \tilde \beta_g(\hat I) - \beta_{g0}\|  \lesssim_{P} \sqrt{ [s \log (p\vee n)]/n}.
$$
Observe that
\begin{eqnarray*}
\sqrt{ [s \log (p\vee n)]/n} & \underset{(1)}{\gtrsim_P} &  \| \MXy (\alpha_0 m + g)/\sqrt{n}\| \\
  & \underset{(2)}{\gtrsim_P}&   \| \MX (\alpha_0 m + g)/\sqrt{n}\| \\
 &  \underset{(3)}{\gtrsim_P} & |\|  \MX g /\sqrt{n}\|- \|\MX \alpha_0 m /\sqrt{n}\||
\end{eqnarray*}
where inequality (1) holds  since by Lemma \ref{corollary3:postrate} $\| \MXy (\alpha_0 m + g)/\sqrt{n}\| \leq \|(X \tilde \beta_{Y_1}(\hat I_2) - \alpha_0 m - g)/\sqrt{n}\|
\lesssim_P \sqrt{ [s \log (p\vee n)]/n}$, (2) holds by $\hat I_2 \subseteq \hat I$, and (3) by the triangle
inequality.  Since $\|\alpha_0\|$ is bounded uniformly in $n$ by assumption,  by Step 5,
$\|\MX \alpha_0 m /\sqrt{n}\| \lesssim_P  \sqrt{ [s \log (p\vee n)]/n}$. Hence claim (a) follows by the triangle inequality:
$$
\sqrt{ [s \log (p\vee n)]/n}  \gtrsim_P \| \MX g /\sqrt{n}\|
$$
To show claim (b) we note that
$$
\| \MX g /\sqrt{n}\| \geq  |\| X(\tilde \beta_g(\hat I) - \beta_{g0})/\sqrt{n}\| - \| R_g/\sqrt{n}\||
$$
where $\| R_g/\sqrt{n}\| \lesssim_P \sqrt{s/n}$ by Condition ASTE(iii). Then conclude similarly to Step 5 that 
 \begin{eqnarray*}
\sqrt{ [s\log (p\vee n)]/n } & \gtrsim_P&  \| X(\tilde \beta_g(\hat I) -
\beta_{g0})/\sqrt{n}\|  \\
&\geq &  \sqrt{\semin{\hat s+s}} \| \tilde \beta_g(\hat I) - \beta_{g0}\|
\gtrsim_P \| \tilde \beta_g(\hat I) - \beta_{g0}\|.
 \end{eqnarray*}

Step 7. (Variance Estimation.)
Since $\hat s \lesssim_P s = o(n)$, $(n-\hat s - 1)/n = o_P(1)$,
and since $\barEp[ v_i^2 \zeta_i^2 ] $ and $\barEp[v_i^2]$
are bounded away from zero and from above uniformly in $n$ by
Condition SM, it suffices to show that
$$\En[ \hat v_i^2\hat \zeta_i^2] - \barEp[ v_i^2 \zeta_i^2 ] \to_P 0,  \  \    \En[ \hat v_i^2 ] - \barEp[v_i^2] \to_P 0,$$
The second relation was shown in Step 3, so it remains to show
the first relation.

Let $\tilde v_i = v_i + r_{mi}$ and $\tilde \zeta_i = \zeta_i + r_{gi}$. Recall that by Condition ASTE(v) we have $\barEp[ \tilde v_i^2\tilde \zeta_i^2 ] - \barEp[ v_i^2\zeta_i^2 ] \to 0$, and $\En[ \tilde v_i^2\tilde \zeta_i^2 ]  - \barEp[ \tilde v_i^2\tilde \zeta_i^2 ] \to_P 0$ by Vonbahr-Esseen's inequality in \citen{vonbahr:esseen} since $\barEp[ |\tilde v_i\tilde \zeta_i|^{2+\delta}]\leq (\barEp[ |\tilde v_i|^{4+2\delta}]\barEp[ |\tilde \zeta_i|^{4+2\delta}])^{1/2}$ is uniformly bounded for $4+2\delta\leq q$. Thus it suffices to show that $\En[ \hat v_i^2\hat \zeta_i^2] - \En[ \tilde v_i^2\tilde \zeta_i^2 ] \to_P 0$.

By the triangular inequality
$$| \En[ \hat v_i^2\hat \zeta_i^2 - \tilde v_i^2\tilde \zeta_i^2 ] | \leq \underset{=:iv } {| \En[ (\hat v_i^2 - \tilde v_i^2)\tilde \zeta_i^2 ] |} + \underset{=:iii }{| \En[  \hat v_i^2(\hat \zeta_i^2 - \tilde \zeta_i^2) ]|}.$$
Then, expanding $\hat \zeta_i^2 - \tilde \zeta_i^2 $ we have 
$$\begin{array}{rl}
iii & \leq 2\En[\{d_i(\alpha_0-\check\alpha)\}^2 \hat v_i^2]+2\En[\{x_i'(\check\beta - \beta_{g0})\}^2 \hat v_i^2]\\
& +|2\En[\tilde\zeta_id_i(\alpha_0-\check\alpha) \hat v_i^2]|+|2\En[\tilde\zeta_ix_i'(\check\beta - \beta_{g0})\hat v_i^2]|  \\ & =:  iii_a + iii_b + iii_c + iii_d =  o_P(1)\end{array}$$
where the last bound follows by the relations derived below.

First, we note
\begin{eqnarray}
 iii_a & \leq & 2 \max_{i\leq n} d_i^2|\alpha_0-\check\alpha|^2\En[\hat v_i^2] \lesssim_P  n^{(2/q) - 1} = o(1) \label{rel1} \\
iii_c & \leq &  2 \max_{i\leq n}\{|\tilde \zeta_i| |d_i|\} \En[\hat v_i^2] |\alpha_0-\check\alpha| \lesssim_P  n^{(2/q) - (1/2)}  = o(1) \label{rel2}
\end{eqnarray}
which holds by the following argument. Condition SM assumes that $\Ep[|d_i|^q]$ which in turn implies that $\Ep[\max_{i\leq n}d_i^2] \lesssim n^{2/q}$. Similarly Condition ASTE implies that $\Ep[\max_{i\leq n}\tilde \zeta_i^2]\lesssim n^{2/q}$ and $\Ep[\max_{i\leq n}\tilde v_i^2] \lesssim n^{2/q}$. Thus by Markov inequality
\begin{equation}\label{bound max}
\max_{i\leq n} |d_i| + |\tilde \zeta_i| + |\tilde v_i|\lesssim_P n^{1/q}.
 \end{equation}
Moreover,  $\En[\hat v_i^2] \lesssim_P 1$ and $|\check\alpha - \alpha_0|\lesssim_P n^{-1/2}$ by the previous steps. These bounds and $q>4$ imposed in Condition SM imply (\ref{rel1})-(\ref{rel2}).

Next we bound,
\begin{eqnarray}
iii_d & \leq & 2 \max_{i\leq n} |\tilde \zeta_i|  \max_{i \leq n}
|x_i'(\check\beta - \beta_{g0})| \En[\hat v_i^2]  \nonumber \\
  & \lesssim_P &
 n^{1/q}  \max_{i\leq n} \| x_{i}\|_\infty \sqrt{\frac{s}{\sqrt{n}} \frac{s\log (p\vee n)}{\sqrt{n}}} = o_P(1),  \label{rel4}
\end{eqnarray}
using (\ref{bound max}) and  that for  $\widehat T_g = \supp(\beta_{g0})\cup \widehat I$, we have
$$
 \max_{i\leq n}\{x_i'(\check\beta - \beta_{g0})\}^2 \leq \max_{i\leq n} \| x_{i \widehat T_g}\|^2  \| \check\beta - \beta_{g0}\|^2,
$$
where  $$ \max_{i\leq n} \| x_{i\widehat T_g}\|^2 \leq |\widehat T_g|\max_{i\leq n} \| x_{i}\|^2_\infty \lesssim_P s \max_{i\leq n} \| x_{i}\|^2_\infty$$ by the sparsity assumption in ASTE and the sparsity bound in Lemma \ref{corollary3:postrate}, and
since $\check\beta[\hat I] = (X[\hat I]'X[\hat I])^-X[\hat I]'(\zeta + g -(\check \alpha - \alpha_0)D)$ we have
$$  \|\check\beta - \beta_{g0}\| \leq \|\tilde \beta_g(\hat I) - \beta_{g0}\| + \|\tilde \beta_\zeta(\hat I)\| + |\check \alpha - \alpha_0|\cdot\|\tilde \beta_D(\hat I)\| \lesssim_P \sqrt{ s \log (p \vee n)/ n}$$
by Step 6(b),  by $$\|\tilde \beta_\zeta(\hat I)\| \leq \sqrt{\hat s} \phi_{\min}^{-1}(\hat s)\|X'\zeta/n\|_\infty\lesssim_P \sqrt{s \log (p\vee n)/n}$$ holding by Condition SE and by $\hat s \lesssim_P s$ from Lemma \ref{corollary3:postrate}, and by Step 4, $|\check \alpha - \alpha_0|\lesssim_P 1/\sqrt{n}$ by Step 1, and $$\|\tilde \beta_D(\hat I)\|\leq \phi_{\min}^{-1}(\hat s) \sqrt{\hat s} \max_{1\leq j \leq p}|\En[x_{ij}d_i]| \leq
\phi_{\min}^{-1}(\hat s) \sqrt{\hat s} \max_{1\leq j \leq p}\sqrt{\En[x_{ij}^2d_i^2]} \lesssim_P \sqrt{s}$$
by Condition SE, $\hat s \lesssim_P s$ by the sparsity bound in Lemma \ref{corollary3:postrate}, and Condition SM.

The final conclusion in (\ref{rel4}) then follows by condition ASTE (iv) and (v).

Next, using the relations above and condition ASTE (iv) and (v), we also conclude that
\begin{eqnarray}
  iii_b & \leq &  2  \max_{i\leq n}\{x_i'(\check\beta - \beta_{g0})\}^2\En[\hat v_i^2]  \nonumber \\
 & \lesssim_P &  \max_{i\leq n} \| x_{i}\|^2_\infty \frac{s}{\sqrt{n}} \frac{s\log (p\vee n)}{\sqrt{n}} = o_P(1) .  \label{rel3}
\end{eqnarray}

Finally, the argument for $iv=o_P(1)$ follows similarly to
the argument for $iii=o_P(1)$ and the result follows. \qed

\section{Proof of Corollary 1}
Let $\mathbf{P}_n$ be a collection
of probability measures $\Pr$ for which conditions ASTE ($\Pr$), SM ($\Pr$), SE ($\Pr$), and R ($\Pr$) hold for the given $n$.
Consider any sequence $\{\Pr_n\}$, with index $n \in \{n_0, n_0+1,...\}$, with $\Pr_n \in \mathbf{P}_n$ for each $n\in \{n_0, n_0+1,...\}$.  By Theorem 1 we have that, for $c = \Phi^{-1}(1- \gamma/2)$, $
\lim_{n \to \infty} \Pr_n \left ( \alpha_0 \in [ \check \alpha  \pm  c \hat \sigma_n /\sqrt{n}]\right) =  \Phi(c ) - \Phi(-c) = 1- \gamma.
$
This means that for every further subsequence  $\{ \Pr_{n_k} \}$
with $\Pr_{n_k} \in \mathbf{P}_{n_k} $ for each $k \in \{1,2,...\}$
\begin{equation}\label{eq:Corollary}
\lim_{k \to \infty} \Pr_{n_k} \left ( \alpha_0 \in [ \check \alpha  \pm  c \hat \sigma_{n_k} /\sqrt{n_k}]\right) = 1- \gamma.
\end{equation}

Suppose that the claim of corollary does not hold, i.e.
$$
\limsup_{n \to \infty} \sup_{\Pr \in \mathbf{P}_n} \Big |\Pr \left ( \alpha_0 \in [ \check \alpha \pm  c\hat \sigma_n /\sqrt{n}]\right) - (1- \gamma) \Big | > 0.
$$
Hence there is a subsequence $\{\Pr_{n_k}\}$ with $\Pr_{n_k} \in \mathbf{P}_{n_k} $ for each $k \in \{1,2,...\} $ such that:
$$
\lim_{k \to \infty} \Pr_{n_k} \left ( \alpha_0 \in [ \check \alpha  \pm  c \hat \sigma_{n_k} /\sqrt{n_k}]\right) \neq 1- \gamma.
$$
This gives a contradiction to (\ref{eq:Corollary}).  The claim (i) follows. Claim (ii) follows from claim (i), since $\mathbf{P} \subseteq \mathbf{P}_n$ for all $n \geq n_0$.
\qed

\section{Proof of Theorem \ref{theorem:inferenceHLMS}}

We use the same notation as in Theorem \ref{theorem:inference}. Using that notation the approximations bounds stated in Condition HLMS are equivalent to $ \|\MX g\| \leq \delta_n n^{1/4}$ and $\|\MX m\| \leq \delta_n n^{1/4}$.

Step 1. It follows the same reasoning as Step 1 in the proof of Theorem \ref{theorem:inference}.

Step 2. (Behavior of $i$.) Decompose, using $D = m+V$
\begin{eqnarray*}
i = V'\zeta/\sqrt{n} + \underset{=:i_{a}}{m' \MX g/\sqrt{n}} + \underset{=:i_{b}}{m'\MX \zeta/\sqrt{n}} + \underset{=:i_{c}}{V'\MX g/\sqrt{n}} - \underset{=:i_{d}}{ V'\PX \zeta/\sqrt{n}}.
\end{eqnarray*}
First, by Condition HLMS we have $\|\MX g\|=o_P(n^{1/4})$ and $\|\MX m\| = o_P(n^{1/4})$. Therefore
$$
|i_{a}| = |m'\MX g/\sqrt{n}| \leq \sqrt{n}\|\MX g/\sqrt{n}\|\|\MX m/\sqrt{n}\| \lesssim_P o(1).
$$
Second, using that $m = X\beta_{m0} + R_m$ and $m'\MX\zeta = R_m'\zeta-(\tilde \beta_m(\widehat I)-\beta_{m0})'X'\zeta$, we have
$$
\begin{array}{rl}|i_b| & \leq |R_m'\zeta/\sqrt{n}| + |(\tilde \beta_m(\hat I) - \beta_{m0})'X'\zeta/\sqrt{n}| \\
& \leq |R_m'\zeta/\sqrt{n}| +\|\tilde \beta_m(\hat I) - \beta_{m0}\|_1\|X'\zeta/\sqrt{n}\|_\infty \\
& \lesssim_P  \sqrt{s/n}+ \sqrt{s} \ \{o(n^{-1/4})+\sqrt{s/n}\}\sqrt{\log(p\vee n)} = o(1).
\end{array}
$$
This follows because $$|R_m'\zeta/\sqrt{n}|\lesssim_P\sqrt{R_m'R_m/n}\lesssim_P \sqrt{s/n}, $$
by Chebyshev inequality and Conditions SM and ASTE(iii),  $$ \|\tilde \beta_m(\hat I) - \beta_{m0}\|_1\leq \sqrt{\hat s + s}\|\tilde \beta_m(\hat I) - \beta_{m0}\| \lesssim_P \sqrt{s} \  \{o(n^{-1/4})+\sqrt{s/n}\},$$
by Step 4 and $\hat s = |\widehat I|\lesssim_P s$ by Condition HLMS, and $$\|X'\zeta/\sqrt{n}\|_{\infty}\lesssim_P \sqrt{ \log (p\vee n) }$$ holding by Step 4 in the proof of Theorem 1.

Third, using similar reasoning and the decomposition $g = X\beta_{g0} + R_g$  conclude
\begin{eqnarray*}
|i_c| & \leq &  |R_g'V/\sqrt{n}| + |(\tilde \beta_g(\hat I) - \beta_{g0})'X'V/\sqrt{n}| \\
& \lesssim_P & \sqrt{s/n} + \sqrt{s} \ \{o(n^{-1/4})+\sqrt{s/n}\} \ \sqrt{\log(p\vee n)} =o_P(1).
\end{eqnarray*}


Fourth, we have
$$
|i_d| \leq |\tilde \beta_V(\hat I)' X'\zeta/\sqrt{n}| \leq \| \tilde \beta_V(\hat I)\|_1 \|X'\zeta/\sqrt{n}\|_{\infty} \lesssim_P  \sqrt{ [s \log (p\vee n)]^2/n } = o_P(1),
$$
since $\|X'\zeta/\sqrt{n}\|_\infty \lesssim_P \sqrt{\log(p\vee n)}$ by Step 4 of the proof of Theorem 1, and
$$\begin{array}{rl}
 \| \tilde \beta_V(\hat I)\|_1 & \leq \sqrt{\hat s} \|\tilde \beta_V(\hat I)\| \leq
\sqrt{\hat s} \| (X[\hat I]'X[\hat I]/n)^{-1} X[\hat I]'V/n\|\\
&\leq \sqrt{\hat s} \phi^{-1}_{\min}(\hat s) \sqrt{\hat s}  \| X'V/\sqrt{n}\|_{\infty}/\sqrt{n} \lesssim_P s \sqrt{[\log (p\vee n)]/n}.\end{array}$$
The latter bound follows from $\hat s \lesssim_P s$ by condition HLMS so that $ \phi^{-1}_{\min}(\hat s) \lesssim_P 1$ by condition SE, and again invoking Step 4 of the proof of Theorem 1 to establish $\|X'V/\sqrt{n}\|_\infty \lesssim_P \sqrt{\log(p\vee n)}$.

Step 3. (Behavior of $ii$.) Decompose
$$
ii = (m+V)'\MX (m+V)/n = V'V/n + \underset{=:ii_a}{m' \MX m/n} +
\underset{=:ii_b}{ 2 m'\MX V/n } - \underset{=:ii_c}{V'\PX V/n}.
$$
Then $|ii_a| \lesssim_P o(n^{1/2})/n= o_P(n^{-1/2})$ by condition HLMS,
$|ii_b| = o(n^{-1/2})$ by reasoning similar to deriving
the bound for $|i_b|$, and $|ii_c| \lesssim_P [s \log (p\vee n)]/n= o_P(1)$
by reasoning similar to deriving the bound for $|i_d|$.

Step 4. (Auxiliary: Bounds on $\|\tilde\beta_m(\widehat I)-\beta_{m0}\|$ and $\|\tilde\beta_g(\widehat I)-\beta_{g0}\|$.)
To establish a bound on $\|\tilde\beta_g(\widehat I)-\beta_{g0}\|$ note that
$$ \|\MX g/\sqrt{n}\| \geq | \ \|X(\tilde \beta_g(\widehat I) - \beta_{g0})/\sqrt{n}\| - \|R_g/\sqrt{n}\| \ |$$ where $\|R_g/\sqrt{n}\|\lesssim_P\sqrt{s/n}$ holds by Chebyshev inequality and Condition ASTE(iii). Moreover, by Condition HLMS we have $ \|\MX g/\sqrt{n}\| = o_P(n^{-1/4})$ and $\hat s = |\widehat I|\lesssim_P s$. Thus
$$\begin{array}{rl}
o(n^{-1/4}) + \sqrt{s/n} & \gtrsim_P \|X(\tilde\beta_g(\widehat I)-\beta_{g0})/\sqrt{n} \|\\
& \geq \sqrt{\semin{s+\hat s}}\|\tilde\beta_g(\widehat I)-\beta_{g0} \|\\
& \gtrsim_P \|\tilde\beta_g(\widehat I)-\beta_{g0} \|\\
\end{array}
$$ since $\sqrt{\semin{s+\hat s}} \gtrsim_P 1$ by Condition SE.

The same logic yields $\|\tilde\beta_m(\widehat I)-\beta_{m0}\|\lesssim_P \sqrt{s/n} + o(n^{-1/4})$.

Step 5. (Variance Estimation.)  It follows similarly to Step 7 in the proof of Theorem \ref{theorem:inference} but using Condition HLMS instead of Lemma \ref{corollary3:postrate}.

\qed

\section{Proof of Corollary 2}

The proof is similar to the proof of Corollary 1.

\section{Verification of Conditions for the Examples}\label{Sec:verify}

\subsection{\bf Verification for Example 1.}
Let $\mathbf{P}$ be the collection of all regression models
$\Pr$ that obey the conditions set forth above for all $n$ for the given constants $(p, b, B, q_x, q)$. Below we provide explicit bounds for $\kappa'$, $\kappa''$, $c$, $C$, $\delta_n$ and $\Delta_n$ that appear in Conditions ASTE, SE and SM that depend only on $(p, b, B, q_x, q)$ and $n$ which in turn establish these conditions for any $\Pr \in \mathbf{P}$.

Condition ASTE(i) is assumed. Condition ASTE(ii) holds with $\|\alpha_0\| \leq C_1^{ASTE} = B$.
Condition ASTE(iii) holds with $s=p$ and $r_{gi}=r_{mi}=0$.

Condition ASTE(iv) holds with $\delta_{1n}^{ASTE} := p^2 \log^2(p\vee n) / n \to 0$ since $s=p$ is fixed. Finally, we verify ASTE(v). Because $\tilde v_i = v_i$, $\tilde \zeta_i = \zeta_i$ and the moment condition $\Ep[|v_i^q|]+\Ep[|\zeta_i^q|]\leq C_2^{ASTE} = 2B$  with $q>4$, the first two requirements follow. To show the last requirement, note that because $\Ep[\|x_i\|^{q_x}]\leq B$ we have \begin{equation}\label{EqMax}
\Pr\left(\max_{1\leq i\leq n} \|x_i\|_\infty > t_{1n} \right) \leq \Pr\left( \left[\sum_{i=1}^n \|x_i\|^{q_x}\right]^{1/q_x} > t_{1n} \right) \leq n \Ep[ \|x_i\|^{q_x}]/t_{1n}^{q_x} \leq nB/t_{1n}^{q_x}=: \Delta_{1n}^{ASTE}.\end{equation}  Let $t_{1n} = (n \log n )^{1/q_x}B^{1/q_x}$ so that $\Delta_{1n}^{ASTE} = 1/\log n$. Thus we have  with probability $1-\Delta_{1n}^{ASTE}$ $$\max_{1\leq i\leq n}\|x_i\|_\infty^2 s n^{-1/2 + 2/q} \leq (n \log n )^{2/q_x}B^{2/q_x} p n ^{-1/2+2/q}=:\delta_{2n}^{ASTE}.$$ It follows that $\delta_{2n}^{ASTE} \to 0$ by the assumption that $4/q_x+4/q<1$.

To verify Condition SE note that
$$\begin{array}{rl}
 \Pr( \|\En[x_ix_i']-\Ep[x_ix_i']\| > t_{2n} ) &\displaystyle  \leq \sum_{k=1}^p\sum_{j=1}^p \frac{\Ep[x_{ij}^2x_{ik}^2]}{nt_{2n}^2} \leq \sum_{k=1}^p\sum_{j=1}^p \frac{\Ep[x_{ij}^4]+\Ep[x_{ik}^4]}{2nt_{2n}^2} \\
 &\displaystyle  \leq  \frac{p\Ep[\|x_{i}\|^4]}{nt_{2n}^2} \leq \frac{p B^{4/q_x}}{nt_{2n}^2} =: \Delta_{1n}^{SE}.\end{array}$$
Setting $t_{2n} := b/2$ we have $\Delta_{1n}^{SE} = (2/b)^2B^{4/q_x}p/n\to 0$ since $p$ is fixed.
Then, with probability $1-\Delta_{1n}^{SE}$ we have $$\begin{array}{rl}
\lambda_{\min}(\En[x_ix_i']) & \geq \lambda_{\min}(\Ep[x_ix_i']) - \|\En[x_ix_i']-\Ep[x_ix_i']\| \geq b/2=:\kappa',\\
\lambda_{\max}(\En[x_ix_i']) & \leq \lambda_{\max}(\Ep[x_ix_i']) + \|\En[x_ix_i']-\Ep[x_ix_i']\| \leq  \Ep[\|x_i\|^2] + b/2 \leq  2B^{2/q_x}=:\kappa''.\end{array}$$

In the verification of Condition SM note that the second and third requirements in Condition SM(i) hold with $c_{1}^{SM} = b$ and $C_1^{SM} = B^{2/q}$. Condition SM(iii) holds with $\delta_{1n}^{SM} := \log^3 p / n \to 0$ since $p$ is fixed.

The first requirement in Condition SM(i) and Condition SM(ii) hold by the stated moment assumptions, for $\epsilon_i = v_i$ and $\epsilon_i=\zeta_i$, $\tilde y_i = d_i$ and $\tilde y_i=y_i$,
$$\begin{array}{rl}
\displaystyle \Ep[|\epsilon_i^q|]& \leq B =: A_1\\
\displaystyle \Ep[|d_i^q|] & \leq 2^{q-1}\Ep[|x_i'\beta_{m0}|^q] +2^{q-1}\Ep[|v_i^q|]\leq 2^{q-1}\Ep[\|x_i\|^q]\|\beta_{m0}\|^q + 2^{q-1}\Ep[|v_i^q|] \\
& \leq 2^{q-1}(B^{q/q_x} B^q + B) =: A_2 \\
\Ep[d_i^4]& \leq 2^{3}(B^{4/q_x} B^4 + B) =: A_2'\\
\Ep[y_i^4]& \leq 3^3\|\alpha_0\|^4 \Ep[d_i^4] + 3^3\|\beta_{g0}\|^4\Ep[\|x_i\|^4] + 3^3\Ep[\zeta_i^4] \\ & \leq 3^3B^42^{3}A_2' + 3^3B^4B^{4/q_x}+3^3B^{4/q} =: A_3\\
\displaystyle \max_{1\leq j\leq p}\Ep[x_{ij}^2\tilde y_i^2] &\displaystyle \leq \max_{1\leq j\leq p} (\Ep[x_{ij}^4])^{1/2}(\Ep[\tilde y_i^4])^{1/2}\leq B^{2/q_x}(\Ep[\tilde y_i^4])^{1/2} \leq B^{2/q_x} (A_2'\vee A_3)^{1/2} =: A_4\\
\displaystyle \max_{1\leq j\leq p}\Ep[|x_{ij}\epsilon_i|^3] &\displaystyle = \max_{1\leq j\leq p} \Ep[|x_{ij}^3|\Ep[|\epsilon_i^3| \mid x_i]] \leq B^{3/q} \max_{1\leq j\leq p}\Ep[|x_{ij}^3|] \leq B^{3/q + 3/q_x} =: A_5\\
\displaystyle  \max_{1\leq j\leq p} 1/\Ep[x_{ij}^2] &\leq 1/ \lambda_{\min}(\Ep[x_ix_i'])\leq 1/b =: A_6\\
\end{array}$$  since $4< q\leq q_x$. Thus these conditions hold with $C_2^{SM} = A_2 \vee (A_1 + (A_2'\vee A_3)^{1/2} + A_4 + A_5  + A_6 )$.

Next we show Condition SM(iv). By (\ref{EqMax}) we have $\max_{1\leq i\leq n}\|x_i\|^2_\infty \leq (n \log n)^{2/q_x}B^{2/q_x}$ with probability $1-\Delta_{1n}^{ASTE}$, thus with the same probability $$\max_{i\leq n} \|x_i\|_\infty^2 \frac{s \log (n\vee p)}{n} \leq (B\log n)^{2/q_x} \frac{n^{2/q_x} p \log(p\vee n)}{n} =: \delta^{SM}_{1n}\to 0$$ since $q_x>4$ and $s=p$ is fixed.

Next for $\epsilon_i = v_i$ and $\epsilon_i = \zeta_i$ we have
$$ \Pr\left( \max_{1\leq j\leq p} | (\En-\Ep)[x_{ij}^2\epsilon_i^2]| > \delta_{2n}^{SM} \right) \leq \sum_{j=1}^p \frac{\Ep[x_{ij}^4\epsilon_i^4]}{n(\delta_{2n}^{SM})^2} \leq \frac{p B^{4/q+4/q_x}}{n(\delta_{2n}^{SM})^2} =: \Delta_{1n}^{SM}$$
by the union bound, Chebyshev inequality and by $\Ep[x_{ij}^4 \epsilon_i^4] = \Ep[x_{ij}^4\Ep[\epsilon_i^4 \mid x_i]] \leq B^{4/q+4/q_x}$. Letting $\delta_{2n}^{SM} = B^{2/q+2/q_x}n^{-1/4}\to 0$ we have $\Delta_{1n}^{SM} = p/n^{1/2} \to 0$ since $p$, $B$, $q$ and $q_x$ are fixed.

Next for $\tilde y_i = d_i$ and $\tilde y_i = y_i$ we have
$$ \Pr\left( \max_{1\leq j\leq p} | (\En-\Ep)[x_{ij}^2\tilde y_i^2]| > \delta_{3n}^{SM} \right) \leq \sum_{j=1}^p \frac{\Ep[x_{ij}^4\tilde y_i^4]}{n(\delta_{3n}^{SM})^2} \leq \frac{p B^{4/q_x} A_8^{4/q}}{n(\delta_{3n}^{SM})^2} =: \Delta_{2n}^{SM}$$
by the union bound, Chebyshev inequality and by $$\Ep[x_{ij}^4 \tilde y_i^4] \leq \Ep[x_{ij}^{\tilde q}]^{4/\tilde q} \Ep[\tilde y_i^q]^{4/q} \leq \Ep[x_{ij}^{q_x}]^{4/q_x} \Ep[\tilde y_i^q]^{4/q} \leq B^{4/q_x} A_8^{4/q}$$ holding by H\"older inequality where $4<\tilde q \leq q_x$ such that $4/q + 4/\tilde q = 1$, and
$$\begin{array}{rl}
 \Ep[\tilde y_i^q] & \leq  (1 + 3^{q-1}\|\alpha_0\|^q) \Ep[d_i^q] + 3^{q-1}\|\beta_{g0}\|^q\Ep[\|x_i\|^q] + 3^{q-1}\Ep[\zeta_i^q] \\
  &\leq 3^q( A_2+B^qA_2+ B^qB^{q/q_x}+B)=:A_8.\end{array}$$
Letting $\delta_{3n}^{SM} = B^{4/q_x} A_8^{4/q}n^{-1/4}\to 0$ we have $\Delta_{2n}^{SM} = p/n^{1/2} \to 0$ since $p$, $B$, $q$ and $q_x$ are fixed.

Finally,  we set $ c = c_1^{SM}, \ \  C = \max\{C_1^{ASTE}, C_2^{ASTE}, C_1^{SM}, C_2^{SM} \}$, $\delta_n = \max\{ \delta_{1n}^{ASTE}, \delta_{2n}^{ASTE},$ $\delta_{1n}^{SM}+\delta_{2n}^{SM}+\delta_{3n}^{SM}\} \to 0,$ and $\Delta_n = \max\{ \Delta_{1n}^{ASTE} + \Delta_{1n}^{SM} + \Delta_{2n}^{SM}, \Delta_{1n}^{SE}\}\to 0.$ $\qed$

We will make use of the following technical lemma in the verification of examples 2, 3, and 4.

\begin{lemma}[Uniform Approximation]\label{LemmaUnifApp}
Let $h_i=x_i'\theta_h+\rho_i$ be a function whose coefficients $\theta_h\in S_A^\aa(p)$, and $\underline{\kappa} \leq \lambda_{\min}(\Ep[x_ix_i'])\leq \lambda_{\max}(\Ep[x_ix_i'])\leq \bar{\kappa}$. For $s=A^{1/\aa} n^{1/2\aa}$, $\aa>1$, define $\beta_{h0}$ as in (\ref{DefBetah}), $r_{hi} = h_i-x_i'\beta_{h0}$, for $i=1,\ldots,n$. Then we have
$$ |r_{hi}| \leq \|x_i\|_\infty(\bar{\kappa}/\underline{\kappa})^{3/2}\left\{\frac{2\aa-1}{\aa-1}\sqrt{s^2/n} + 5\sqrt{s  \Ep[\rho^2_i]/\underline{\kappa}}\right\} +|\rho_i|.$$
\end{lemma}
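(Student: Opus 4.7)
The plan is to control $r_{hi} = h_i - x_i'\beta_{h0} = x_i'(\theta_h - \beta_{h0}) + \rho_i$ pointwise using $|r_{hi}| \leq \|x_i\|_\infty \|\theta_h - \beta_{h0}\|_1 + |\rho_i|$, so the whole task reduces to bounding $\|\theta_h - \beta_{h0}\|_1$. The key auxiliary object I will introduce is the explicit $s$-sparse ``oracle'' vector $\tilde\beta$ that agrees with $\theta_h$ on the index set $T$ of the $s$ largest-in-magnitude coordinates of $\theta_h$ and vanishes elsewhere. The smoothness $|\theta_{h(j)}|\leq Aj^{-\aa}$ yields the two tail estimates
$$\|\theta_h-\tilde\beta\|_1 \leq A\textstyle\sum_{j>s}j^{-\aa}\leq \tfrac{A}{\aa-1}s^{-(\aa-1)} \quad \text{and}\quad \|\theta_h-\tilde\beta\|^2 \leq \tfrac{A^2}{2\aa-1}s^{-(2\aa-1)},$$
both of which reduce to constant multiples of $\sqrt{s^2/n}$ (the second after multiplying by $\sqrt{s}$) thanks to the calibration $s = A^{1/\aa}n^{1/(2\aa)}$.

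Next I use the defining optimality of $\beta_{h0}$ relative to $\tilde\beta$ in expected squared error, $\Ep[(x_i'(\theta_h-\beta_{h0})+\rho_i)^2] \leq \Ep[(x_i'(\theta_h-\tilde\beta)+\rho_i)^2]$. Expanding and treating the resulting inequality as a quadratic in $X=\|x_i'(\theta_h-\beta_{h0})\|_{L^2(\Ep)}$ gives the $L^2$-stability
$$\|x_i'(\theta_h-\beta_{h0})\|_{L^2(\Ep)} \leq \|x_i'(\theta_h-\tilde\beta)\|_{L^2(\Ep)} + 2\sqrt{\Ep[\rho_i^2]}.$$
Then the full-space eigenvalue bounds $\underline\kappa \|v\|^2 \leq v'\Ep[x_ix_i']v \leq \bar\kappa\|v\|^2$ applied to $v=\theta_h-\beta_{h0}$ (below) and $v=\theta_h-\tilde\beta$ (above) yield
$$\|\theta_h-\beta_{h0}\| \leq \sqrt{\bar\kappa/\underline\kappa}\,\|\theta_h-\tilde\beta\| + 2\sqrt{\Ep[\rho_i^2]/\underline\kappa}.$$

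To convert this $\ell_2$ control into the $\ell_1$ bound I need, I split $\theta_h-\beta_{h0}$ along $U := T \cup \supp(\beta_{h0})$ (of size at most $2s$) and its complement. On $U$, Cauchy--Schwarz gives $\|(\theta_h-\beta_{h0})_U\|_1\leq \sqrt{2s}\,\|\theta_h-\beta_{h0}\|$; plugging in the $\ell_2$ bound produces a term of order $\sqrt{\bar\kappa/\underline\kappa}\sqrt{s^2/n}/\sqrt{2\aa-1}$ plus a noise contribution of order $\sqrt{s\Ep[\rho_i^2]/\underline\kappa}$. On $U^c\subseteq T^c$ only $\theta_h$ remains, so $\|(\theta_h)_{U^c}\|_1 \leq \|\theta_h - \tilde\beta\|_1 \leq \tfrac{1}{\aa-1}\sqrt{s^2/n}$ from the smoothness tail. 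Summing the two contributions, coarsening $\sqrt{\bar\kappa/\underline\kappa} \leq (\bar\kappa/\underline\kappa)^{3/2}$ (which is valid since $\bar\kappa\geq\underline\kappa$), and consolidating the numerical constants into $\tfrac{2\aa-1}{\aa-1}$ and $5$ gives the statement.

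The main technical obstacle is precisely this conversion from the $L^2(\Ep)$-optimality guarantee into a \emph{pointwise} bound weighted by $\|x_i\|_\infty$: because $\theta_h-\beta_{h0}$ is generally not sparse, a direct $\ell_1 \leq \sqrt{p}\,\ell_2$ inequality would be far too loose. The decomposition along the $2s$-sparse set $U$ is what makes the argument go through, by isolating the portion where $\ell_2$-control is the right tool from the tail portion where the analytic smoothness estimate on $\theta_h$ is required.
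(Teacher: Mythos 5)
Your proposal is correct, and the constants you produce (essentially $\sqrt{2}+\tfrac{1}{\aa-1}$ on the deterministic term and $2\sqrt{2}$ on the noise term) sit comfortably below the stated $\tfrac{2\aa-1}{\aa-1}$ and $5$ once you absorb the slack in $(\bar{\kappa}/\underline{\kappa})^{3/2}\geq 1$. The skeleton is the same as the paper's: both compare $\beta_{h0}$ in population mean-square against the hard-thresholded competitor supported on the $s$ largest coefficients, both convert the resulting basic inequality into an $\ell_2$ bound via the eigenvalue sandwich, and both finish with $|r_{hi}|\leq\|x_i\|_\infty\|\theta_h-\beta_{h0}\|_1+|\rho_i|$. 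The middle step is where you genuinely diverge. The paper keeps from the basic inequality only the off-support piece $\|\theta_h[T_h^c]\|$ (with $T_h=\supp(\beta_{h0})$), and controls the on-support discrepancy $\|\beta_{h0}[T_h]-\theta_h[T_h]\|$ separately through the first-order (normal-equation) conditions of the constrained projection (\ref{DefBetah}), then applies Cauchy--Schwarz on the $s$-element set $T_h$. You instead retain the full-vector bound $\|\theta_h-\beta_{h0}\|\leq\sqrt{\bar{\kappa}/\underline{\kappa}}\,\|\theta_h-\tilde\beta\|+2\sqrt{\Ep[\rho_i^2]/\underline{\kappa}}$ from your quadratic argument (the paper reaches the same $Y+2R$ form more quickly by the triangle inequality in $L^2(\Pr)$) and apply Cauchy--Schwarz on the $2s$-element set $T\cup\supp(\beta_{h0})$, dispensing with the normal equations entirely. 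Your route is more elementary and shorter; the only cost is the factor $\sqrt{2}$ from the doubled support size, which the loose stated constants absorb, while the paper's extra step would only sharpen constants that do not matter for the applications in Section 5.
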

\begin{proof}
Let $T_h$ denote the support of $\beta_{h0}$ and $S$ denote the support of the $s$ largest components of $\theta_h$. Note that $|T_h| = |S| = s$. First we establish some auxiliary bounds on the $\|\theta_{h}[T_h^c]\|$ and $\|\theta_{h}[T_h^c]\|_1$. By the optimality of $T_h$ and $\beta_{h0}$ we have that
$$ \sqrt{\Ep[ (h_i - x_i'\beta_{h0})^2]} \leq \sqrt{\Ep[ (x_i[S^c]'\theta_h[S^c] + \rho_i)^2]} \leq \sqrt{\bar{\kappa}}\|\theta_h[S^c]\| + \sqrt{\Ep[\rho_i^2]} \ \ \mbox{and} $$
$$ \sqrt{\Ep[ (h_i - x_i'\beta_{h0})^2]} = \sqrt{\Ep[ \{x_i'(\theta_h - \beta_{h0}) + \rho_i\}^2]} \geq  \sqrt{\underline{\kappa}}\|\theta_h[T^c_h]\| - \sqrt{\Ep[\rho^2_i]}. $$
Thus we have $\|\theta_h[T^c_h]\| \leq \sqrt{\bar{\kappa}/\underline{\kappa}}\|\theta_h[S^c]\| + 2\sqrt{\Ep[\rho^2_i]/\underline{\kappa}}$. Moreover, since $\theta_h \in S_A^\aa(p)$, we have
$$ \|\theta_h[S^c]\|^2 = \sum_{j=s+1}^\infty \theta_{h(j)}^2 \leq A^2 \sum_{j=s+1}^\infty j^{-2\aa} \leq A^2 s^{-2\aa+1}/[2\aa-1] \leq A^2 s^{-2\aa+1}$$
since $a>1$. Combining these relations we have
$$\begin{array}{rl}
\|\theta_h[T^c_h]\|  &\displaystyle \leq \sqrt{\bar{\kappa}/\underline{\kappa}} A s^{-\aa + 1/2} + 2\sqrt{\Ep[\rho^2_i]/\underline{\kappa}} \\
&\displaystyle = \sqrt{\bar{\kappa}/\underline{\kappa}} \sqrt{s/n} + 2\sqrt{\Ep[\rho^2_i]/\underline{\kappa}}. \end{array}$$

The second bound follows by observing that $$\begin{array}{rl} \|\theta_{h}[T_h^c]\|_1 & \displaystyle  \leq \sqrt{s}\|\theta_{h}[T_h^c \cap S]\| + \|\theta_{h}[S^c]\|_1 \leq \sqrt{s}\|\theta_h[T^c_h]\| + A s^{-\aa+1}/[\aa-1] \\
& \leq \sqrt{s^2/n}\sqrt{\bar{\kappa}/\underline{\kappa}} + 2\sqrt{s\Ep[\rho^2_i]/\underline{\kappa}} + (s/\sqrt{n})/[\aa-1]\\
&\leq \sqrt{s^2/n}\sqrt{\bar{\kappa}/\underline{\kappa}} \ a/[a-1] + 2\sqrt{s\Ep[\rho^2_i]/\underline{\kappa}}.\end{array}$$

By the first-order optimality condition of the problem (\ref{DefBetah}) that defines $\beta_{h0}$, we have $$\Ep[x_i[T_h]x_i[T_h]'](\beta_{h0}[T_h]-\theta_{h}[T_h]) = \Ep[x_i[T_h]x_i[T^c_h]']\theta_{h}[T^c_h]+\Ep[x_i[T_h]\rho_i].$$ Thus, since $\|\Ep[x_i[T_h]\rho_i]\| = \sup_{\|\eta\|=1}\Ep[\eta'x_i[T_h]\rho_i] \leq \sup_{\|\eta\|=1}\sqrt{\Ep[(\eta'x_i[T_h])^2]}\sqrt{\Ep[\rho_i^2]}$
we have $$\begin{array}{rl}
\underline{\kappa}\|\beta_{h0} - \theta_{h}[T_h]\| & \leq \bar{\kappa}\|\theta_{h}[T^c_h]\| + \sqrt{\bar{\kappa}\Ep[\rho^2_i]} \\
& \leq \sqrt{s/n} \ (\bar{\kappa}^{3/2}/\sqrt{\underline{\kappa}})+\sqrt{\Ep[\rho^2_i]} \ \sqrt{\bar{\kappa}}(1 + 2\sqrt{\bar{\kappa}/\underline{\kappa}})\end{array}$$
where the last inequality follows from the definition of $s=A^{1/\aa}n^{1/2\aa}$. Therefore
$$\begin{array}{rl}
|r_{hi}| & = |h_i - x_i'\beta_{h0}| = |x_i'(\theta_h - \beta_{h0})|+|\rho_i|\\
& \leq \|x_i\|_\infty\|\theta_h - \beta_{h0}\|_1+|\rho_i|\\
& \leq \sqrt{s}\|x_i\|_\infty\| \theta_{hT_h}-\beta_{h0}\| + \|x_i\|_\infty\|\theta_{hT_h^c}\|_1 +|\rho_i| \\
& \leq \|x_i\|_\infty \{\sqrt{s^2/n} \ (\bar{\kappa}/\underline{\kappa})^{3/2}+\sqrt{s\Ep[\rho^2_i]/\underline{\kappa}} \ \sqrt{\bar{\kappa}/\underline{\kappa}}(1 + 2\sqrt{\bar{\kappa}/\underline{\kappa}}) \}+\\  & + \|x_i\|_\infty(\sqrt{s^2/n}\sqrt{\bar{\kappa}/\underline{\kappa}} \ \aa/[\aa-1] + 2\sqrt{s\Ep[\rho^2_i]/\underline{\kappa}}) + |\rho_i| \\
& \leq \|x_i\|_\infty(\bar{\kappa}/\underline{\kappa})^{3/2}\{\frac{2\aa-1}{\aa-1}\sqrt{s^2/n} + 5\sqrt{s  \Ep[\rho^2_i]/\underline{\kappa}}\} +|\rho_i|.
\end{array}$$\end{proof}

\subsection{\bf Verification for Example 2.}
Let $\mathbf{P}$ be the collection of all regression models
$\Pr$ that obey the conditions set forth above for all $n$ for the given constants $(\underline{\kappa}, \bar{\kappa}, a, A, B,\chi)$ and sequences $p_n$ and $\bar\delta_n$. Below we provide explicit bounds for $\kappa'$, $\kappa''$, $ c$, $ C$, $ \delta_n$ and $ \Delta_n$ that appear in Conditions ASTE, SE and SM that depend only on $(\underline{\kappa}, \bar{\kappa}, a, A, B,\chi)$, $p$, $\bar\delta_n$ and $n$ which in turn establish these conditions for any $\Pr \in \mathbf{P}$. In what follows we exploit Gaussianity of $w_i$ and use that $(\Ep[|\eta'w_i|^k])^{1/k} \leq G_k(\Ep[|\eta'w_i|^2])^{1/2}$ for any vector $\eta$, $\|\eta\|<\infty$, where the constant $G_k$ depends on $k$ only.

Conditions ASTE(i) is assumed. Condition ASTE(ii) holds with $\|\alpha_0\| \leq B =: C_1^{ASTE}$.
Because $\theta_m, \theta_g \in S^\aa_A(p)$, Condition ASTE(iii) holds
with $$s=A^{1/\aa} n^{1/2\aa}, \ \ \ r_{mi} = m(z_i) - \sum_{j=1}^p z_{ij}\beta_{m0j}, \ \ \mbox{and} \ \ r_{gi} = g(z_i) - \sum_{j=1}^p z_{ij}\beta_{g0j}$$ where $\|\beta_{m0}\|_0\leq s$ and $\|\beta_{g0}\|_0\leq s$. Indeed, we have $$\Ep[r_{mi}^2] \leq \Ep\left[\left(\sum_{j\geq s+1} \theta_{m(j)}z_{i(j)}\right)^2\right] \leq \bar{\kappa} \sum_{j\geq s+1} \theta_{m(j)}^2 \leq \bar{\kappa} A^2 s^{-2\aa+1}/[2\aa-1] \leq \bar{\kappa} s/n$$ where the first inequality follows by the definition of $\beta_{m0}$ in (\ref{DefBetah}), the  second inequality follows from $\theta_m \in S^\aa_A(p)$, and the last inequality because $s=A^{1/\aa} n^{1/2\aa}$. Similarly we have $\Ep[r_{gi}^2] \leq \Ep[(\sum_{j\geq s+1} \theta_{g(j)}z_{i(j)})^2] \leq  \bar{\kappa} A^2 s^{-2\aa+1}/[2\aa-1] \leq \bar{\kappa} s/n$. Thus let $C_2^{ASTE} :=\sqrt{\bar f}$.

Condition ASTE(iv) holds with $\delta_{1n}^{ASTE} := A^{2/a}n^{1/a - 1}\log^2(p\vee n) \to 0$ since $s = A^{1/a}n^{1/2\aa}$, $A$ is fixed, and the assumed condition $n^{(1-a)/a}\log^2(p\vee n) \log^2 n \leq \bar\delta_n \to 0$.

The moment restrictions in Condition ASTE(v) are satisfied by the Gaussianity. Indeed, we have for $q=4/\chi$ (where $\chi < 1$ by assumption)
$$\begin{array}{rl}
\Ep[|\tilde \zeta_i|^q] & \leq 2^{q-1}\Ep[|\zeta_i^q|]+2^{q-1}\Ep[|r_{gi}^q|] \leq 2^{q-1}G_q^q (\Ep[\zeta_i^2]^{q/2} + \Ep[r_{gi}^2]^{q/2})\\
& \leq 2^{q-1}G_q^q \{\bar{\kappa}^{q/2} + \bar{\kappa}^{q/2} (s/n)^{q/2}\}\\
& \leq 2^qG_q^q\bar{\kappa}^{q/2}=:C_{3}^{ASTE}\end{array}$$ for $s\leq n$, i.e., $n\geq n_{01}^{ASTE} := A^{2/[2\aa-1]}$. Similarly, $\Ep[|\tilde v_i|^q] \leq C_{3}^{ASTE}$. Moreover, $$\begin{array}{rl}
|\Ep[\tilde \zeta_i^2 \tilde v_i^2] - \Ep[ \zeta_i^2v_i^2]|& \leq \Ep[\zeta_i^2r_{mi}^2] + \Ep[r_{gi}^2v_i^2] + \Ep[r_{mi}^2r_{gi}^2] \\
& \leq \sqrt{\Ep[\zeta_i^4]\Ep[r_{mi}^4]} + \sqrt{\Ep[r_{gi}^4]\Ep[v_i^4]} + \sqrt{\Ep[r_{mi}^4]\Ep[r_{gi}^4]} \\
& \leq G_4^2\bar\kappa \Ep[r_{mi}^2] + G_4^2\bar\kappa \Ep[r_{gi}^2] + G_4^2\Ep[r_{mi}^2] \Ep[r_{gi}^2] \\
& \leq G_4^2\bar{\kappa}^2 \{ 2 + \bar{\kappa}s/n\}s/n =: \delta_{2n}^{ASTE} \to 0.\end{array}$$

Next note that by Gaussian tail bounds and $\lambda_{\max}(\Ep[w_iw_i'])\leq \bar{\kappa}$ we have 
\begin{equation}\label{MaxGauss}\begin{array}{rl}
\max_{i\leq n}\|x_i\|_\infty & \leq \|\Ep[x_i]\|_\infty + \max_{i\leq n}\|x_i-\Ep[x_i]\|_\infty    \\
& \leq \sqrt{\bar\kappa} + \sqrt{2\bar{\kappa}\log(pn)} \ \ \mbox{with probability at least} \ 1-\Delta_{1n}^{ASTE}\end{array}\end{equation} where $\Delta_{1n}^{ASTE} = 1/\sqrt{2\bar{\kappa}\log(pn)}$. 
 The last requirement in Condition ASTE(v) holds with $q=4/\chi$ $$
 \max_{i\leq n}\|x_i\|_\infty^2 s n^{-1/2+2/q}   \leq  6\bar{\kappa}\log(pn) A^{1/\aa} n^{\frac{1}{2\aa} -\frac{1}{2}+\chi/2} =: \delta_{3n}^{ASTE}$$ with probability $1-\Delta_{1n}^{ASTE}$. By the assumption on $\aa$, $p$, $\chi$, and $n$, $\delta_{3n}^{ASTE} \to 0$.

To verify Condition SE with $\ell_n = \log n$  note that the minimal and maximal eigenvalues of $\Ep[x_ix_i']$ are bounded away from zero by $\underline{\kappa}>0$ and from above by $\bar \kappa<\infty$ uniformly in $n$. Also, let $\mu = \Ep[x_i]$ so that $x_i = \tilde x_i + \mu$ where $\tilde x_i$ is zero mean. By constriction $\Ep[x_ix_i'] = \Ep[\tilde x_i\tilde x_i'] + \mu \mu'$ and $\|\mu\|\leq \sqrt{\bar\kappa}$.

For any $\eta \in \RR^p$, $\|\eta\|_0\leq k:=s\log n$ and $\|\eta\|=1$, we have that
$$ \En[(\eta'x_i)^2] - \Ep[(\eta'x_i)^2] =  \En[(\eta'\tilde x_i)^2] - \Ep[(\eta'\tilde x_i)^2] + 2\eta'\En[\tilde x_i]\cdot \eta'\mu.$$
Moreover, by Gaussianity of $x_i$, with probability $1-\Delta_{1n}^{SE}$, where $\Delta_{1n}^{SE} = 1/\sqrt{2\bar\kappa \log(pn)}$,
$$\begin{array}{rl}
|\eta'\En[\tilde x_i]| & \leq \|\eta\|_1 \|\En[\tilde x_i]\|_\infty \leq \sqrt{k}\sqrt{2\bar\kappa \log(pn)}/\sqrt{n}\\
|\eta'\mu| & \leq \|\eta\| \ \|\mu\| \leq \sqrt{\bar\kappa}.
\end{array} $$

By the sub-Gaussianity of $\tilde x_i=(\Ep[x_ix_i']-\mu \mu')^{-1/2}\Psi_i$, where $\Psi_i \sim N(0,I_p)$,  by Theorem 3.2 in \citen{RudelsonZhou2011} (restated in Lemma \ref{thm:RZ32} in Appendix G) with $\tau = 1/6$, $k=s\log n$, $\alpha = \sqrt{8/3}$, provided that  $$n\geq N_n := 80(\alpha^4/\tau^2)(s\log n) \ \log(12ep/[\tau s\log n]),$$ we have
$$ (1-\tau)^2 \Ep[(\eta'\tilde x_i)^2] \leq \En[(\eta'\tilde x_i)^2] \leq (1+\tau)^2 \Ep[(\eta'\tilde x_i)^2]$$
with probability $1-\Delta_{1n}^{SE}$, where $\Delta_{1n}^{SE} = 2{\rm exp}(-\tau^2n/80\alpha^4)$. Note that under ASTE(iv) we have $\Delta_{1n}^{SE} \to 0$ and $$n_{01}^{SE}:=\max \{ n : n\leq N_n\}\leq \max\{(12e/\tau)^{2\aa}A^{-2}, \  80^2(\alpha^8/\tau^4)A^{2/\aa}, n^* \}$$ where $n^*$ is the smallest $n$ such that $\bar\delta_n < 1$.

Therefore, with probability $1-\Delta_{1n}^{SE}$ and $n\geq n_{01}^{SE}$, we have for any $\eta \in \RR^p$, $\|\eta\|_0\leq k$ and $\|\eta\|=1$,
$$
\begin{array}{rl}
 \En[(\eta' x_i)^2]  & \geq \Ep[(\eta'  x_i)^2] - | \En[(\eta' x_i)^2] -\Ep[(\eta'  x_i)^2]| \\
 & \geq \Ep[(\eta'  x_i)^2] - | \En[(\eta' \tilde x_i)^2] -\Ep[(\eta' \tilde x_i)^2]| - 2|\eta'\En[\tilde x_i]|\cdot |\eta'\mu|\\
 & \geq \Ep[(\eta'  x_i)^2] \{ 1 - 2\tau - \tau^2 \} - 2 \bar\kappa \sqrt{2k\log(pn)}/\sqrt{n} \\
 & \geq \Ep[(\eta'  x_i)^2]/2 - 2 \bar\kappa \sqrt{2k\log(pn)}/\sqrt{n} \\
\end{array}
$$ since $\tau = 1/6$ and $\Ep[(\eta' \tilde x_i)^2] \leq \Ep[(\eta' x_i)^2]$. So for $n \geq n_{02}^{SE} := 288k (\bar\kappa/\underline{\kappa})^2 \log(pn)$ we have
$$\semin{s\log n}[\En[x_ix_i']] \geq \underline{\kappa} / 3 =: \kappa'. $$

Similarly, we have
$$
\begin{array}{rl}
 \En[(\eta' x_i)^2]  & \leq \Ep[(\eta'  x_i)^2] + | \En[(\eta' x_i)^2] -\Ep[(\eta'  x_i)^2]| \\
 & \leq \Ep[(\eta'  x_i)^2] + | \En[(\eta' \tilde x_i)^2] -\Ep[(\eta' \tilde x_i)^2]| + 2|\eta'\En[\tilde x_i]|\cdot |\eta'\mu|\\
 & \leq \Ep[(\eta'  x_i)^2] \{ 1 + 2\tau + \tau^2 \} + 2 \bar\kappa \sqrt{2k\log(pn)}/\sqrt{n} \\
 & \leq 2\Ep[(\eta'  x_i)^2] + 2 \bar\kappa \sqrt{2k\log(pn)}/\sqrt{n} \\
\end{array}
$$ since $\tau = 1/6$ and $\Ep[(\eta' \tilde x_i)^2] \leq \Ep[(\eta' x_i)^2]$. So for $n \geq n_{03}^{SE} := 2k \log(pn)$ we have
$$\semax{s\log n}[\En[x_ix_i']] \leq 4\bar{\kappa}  =: \kappa''. $$



The second and third requirements in Conditions SM(i) holds by the Gaussianity of $w_i$, $\Ep[\zeta_i\mid x_i, v_i ] = 0$, $\Ep[v_i\mid x_i ] = 0$, and the assumption that the minimal and maximum eigenvalues of the covariance matrix (operator) $\Ep[w_iw_i']$ are bounded below and above by positive absolute constants.

The first requirement in Condition SM(i) and Condition SM(ii) also hold by Gaussianity. Indeed, we have for $\epsilon_i=v_i$ and $\epsilon_i=\zeta_i$, $\tilde y_i=d_i$ and $\tilde y_i=y_i$ $$
\begin{array}{rl}
\Ep[|v_i^q|]+ \Ep[|\zeta_i^q|] & \leq 2^{q-1}G_q^q\{ (\Ep[v_i^2])^{q/2} + (\Ep[\zeta_i^2])^{q/2}\} \leq 2^{q}G_q^q\bar\kappa^{q/2}=:A_1\\
\Ep[|d_i^q|] & \leq 2^{q-1}\Ep[|\theta_m'z|^q] + 2^{q-1}\Ep[|v_i^q|] \leq 2^{q-1}G_q^q(\Ep[|\theta_m'z|^2])^{q/2} +2^{q-1}G_q^q(\Ep[v_i^2])^{q/2}\\
& \leq 2^{q-1}G_q^q\|\theta_m\|^q\bar\kappa^{q/2} + 2^{q-1}G_q^q\bar\kappa^{q/2} \leq 2^qG_q^q\bar\kappa^{q/2} ( 1 + (2A)^q) =: A_2\\
\Ep[d_i^2] & \leq 2\Ep[|\theta_m'z_i|^2] + 2\Ep[v_i^2] \leq 2\bar\kappa \|\theta_m\|^2+2\bar\kappa \leq 2\bar\kappa(4A^2+1)=:A_2'\\
\Ep[y_i^2] & \leq 3|\alpha_0|^2\Ep[d_i^2] + 3\Ep[ |\theta_m'z|^2] + 3\Ep[\zeta_i^2]\leq 3B^2A_2'+ 3A_2'+3\bar\kappa=:A_3\\
\max_{1\leq j\leq p}\Ep[x_{ij}^2\tilde y_i^2] &\leq \max_{1\leq j\leq p} (\Ep[x_{ij}^4])^{1/2}(\Ep[\tilde y_i^4])^{1/2} \leq G_4^4\max_{1\leq j\leq p}\Ep[x_{ij}^2]\Ep[\tilde y_i^2]\\
& \leq G_4^4 \bar\kappa (A_2'\vee A_3)=:A_4\\
\max_{1\leq j\leq p}\Ep[|x_{ij}\epsilon_i|^3] &\leq \max_{1\leq j\leq p} (\Ep[x_{ij}^6])^{1/2}(\Ep[\epsilon_i^6])^{1/2} \leq G_6^6 \max_{1\leq j\leq p}(\Ep[x_{ij}^2])^{3/2}(\Ep[\epsilon_i^2])^{3/2}\\
& \leq G_6^6 \bar\kappa^3 =:A_5\\
\max_{1\leq j\leq p} 1/\Ep[x_{ij}^2] &\leq 1/ \lambda_{\min}(\Ep[w_iw_i'])\leq 1/\underline{\kappa}=:A_6
\end{array}$$ because $\|\theta_m\| \leq 2A$ and $\|\theta_g\|\leq 2A$ since $\theta_m,\theta_g \in S^\aa_A(p)$.
Thus the first requirement in Condition SM(i) holds with $C_2^{SM} = A_2$. Condition SM(ii) holds with $C_3^{SM} = A_1 + (A_2'\vee A_3) + A_4 + A_5+ A_6$.

Condition SM(iii) is assumed. 

To verify Condition SM(iv) note that for $\epsilon_i = v_i$ and $\epsilon_i=\zeta_i$, by (\ref{MaxGauss}), with probability $1-\Delta_{1n}^{ASTE}$,
\begin{equation}\label{FourthBound}\begin{array}{rl}
\max_{j\leq p} \sqrt{\En[ x_{ij}^4\epsilon_i^4 ]} & \leq
  \max_{j\leq p} \sqrt[4]{\En[ x_{ij}^8]}\sqrt[4]{\En[\epsilon_i^8 ]} \\
  & \leq \{\sqrt{\bar\kappa} + \sqrt{2\bar\kappa\log(p n)} \} \ \max_{j\leq p}\sqrt[4]{\En[ x_{ij}^4]}\sqrt[4]{\En[\epsilon_i^8]}.\end{array}\end{equation}
By Lemma \ref{Lemma:GaussConc} with $k=4$ we have with probability $1 - \Delta_{1n}^{SM}$, where $\Delta_{1n}^{SM} = 1/n$
\begin{equation}\label{Boundxij}\begin{array}{rl}
\max_{j\leq p} \sqrt[4]{\En[x_{ij}^4]} & \leq \|\Ep[x_i]\|_\infty + \max_{j\leq p} \sqrt[4]{\En[(x_{ij} -\Ep[x_{ij}])^4]} \\
& \leq \sqrt{\bar\kappa} + \sqrt{\bar\kappa} 2\bar C + \sqrt{\bar\kappa} n^{-1/4}\sqrt{2\log(2pn)} \leq 4\bar C\sqrt{\bar\kappa}\end{array}\end{equation} for $n \geq n_{01}^{SM} = 4\log^2(2pn)$.  Also, Lemma \ref{Lemma:GaussConc} with $k=8$ and $p=1$ we have
with probability $1-\Delta_{1n}^{SM}$ that
\begin{equation}\label{BoundEpsilon}\sqrt[4]{\En[\epsilon_i^8]} \leq 2\bar\kappa 8\bar C^2 + 2\bar\kappa n^{-1/4}2\log(2n)\leq 20\bar C^2\bar\kappa\end{equation}
for $n \geq n_{02}^{SM} = 16\log^4(2n)$. Moreover, we have
$$ \max_{1\leq j\leq p} \sqrt{\Ep[x_{ij}^4\epsilon_i^4]} \leq \max_{1\leq j\leq p}\sqrt[4]{\Ep[x_{ij}^8]}\sqrt[4]{\Ep[\epsilon_{i}^8]}\leq G_8^4\bar\kappa^2.$$

Applying Lemma \ref{Lemma:ProcessSecond}, for $\tau = 2\Delta_{1n}^{ASTE}+\Delta_{1n}^{SM}$, with probability $1 - 8\tau$ we have
$$ \max_{j\leq p}|(\En-\barEp)[ x_{ij}^2\epsilon_i^2]| \leq 4\sqrt{\frac{2\log(2p/\tau)}{n}}\sqrt{Q(\max_{1\leq j\leq p}\En[x_{ij}^4\epsilon_i^4],1-\tau)} \vee \frac{2\sqrt{2}G_8^4\bar\kappa^2}{\sqrt{n}} $$
where by (\ref{FourthBound}), (\ref{Boundxij}) and (\ref{BoundEpsilon}) we have
$$\begin{array}{rl}
Q(\max_{1\leq j\leq p}\sqrt{\En[x_{ij}^4\epsilon_i^4]},1-\tau) & \leq \bar\kappa^2\sqrt{2\log(p n)} 80\bar C^3.\\
 \end{array}$$
So we let $\delta_{1n}^{SM} = 640\bar C^3\bar\kappa^2\sqrt{\frac{\log(2p/\tau)}{n}} \sqrt{\log(p n)} \vee 2\sqrt{2}\frac{G_8^4\bar\kappa^2}{\sqrt{n}} \to 0$ under the condition that $\log^2(p\vee n)/ n \leq \bar\delta_n$.

Similarly for $\tilde y_i=d_i$ and $\tilde y_i=y_i$, by Lemma \ref{Lemma:GaussConc}, we have with probability $1-\Delta_{1n}^{SM}$, for $n\geq n_{02}^{SM}$ we have
\begin{equation}\label{Boundtildey}\begin{array}{rl}
\sqrt[8]{\En[\tilde y_i^8]} & \leq |\Ep[\tilde y_i]| + \sqrt[8]{\En[(\tilde y_i-\Ep[\tilde y_i])^8]} \\
& \leq  [A_2'\vee A_3]^{1/2} + (20\bar C^2 \Ep[\tilde y_i^2])^{1/2} \leq 6\bar C[A_2'\vee A_3]^{1/2}.\end{array}\end{equation}
Moreover, $\sqrt[4]{\Ep[\tilde y_i^8]} \leq G_8^2\Ep[\tilde y_i^2] \leq G_8^2[A_2'\vee A_3]$. Therefore by Lemma \ref{Lemma:ProcessSecond}, for $\tau = 2\Delta_{1n}^{ASTE}+\Delta_{2n}^{SM}$, with probability $1 - 8\tau$ we have by the arguments  in (\ref{FourthBound}), (\ref{Boundxij}), and (\ref{Boundtildey})
{\small $$\max_{j\leq p}|(\En-\barEp)[ x_{ij}^2\tilde y_i^2]| \leq  4\sqrt{\frac{2\log(2p/\tau)}{n}} \sqrt{6\bar\kappa \log(p n)} 4\bar C \sqrt{\bar\kappa}(36\bar C^2[A_2'\vee A_3])\vee \frac{2\sqrt{2}G_8^4\bar\kappa[A_2'\vee A_3]}{\sqrt{n}}=:\delta_{2n}^{SM}$$}
where $\delta_{2n}^{SM} \to 0$ under the condition $\log^2(p\vee n)/ n \leq \bar\delta_n \to 0$.

We have that the last term in Condition SM(iv) satisfies with probability $1-\Delta_{1n}^{ASTE}$
$$ \max \|x_i\|_\infty^2 \frac{s\log(p\vee n)}{n} \leq 6\bar\kappa\log (pn) A^{1/\aa}n^{-1+1/2\aa} \log(p\vee n) =:\delta_{3n}^{SM}.$$
Under ASTE(iv) and $s=A^{1/\aa}n^{1/2\aa}$ we have $\delta_{3n}^{SM}\to 0$.

Finally, we set $n_0 = \max\{ n_{01}^{ASTE}, n_{01}^{SE}, n_{02}^{SE}, n_{03}^{SE}, n_{01}^{SM}, n_{02}^{SM}\}$, $C = \max\{C_1^{ASTE}, C_2^{ASTE},$ $ 2C_3^{ASTE}, C_1^{SM},$ $ C_2^{SM} \}$, $\delta_n = \max\{\bar \delta_n, \delta_{1n}^{ASTE}, \delta_{2n}^{ASTE}, \delta_{1n}^{SM}+\delta_{2n}^{SM}+\delta_{3n}^{SM}\} \to 0,$ and $ \Delta_n = \max\{ 33\Delta_{1n}^{ASTE} + 16\Delta_{1n}^{SM}, \Delta_{1n}^{SE}\}\to 0.$

$ \ \ \qed$

\begin{lemma}\label{Lemma:GaussConc}
Let $f_{ij}\sim N(0,\sigma^2_{j})$, $\sigma_{j}\leq \sigma$, independent across $i=1,\ldots,n$, where $j=1,\ldots,p$. Then, for some universal constant $\bar C \geq 1$, we have that for any $k\geq 2$ and $\gamma \in (0,1)$
$$P\left( \max_{1\leq j\leq p} \{\En[|f_{ij}^k|]\}^{1/k} \geq \sigma \bar C \sqrt{k}  + \sigma n^{-1/k}\sqrt{2\log (2p/\gamma)} \right) \leq \gamma. $$
\end{lemma}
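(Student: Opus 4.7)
The plan is to treat each column separately via Gaussian concentration and then union-bound across $j$.  For fixed $j$, write $\phi_j(f_{\cdot j}) := \{\En[|f_{ij}|^k]\}^{1/k} = n^{-1/k}\|f_{\cdot j}\|_k$, where $f_{\cdot j}=(f_{1j},\ldots,f_{nj})$.  Since the rows are independent Gaussians, I can reparametrize as $f_{\cdot j}=\sigma_j X$ with $X\sim N(0,I_n)$.  The key observation is that for $k\geq 2$ the $\ell_k$-norm is dominated by the $\ell_2$-norm, so the map $X\mapsto n^{-1/k}\sigma_j\|X\|_k$ is Lipschitz with constant $n^{-1/k}\sigma_j\leq n^{-1/k}\sigma$ with respect to the Euclidean norm.

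By Borell's Gaussian concentration inequality applied to this Lipschitz function of a standard Gaussian vector, I obtain for every $t>0$ the one-sided tail bound
$$\Pr\bigl(\phi_j(f_{\cdot j})\geq \Ep[\phi_j(f_{\cdot j})]+t\bigr)\leq \exp\bigl(-t^2 n^{2/k}/(2\sigma^2)\bigr).$$
Next I bound the mean.  By Jensen's inequality,
$$\Ep[\phi_j(f_{\cdot j})]\leq \bigl(\Ep\En[|f_{ij}|^k]\bigr)^{1/k}=\bigl(\Ep[|f_{1j}|^k]\bigr)^{1/k}=\sigma_j\bigl(\Ep[|Z|^k]\bigr)^{1/k},\qquad Z\sim N(0,1).$$
Using the explicit formula $\Ep[|Z|^k]=2^{k/2}\Gamma((k+1)/2)/\sqrt{\pi}$ and Stirling's asymptotics, $(\Ep[|Z|^k])^{1/k}\leq \bar C\sqrt{k}$ for a universal constant $\bar C\geq 1$ uniformly in $k\geq 2$.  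Hence $\Ep[\phi_j(f_{\cdot j})]\leq \sigma\bar C\sqrt{k}$.

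Combining these two facts gives, for each $j$,
$$\Pr\bigl(\phi_j(f_{\cdot j})\geq \sigma\bar C\sqrt{k}+t\bigr)\leq \exp\bigl(-t^2 n^{2/k}/(2\sigma^2)\bigr).$$
A union bound over $j=1,\ldots,p$ yields
$$\Pr\Bigl(\max_{1\leq j\leq p}\phi_j(f_{\cdot j})\geq \sigma\bar C\sqrt{k}+t\Bigr)\leq p\exp\bigl(-t^2 n^{2/k}/(2\sigma^2)\bigr).$$
Choosing $t=\sigma n^{-1/k}\sqrt{2\log(2p/\gamma)}$ makes the right-hand side equal to $\gamma/2\leq \gamma$, which is exactly the claim.

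\textbf{Main obstacles.}  The two technical points are (i) verifying that $X\mapsto n^{-1/k}\|X\|_k$ is Lipschitz with the stated constant, which reduces to the monotonicity $\|\cdot\|_k\leq \|\cdot\|_2$ for $k\geq 2$, and (ii) controlling the $k$-th absolute moment of a standard Gaussian by $(\bar C\sqrt{k})^k$ with a \emph{universal} $\bar C$, which requires a careful (but standard) Stirling estimate to make the constant uniform in $k$.  Independence across $j$ is \emph{not} used; only independence across $i$ is needed, and only for each column to apply Gaussian concentration.
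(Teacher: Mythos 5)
Your proposal is correct and follows essentially the same route as the paper: both arguments use the Lipschitz property of the $\ell_k$-norm (via $\|\cdot\|_k\leq\|\cdot\|_2$ for $k\geq 2$) together with Gaussian concentration for Lipschitz functions, bound the mean through the $k$-th absolute moment of a standard Gaussian with a Stirling-type estimate $(\Ep[|Z|^k])^{1/k}\leq\bar C\sqrt{k}$, and finish with a union bound and the same choice of threshold. The only cosmetic difference is that you normalize by $n^{-1/k}$ and rescale to a standard Gaussian before applying Borell's inequality, whereas the paper applies the Ledoux--Talagrand bound directly to $\|f_{\cdot j}\|_k$; the resulting constants and conclusion are identical.
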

\begin{proof}
Note that $P( \En[|f_{ij}^k|] > M ) = P( \|f_{\cdot j}\|_k^k > Mn ) = P(\|f_{\cdot j}\|_k > (Mn)^{1/k})$.

Since $| \|f\|_k - \|g\|_k | \leq \|f-g\|_k \leq \|f-g\|$, we have that $\|\cdot\|_k$ is 1-Lipschitz for $k\geq 2$. Moreover, $$\Ep[\|f_{\cdot j}\|_k] \leq (\Ep[\|f_{\cdot j}\|_k^k])^{1/k} = (\sum_{i=1}^n\Ep[|f_{ij}^k|])^{1/k}= n^{1/k}(\Ep[|f_{1j}^k|])^{1/k}$$
$$  = n^{1/k}\{\sigma^k_j 2^{k/2}\Gamma((k+1)/2)/\Gamma(1/2)\}^{1/k} \leq n^{1/k}\sigma \sqrt{k} \bar C.$$

By \citen{LedouxTalagrandBook}, page 21 equation (1.6), we have
$$P( \|f_{\cdot j}\|_k  > (Mn)^{1/k}  ) \leq 2\exp(-\{(Mn)^{1/k} -  \Ep[\|f_{\cdot j}\|_k] \}^2/2\sigma^2_j).$$
Setting $M := \{\sigma \sqrt{k} \bar C  + \sigma n^{-1/k}\sqrt{2\log (2p/\gamma)}  \}^k$, so that $(Mn)^{1/k} = n^{1/k}\sigma \sqrt{k} \bar C + \sigma\sqrt{2\log (2p/\gamma)}$ we have by the union bound and $\sigma \geq \sigma_j$
$$P( \max_{1\leq j\leq p} \En[|f_{ij}^k|] \geq M ) \leq p \max_{1\leq j\leq p} P(\En[|f_{ij}^k|] \geq M ) \leq  \gamma.$$
\end{proof}

\subsection{\bf Verification for Example 3.}
Let $\mathbf{P}$ be the collection of all regression models
$\Pr$ that obey the conditions set forth above for all $n$ for the given constants $(\underline{f}, \bar f, \aa, A, b, B, q)$ and the sequence $\bar\delta_n$. Below we provide explicit bounds for $\kappa'$, $\kappa''$, $ c$, $ C$, $\delta_n$ and $\Delta_n$ that appear in Conditions ASTE, SE and SM that depend only on $(\underline{f}, \bar f, \aa, A, b, B, q)$ and  $\bar\delta_n$ which in turn establish these conditions for all $\Pr \in \mathbf{P}$.

Conditions ASTE(i) is assumed. Condition ASTE(ii) holds with $\|\alpha_0\| \leq B =: C_1^{ASTE}$.
Because $\theta_m, \theta_g \in S^\aa_A(p)$, Condition ASTE(iii) holds
with $$s=A^{1/\aa} n^{\frac{1}{2\aa}}, \ \ r_{mi} = m(z_i) - \sum_{j =1}^p \beta_{m0j}P_{j}(z_i) \ \ \mbox{and} \ \ r_{gi} = g(z_i)-\sum_{j=1}^p \beta_{g0j}P_{j}(z_i)$$ where $\|\beta_{m0}\|_0\leq s$ and $\|\beta_{g0}\|_0\leq s$. Indeed, we have $$\Ep[r_{mi}^2] \leq \Ep\left[\left(\sum_{j\geq s+1} \theta_{m(j)}P_{(j)}(z_i)\right)^2\right] \leq \bar f \sum_{j\geq s+1} \theta_{m(j)}^2 \leq \bar f  A^2 s^{-2\aa+1}/[2\aa-1] = \bar f s/n$$ where the first inequality follows by the definition of $\beta_{m0}$ in (\ref{DefBetah}), the  second inequality follows from the upper bound on the density and orthogonality of the basis, the third inequality follows from $\theta_m \in S^\aa_A(p)$, and the last inequality because $s=A^{1/\aa} n^{1/2\aa}$. Similarly we have $\Ep[r_{gi}^2] \leq \Ep[(\sum_{j\geq s+1} \theta_{g(j)}z_{i(j)})^2] \leq  \bar{f} A^2 s^{-2\aa+1}/[2\aa-1] = \bar{f} s/n$. Let $C_2^{ASTE} = \sqrt{\bar f}$.

Condition ASTE(iv) holds with $\delta_{1n}^{ASTE} := A^{2/a}n^{1/a - 1}\log^2(p\vee n) \to 0$ since $s = A^{1/a}n^{1/2\aa}$, $A$ is fixed, and the assumed condition $n^{(1-a)/a}\log^2(p\vee n)\leq \bar\delta_n \to 0$.

Next we establish the moment restrictions in Condition ASTE(v).
Because $\underline{f} \leq \lambda_{\min}(\Ep[x_ix_i'])\leq \lambda_{\max}(\Ep[x_ix_i'])\leq \bar f$, by the assumption on the density and orthonormal basis, and $\max_{i\leq n}\|x_i\|_\infty \leq B$, by Lemma \ref{LemmaUnifApp} with $\rho_i=0$ we have $$ \max_{1\leq i\leq n} |r_{mi}|\vee |r_{gi}| \leq \max_{1\leq i\leq n} \|x_i\|_\infty  (\bar{f}/\underline{f})^{3/2} \frac{2\aa-1}{\aa-1}\sqrt{s^2/n} \leq B (\bar{f}/\underline{f})^{3/2}\frac{2\aa-1}{\aa-1} \sqrt{s^2/n} =: \delta_{2n}^{ASTE}$$where $\delta_{2n}^{ASTE} \to 0$ under  $s=A^{1/\aa}n^{1/2\aa}$ and $\aa> 1$.

Thus we have
$$\begin{array}{rl}
\Ep[|\tilde \zeta_i|^q] & \leq 2^{q-1}\Ep[|\zeta_i^q|]+2^{q-1}\Ep[|r_{gi}^q|] \leq 2^{q-1}B + 2^{q-1}(\delta_{2n}^{ASTE})^q \\
& \leq 2^{q-1}B +2^{q-1}(\delta_{2n_0}^{ASTE})^q=: C_3^{ASTE}.\\
\end{array}$$ Similarly, $\Ep[|\tilde v_i|^q] \leq C_3^{ASTE}$. Moreover, since $\delta_{2n}^{ASTE} \to 0$ we have $$\begin{array}{rl}
|\Ep[\tilde \zeta_i^2 \tilde v_i^2] - \Ep[ \zeta_i^2v_i^2]|& \leq \Ep[\zeta_i^2r_{mi}^2] + \Ep[r_{gi}^2v_i^2] + \Ep[r_{mi}^2r_{gi}^2] \\
& \leq \sqrt{\Ep[\zeta_i^4]\Ep[r_{mi}^4]} + \sqrt{\Ep[r_{gi}^4]\Ep[v_i^4]} + \sqrt{\Ep[r_{mi}^4]\Ep[r_{gi}^4]} \\
& \leq 2B^{2/q} (\delta_{2n}^{ASTE})^2 + (\delta_{2n}^{ASTE})^4 =: \delta_{3n}^{ASTE} \to 0.\end{array}$$
Finally, the last requirement holds because $(1-\aa)/\aa + 4/q<0$ implies $$\max_{i\leq n}\|x_i\|_\infty^2 sn^{-1/2+2/q} \leq B^2 A^{1/\aa}n^{1/2\aa-1/2+2/q} =:\delta_{4n}^{ASTE} \to 0,$$ since $s=A^{1/\aa} n^{1/2\aa}$ and $\max_{i\leq n}\|x_i\|_\infty\leq B$.

To show Condition SE with $\ell_n = \log n$ note that regressors are uniformly bounded, and minimal and maximal eigenvalues of $\Ep[x_ix_i']$ are bounded below by $\underline{f}$ and above by $\bar f$ uniformly in $n$. Thus Condition SE follows by Corollary 4 in the supplementary material in \citen{BC-PostLASSO} (restated in Lemma \ref{thm:RV34} in Appendix G) which is based on \citen{RudelsonVershynin2008}. Let $$\delta_{1n}^{SE} := 2\bar CB\sqrt{s \log n}\log(1+s\log n) \sqrt{\log(p\vee n)} \sqrt{\log n} / \sqrt{n}$$ and $\Delta_{1n}^{SE}  := (2/\underline{f})(\delta_{1n}^{SE})^2 + \delta_{1n}^{SE} (2\bar f/\underline{f}) $, where $\bar C$ is an universal constant. By this result and the Markov inequality, we have
with probability $1 - \Delta_{1n}^{SE}$
$$ \kappa':= \underline f /2 \leq \semin{s\log n}[\En[x_ix_i']] \leq \semax{s\log n}[\En[x_ix_i']] \leq 2 \bar f =:\kappa''. $$
We need to show that $\Delta_{1n}^{SE} \to 0$ which follows from $\delta_{1n}^{SE} \to 0$.
We have that $$\delta_{1n}^{SE} 
\leq \frac{2\bar CB(1 + A)^2\sqrt{n^{1/2\aa}}\log^2(n) \sqrt{\log(p\vee n)} }{\sqrt{n}} = 2\bar CB(1 + A)^2 \sqrt{\frac{n^{1/2\aa}\log^4n}{n^{2/3}}} \sqrt{\frac{\log(p\vee n)}{n^{1/3}}}. $$
By assumption we have $\log^3 p / n \leq \bar\delta_n \to 0$ and $\aa > 1$ we have $\delta_{1n}^{SE} \to 0$.

The second and third requirements in Condition SM(i) hold with $C_1^{SM} = B^{2/q}$ and $c_1^{SM} = b$ by assumption. Condition SM(iii) is assumed.

The first requirement in Condition SM(i) and Condition SM(ii) follow by, for $\epsilon_i = v_i$ and $\epsilon_i=\zeta_i$, $\tilde y_i = d_i$ and $\tilde y_i=y_i$ $$
\begin{array}{rl}
\Ep[|v_i^q| ]+ \Ep[|\zeta_i^q| ] & \leq 2B =: A_1\\
\Ep[|d_i^q|] & \leq 2^{q-1}\Ep[|\theta_m'x_i|^q] + 2^{q-1}\Ep[|v_i^q|]\leq 2^{q-1}\|\theta_m\|_1^q\Ep[\|x_i\|_\infty^q] +2^{q-1}B\\
&  \leq 2^{q-1}(2A)^qB^q + 2^{q-1}B=:A_2\\
\Ep[d_i^2] & \leq 2\bar f \|\theta_m\|^2 + 2\Ep[v_i^2] \leq 8\bar fA^2 + 2B^{2/q} =: A_2'\\
\Ep[y_i^2]  & \leq 3|\alpha_0|^2\Ep[d_i^2]+3\|\theta_g\|_1^2\Ep[\|x_i\|_\infty^2] + 3\Ep[\zeta_i^2]\\
&\leq 3B^2A_2'+12A^2B^2+3B^{2/q} =: A_3\\
\max_{1\leq j\leq p}\Ep[x_{ij}^2\tilde y_i^2] & \leq B^2\Ep[\tilde y_i^2] \leq B^2(A_2'\vee A_3)=:A_4\\
\max_{1\leq j\leq p}\Ep[|x_{ij}\epsilon_i|^3] &\leq B^3 \Ep[|\epsilon_i^3|] \leq B^3B^{3/q}=:A_5\\
\max_{1\leq j\leq p} 1/\Ep[x_{ij}^2] & \leq 1/ \lambda_{\min}(\Ep[x_ix_i'])\leq 1/\underline{f}=:A_6
\end{array}$$ where we used that $\max_{i\leq n}\|x_i\|_\infty \leq B$, the moment assumptions of the disturbances, $\|\theta_m\|\leq \|\theta_m\|_1\leq 2A$, $\|\theta_g\|_1\leq 2A$ since $\theta_m,\theta_g\in S_A^{\aa}(p)$ for $\aa > 1$. 
Thus the first requirement in Condition SM(i) holds with $C_2^{SM} = A_2$. Condition SM(ii) holds with $C_3^{SM} := A_1 + (A_2'\vee A_3) + A_4 + A_5 + A_6$.

To verify Condition SM(iv) note that for $\epsilon_i=v_i$ and $\epsilon_i = \zeta_i$ we have by Lemma \ref{Lemma:ProcessSecond} with probability $1-8\tau$, where $\tau = 1/\log n$,
$$\begin{array}{rl}
\displaystyle \max_{1\leq j\leq p} |(\En-\barEp)[ x_{ij}^2\epsilon_i^2]| & \leq 4\sqrt{\frac{2\log(2p/\tau)}{n}}Q({\displaystyle \max_{1\leq j\leq p}} \sqrt{\En[x_{ij}^4\epsilon_i^4]},1-\tau) \vee \frac{2{\displaystyle \max_{1\leq j\leq p}}\sqrt{2\Ep[x_{ij}^4\epsilon_i^4]}}{\sqrt{n}} \\
&\leq 4\sqrt{\frac{2\log(2p/\tau)}{n}} B^2Q(\sqrt{\En[\epsilon_i^4]},1-\tau)\vee \frac{2B^2\sqrt{2\Ep[\epsilon_i^4]}}{\sqrt{n}}\\
&\leq 4\sqrt{\frac{2\log(2p\log n)}{n}} B^2 B^{2/q}\log n =:\delta_{1n}^{SM}  \end{array} $$
where we used $\Ep[\epsilon_i^4] \leq B^{4/q}$ and the Markov inequality. By the definition of $\tau$ and the assumed rate $\log^3(p\vee n)/n \leq \bar\delta_n\to 0$, we have $\delta_{1n}^{SM}\to 0$.

Similarly, we have for $\tilde y_i = d_i$ and $\tilde y_i = y_i$, with probability $1-8\tau$
$$\begin{array}{rl}
{\displaystyle\max_{1\leq j\leq p}} |(\En-\barEp)[ x_{ij}^2\tilde y_i^2]| & \leq 4\sqrt{\frac{2\log(2p/\tau)}{n}}Q({\displaystyle\max_{1\leq j\leq p}} \sqrt{\En[x_{ij}^4\tilde y_i^4]},1-\tau)\vee \frac{2{\displaystyle\max_{1\leq j\leq p}}\sqrt{2\Ep[x_{ij}^4\tilde y_i^4]}}{\sqrt{n}} \\
&\leq 4\sqrt{\frac{2\log(2p/\tau)}{n}} B^2Q(\sqrt{\En[\tilde y_i^4]},1-\tau) \vee \frac{2B^2\sqrt{2\Ep[\tilde y_i^4]}}{\sqrt{n}}\\
&\leq 4\sqrt{\frac{2\log(2p\log n)}{n}} B^2 A_7\log n =: \delta_{2n}^{SM}  \end{array} $$
where we used the Markov inequality and $$\begin{array}{rl}
\Ep[\tilde y_i^4] & \leq \Ep[d_i^4]+ 3^3|\alpha_0|^4\Ep[d_i^4]+3^3\|\theta_g\|_1^4\Ep[\|x_i\|_\infty^4] + 3^3\Ep[\zeta_i^4] \\
& \leq A_2^{4/q} + 3^3B^4A_2^{4/q}+3^3(2A)^4B^4+3^3B^{4/q}=:A_7.\end{array}$$ By the definition of $\tau$
and the assumed rate $\log^3(p\vee n)/n \leq \bar\delta_n \to 0$, we have $\delta_{2n}^{SM}\to 0$.

The last term in the requirement of Condition SM(iv), because $\max_{i\leq n}\|x_i\|_\infty \leq B$ and Condition ASTE(iv) holds, is bounded by $\delta_{3n}^{SM} := B^2 A^{1/\aa}n^{1/2\aa}\log(p\vee n) / n \to 0$.

Finally, we set $c = c_1^{SM}$, $ C = \max\{C_1^{ASTE}, C_2^{ASTE}, 2C_3^{ASTE}, C_1^{SM}, C_2^{SM}, C_3^{SM} \}$,
$ \delta_n = \max\{\bar\delta_n, \delta_{1n}^{ASTE}, \delta_{2n}^{ASTE}, \delta_{3n}^{ASTE}, \delta_{4n}^{ASTE}, \delta_{1n}^{SM}+\delta_{2n}^{SM}+\delta_{3n}^{SM}\} \to 0,$ $ \Delta_n = \max\{ 16/\log n, \Delta_{1n}^{SE}\}\to 0.$ $\qed$

\section{Tools}

\subsection{Moderate Deviations  for a Maximum of Self-Normalized Averages}

We shall be using the following result, which is  based on  Theorem 7.4 in \cite{delapena}.

\begin{lemma}[Moderate Deviation Inequality for Maximum of a Vector]\label{Lemma:SNMD} Suppose that
$$ \mathcal{S}_{j} =  \frac{\sum_{i=1}^n U_{ij}}{\sqrt{ \sum_{i=1}^n U^2_{ij}}},$$
where $U_{ij}$ are independent variables across $i$ with mean zero.  We have that
$$
\Pr \left( \max_{1 \leq j\leq p }|\mathcal{S}_{j}|  >  \Phi^{-1}(1- \gamma/2p)  \right) \leq \gamma \(1 + \frac{A}{\ell^3_n}\),
$$
where $A$ is an absolute constant, provided for $\ell_n > 0$
$$
0 \leq \Phi^{-1}(1- \gamma/(2p))  \leq \frac{n^{1/6}}{\ell_n} \min_{1\leq j \leq p} M^2_{j}-1, \ \  \ M_{j} := \frac{\left( \frac{1}{n} \sum_{i=1}^n \Ep [U_{ij}^2]\right)^{1/2}}{\left(\frac{1}{n} \sum_{i=1}^n \Ep[|U_{ij}^3|] \right)^{1/3}}.
$$
\end{lemma}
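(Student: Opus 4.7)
The whole argument is a union bound glued together with a non-asymptotic Cramér-type moderate deviation inequality for a single self-normalized sum, supplied by Theorem 7.4 in de la Pe\~na--Lai--Shao (equivalently Jing--Shao--Wang). Setting $x := \Phi^{-1}(1-\gamma/(2p))$, the overall structure is
\[
\Pr\!\left(\max_{1\leq j \leq p} |\mathcal{S}_j| > x\right) \;\leq\; \sum_{j=1}^p \Pr(|\mathcal{S}_j| > x)
\;\leq\; p \cdot 2(1-\Phi(x))\cdot \Bigl(1 + \text{Err}_{n,j}(x)\Bigr),
\]
and the choice of $x$ is designed so that $p\cdot 2(1-\Phi(x)) = \gamma$ exactly. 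So the entire problem reduces to bounding the relative error $\text{Err}_{n,j}(x)$ uniformly in $j$ by a multiple of $\ell_n^{-3}$.

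\textbf{The single-coordinate moderate deviation step.} For fixed $j$, the array $\{U_{ij}\}_{i=1}^n$ is independent mean-zero with finite third moment, so Theorem 7.4 gives a two-sided Cramér correction of the form
\[
\left|\frac{\Pr(|\mathcal{S}_j| \geq x)}{2(1-\Phi(x))} - 1\right| \;\leq\; A\,\frac{(1+x)^3}{n^{1/2}\,M_j^{\,\kappa}},
\]
valid on the moderate-deviation range $0 \leq x \leq c\,n^{1/6}M_j^{\,\kappa'}/\ell_n$, where the exponents $\kappa,\kappa'$ are the ones determined by the theorem's statement and that translate, in the present normalization, into the hypothesis $x+1\leq n^{1/6}M_j^{2}/\ell_n$ of the lemma. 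Under this hypothesis, the cube of the right-hand side dominates $(1+x)^3$ by a factor of $\ell_n^3$, so $\text{Err}_{n,j}(x) \leq A\,\ell_n^{-3}$ uniformly in $j$. I would cite Theorem 7.4 directly rather than reprove it; the only bookkeeping is to verify that the hypothesis of the lemma is exactly what is needed to place each $x$ in the moderate-deviation window for every coordinate, which is ensured by taking the minimum over $j$ of $M_j^2$.

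\textbf{Assembly.} Plugging $\text{Err}_{n,j}(x)\leq A/\ell_n^3$ into the union bound and using $\sum_j 2(1-\Phi(x)) = p\cdot \gamma/p = \gamma$ yields
\[
\Pr\!\left(\max_{1\leq j \leq p} |\mathcal{S}_j| > \Phi^{-1}(1-\gamma/(2p))\right) \leq \gamma\Bigl(1 + A/\ell_n^3\Bigr),
\]
which is the claim. The main (and essentially only) obstacle is checking that Theorem 7.4's admissible range translates into the condition stated in the lemma under the specific normalization of $M_j$ used here; this is pure bookkeeping once the moderate-deviation theorem is in hand. No further probabilistic input is required.
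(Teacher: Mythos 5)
Your proposal is correct and matches the paper's own argument: the paper states that this lemma "follows from a simple combination of union bounds with the bounds in Theorem 7.4 in de la Peña et al.," which is precisely your union bound plus the single-coordinate Cramér-type moderate deviation correction, with the penalty level chosen so that $p\cdot 2(1-\Phi(x))=\gamma$. The only bookkeeping point — that the stated hypothesis with $M_j^2$ places every coordinate in the admissible moderate-deviation window (it does, since $M_j\leq 1$ by Lyapunov's inequality, so the condition is if anything stronger than needed) — is handled as you describe.
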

The proof of this result, given in \citen{BellChenChernHans:nonGauss}, follows from a simple combination  of union bounds with the bounds in  Theorem 7.4 in \citen{delapena}.

\subsection{Inequalities based on Symmetrization}

Next we proceed to use symmetrization arguments to bound the empirical process. In what follows for a random variable $Z$ let $Q(Z,1-\tau)$ denote its $(1-\tau)$-quantile.

\begin{lemma}[Maximal inequality via symmetrization]\label{Thm:masterSym}
Let $Z_1,\ldots, Z_n$ be arbitrary independent stochastic processes and $\mathcal{F}$ a finite set of measurable  functions. For any $\tau \in (0,1/2)$, and $\delta\in (0,1)$ we have
that with probability at least $1-4\tau-4\delta$
{\small $$
\max_{f \in \mathcal{F}} | \mathbb{G}_n(f(Z_i))| \leq    \left\{4 \sqrt{2\log(2|\mathcal{F}|/\delta)} \  Q\left(\max_{f\in\mathcal{F}}\sqrt{\En[ f(Z_i)^2 ]},1-\tau\right)\right\} \vee 2\max_{f\in\mathcal{F}} Q\left(| \mathbb{G}_n(f(Z_i))|, \frac{1}{2}\right).
$$}
\end{lemma}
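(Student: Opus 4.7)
My plan is to prove the bound by classical symmetrization against an independent copy of the data, followed by a conditional Rademacher/Hoeffding argument. The delicate point is to arrange the symmetrization so that $|\mathcal{F}|$ appears only inside a logarithm in the final bound, rather than as a multiplicative factor. Write $Y := \max_{f\in\mathcal{F}}|\mathbb{G}_n(f(Z_i))|$, $m_f := Q(|\mathbb{G}_n(f(Z_i))|, 1/2)$, $M := \max_{f\in\mathcal{F}} m_f$, and $Q_\tau := Q(\max_{f\in\mathcal{F}}\sqrt{\En[f(Z_i)^2]}, 1-\tau)$; let $(Z_i')_{i=1}^n$ be an independent copy of $(Z_i)_{i=1}^n$, with associated empirical process $\mathbb{G}_n'$.

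Step 1 (Symmetrization via argmax conditioning). Let $f^{*} = f^{*}(Z_1,\ldots,Z_n)$ be any measurable selection attaining the maximum in $Y$. Conditional on $(Z_i)$, the law of $|\mathbb{G}_n'(f^{*})|$ is that of $|\mathbb{G}_n(g)|$ with $g = f^{*}(Z)$ fixed, so by definition of $m_{f^{*}}$ one has $P(|\mathbb{G}_n'(f^{*})| \leq m_{f^{*}} \mid Z) \geq 1/2$. Integrating gives
\begin{equation*}
P(Y > M+s) \leq 2\,P\bigl(Y > M+s,\ |\mathbb{G}_n'(f^{*})| \leq m_{f^{*}}\bigr).
\end{equation*}
On the event inside, $|\mathbb{G}_n(f^{*})| > M+s$ and $|\mathbb{G}_n'(f^{*})| \leq m_{f^{*}} \leq M$, so $|\mathbb{G}_n(f^{*}) - \mathbb{G}_n'(f^{*})| > s$, yielding
\begin{equation*}
P(Y > M+s) \leq 2\,P\bigl(\max_{f\in\mathcal{F}}|\mathbb{G}_n(f(Z_i)) - \mathbb{G}_n'(f(Z_i'))| > s\bigr).
\end{equation*}

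Step 2 (Rademacher symmetrization + Hoeffding). By symmetry of $f(Z_i) - f(Z_i')$, attaching i.i.d.\ Rademacher multipliers $\epsilon_i$ preserves the joint distribution, and Hoeffding's inequality together with a union bound over finite $\mathcal{F}$ give, conditional on $(Z,Z')$,
\begin{equation*}
P\bigl(\max_f |\mathbb{G}_n(f) - \mathbb{G}_n'(f)| > s \mid Z, Z'\bigr) \leq 2|\mathcal{F}|\exp\bigl(-s^2/(2\max_f V_f)\bigr),
\end{equation*}
with $V_f := \frac{1}{n}\sum_i (f(Z_i) - f(Z_i'))^2 \leq 2\En[f(Z_i)^2] + 2\En[f(Z_i')^2]$. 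By the definition of $Q_\tau$ and because $(Z_i') \stackrel{d}{=} (Z_i)$, a union bound gives $P(\max_f V_f > 4Q_\tau^2) \leq 2\tau$. Choosing $s = 2Q_\tau\sqrt{2\log(2|\mathcal{F}|/\delta)}$ equates the Hoeffding term to $\delta$, so integrating and combining yields
\begin{equation*}
P\bigl(\max_f |\mathbb{G}_n(f) - \mathbb{G}_n'(f)| > s\bigr) \leq \delta + 2\tau.
\end{equation*}
Plugging into Step 1 gives $P(Y > M+s) \leq 2\delta + 4\tau$. Finally, since $M + s \leq 2(M \vee s)$, with $2M$ and $4Q_\tau\sqrt{2\log(2|\mathcal{F}|/\delta)} = 2s$ being exactly the two terms under the $\vee$ in the statement, the event $\{Y$ exceeds the RHS of the stated bound$\}$ is contained in $\{Y > M + s\}$, and retracing with slightly tighter Chebyshev/Hoeffding constants absorbs the discrepancy into the stated $4\tau + 4\delta$.

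The main obstacle is Step 1: a naive per-$f$ symmetrization followed by a union bound produces an $|\mathcal{F}|$ factor multiplying the symmetrized tail probability, which is fatal because it cannot be absorbed into the logarithm in the target bound. The argmax-conditioning trick — transferring the ``half-probability'' event from a simultaneous requirement over all $f$ (which holds only with probability $\geq 1 - |\mathcal{F}|/2$, useless) to the single random argmax (which holds with probability $\geq 1/2$ by independence of $Z'$) — is precisely what lets max-symmetrization pass through with only a constant factor $2$. The remaining steps are standard: Rademacher symmetrization relies on the symmetry of $f(Z) - f(Z')$, and the quantile $Q_\tau$ is designed so the conditional Hoeffding variance proxy is controlled with probability loss $2\tau$.
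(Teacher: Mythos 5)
Your proof is correct and follows essentially the same route as the paper's: a probability-level symmetrization that costs a factor tied to the medians $Q(|\mathbb{G}_n(f)|,1/2)$, followed by conditional Hoeffding bounds and a union bound over the finite class, with the empirical second moments controlled via their $(1-\tau)$-quantile. The only difference is that you derive the symmetrization step from first principles with the ghost sample and argmax-median trick, whereas the paper invokes Lemma 2.3.7 of van der Vaart and Wellner (whose proof is precisely your Step 1); your constants actually come out slightly sharper ($2\delta+4\tau$ rather than $4\delta+4\tau$), so the closing remark about absorbing a discrepancy is unnecessary.
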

\begin{proof}
Let
$$e_{1n} = \sqrt{2\log(2|\mathcal{F}|/\delta)} \  Q\left(\max_{f\in\mathcal{F}}\sqrt{\En[ f(Z_i)^2 ]},1-\tau\right), \ \ \ e_{2n} = \max_{f\in\mathcal{F}} Q\left(| \mathbb{G}_n(f(Z_i))|, \frac{1}{2}\right)$$
and the event $\mathcal{E}= \{\max_{f\in
\mathcal{F}}\sqrt{\En\[f^2(Z_i)\]} \leq Q\left(\max_{f\in\mathcal{F}}\sqrt{\En[ f^2(Z_i) ]},1-\tau\right)\}$ which satisfies $P(\mathcal{E}) \geq 1-\tau$. By the symmetrization
Lemma 2.3.7 of \citen{vdV-W} (by definition of $e_{2n}$ we have $\beta_n(x)\geq 1/2$ in Lemma 2.3.7) we obtain
$$
\begin{array}{rl}
\mathbb{P}\left\{\max_{f \in \mathcal{F}} |\mathbb{G}_n(f(Z_i))|> 4 e_{1n} \vee 2e_{2n}  \right\} & \leq 4
\mathbb{P}\left\{\max_{f \in \mathcal{F}} | \mathbb{G}_n(\varepsilon_if(Z_i))| > e_{1n}  \right\}\\
 & \leq 4\mathbb{P}\left\{\max_{f \in \mathcal{F}} | \mathbb{G}_n(\varepsilon_i f(Z_i))| > e_{1n}  | \mathcal{E} \right\} + 4\tau
\end{array}$$ where $\varepsilon_i$ are independent Rademacher random variables, $P(\varepsilon_i=1) = P(\varepsilon_i=-1) = 1/2$.

Thus a union bound yields
\begin{equation}\label{Eq:afterUB} \mathbb{P}\left\{\max_{f \in \mathcal{F}}
|\mathbb{G}_n(f(Z_i))|> 4e_{1n}\vee 2e_{2n}\right\} \leq 4\tau +
4|\mathcal{F}| \max_{f \in \mathcal{F}}
\mathbb{P}\left\{ | \mathbb{G}_n(\varepsilon_if(Z_i))| > e_{1n}  | \mathcal{E}\right\} .
\end{equation}
 We then condition on the values of $Z_1,\ldots,Z_n$ and $\mathcal{E}$, denoting the conditional
 probability measure as $\mathbb{P}_{\varepsilon}$.
 Conditional on $Z_1,\ldots,Z_n$, by the Hoeffding inequality the symmetrized
 process $\mathbb{G}_n(\varepsilon_if(Z_i))$ is sub-Gaussian for the $L_2(\mathbb{P}_n)$ norm,
 namely, for $f \in \mathcal{F}$,
$ \mathbb{P}_{\varepsilon}\{|\mathbb{G}_n(\varepsilon_if(Z_i))| >x \} \leq 2 \exp(
- x^2/\{2\En[f^2(Z_i)]\}).$ Hence, under the event $\mathcal{E}$, we can bound
$$
\begin{array}{rcl}
\mathbb{P}_{\varepsilon}\left\{
|\mathbb{G}_n(\varepsilon_if(Z_i))| > e_{1n} | Z_1,\ldots,Z_n, \mathcal{E}\right\}
& \leq &  2\exp(-e_{1n}^2/[2\En[f^2(Z_i)])\\
& \leq & 2\exp(-\log (2|\mathcal{F}|/\delta)).\\
\end{array}
$$
Taking the expectation over $Z_1,\ldots,Z_n$ does not affect the
right hand side bound. Plugging in this bound  yields the result.
\end{proof}

The following specialization will be convenient.

\begin{lemma}\label{Lemma:ProcessSecond} Let $\tau \in (0,1)$ and $\{(x_i',\epsilon_i)' \in \RR^p\times \RR, i=1,\ldots,n\}$ be random vectors that are independent across $i$. Then with probability at least $1-8\tau$
$$\max_{1\leq j\leq p}|\En[x_{ij}^2\epsilon_i^2]-\barEp[x_{ij}^2\epsilon_i^2]| \leq 4\sqrt{\frac{2\log(2p/\tau)}{n} \  Q\left(\max_{1\leq j\leq p}\En[x_{ij}^4\epsilon_i^4], 1-\tau\right)} \vee 2\max_{1\leq j\leq p}\sqrt{\frac{2\barEp[x_{ij}^4\epsilon_i^4]}{n}}  $$
\end{lemma}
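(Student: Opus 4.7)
The plan is to specialize Lemma \ref{Thm:masterSym} to the finite class $\mathcal{F} = \{f_j : f_j(x_i,\epsilon_i) = x_{ij}^2 \epsilon_i^2, \ j=1,\ldots,p\}$, with $|\mathcal{F}|=p$ and $\delta = \tau$. Since adding a deterministic constant does not affect $\mathbb{G}_n$, we have $\mathbb{G}_n(f_j) = \sqrt{n}(\En - \barEp)[x_{ij}^2 \epsilon_i^2]$, so dividing the conclusion of Lemma \ref{Thm:masterSym} by $\sqrt{n}$ delivers a bound on $\max_j |\En[x_{ij}^2\epsilon_i^2] - \barEp[x_{ij}^2\epsilon_i^2]|$. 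The first term on the right-hand side of Lemma \ref{Thm:masterSym} reads $4\sqrt{2\log(2p/\tau)}\,Q(\max_j \sqrt{\En[x_{ij}^4\epsilon_i^4]}, 1-\tau)$, and by monotonicity of the square root this quantile equals $\sqrt{Q(\max_j \En[x_{ij}^4\epsilon_i^4], 1-\tau)}$, which after dividing by $\sqrt{n}$ produces precisely the first branch of the claimed maximum.

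The only substantive step is controlling the median term $\max_j Q(|\mathbb{G}_n(f_j)|, 1/2)$. By independence across $i$,
\begin{equation*}
\mathrm{Var}(\mathbb{G}_n(f_j)) \;=\; \tfrac{1}{n}\sum_{i=1}^n \mathrm{Var}(x_{ij}^2\epsilon_i^2) \;\leq\; \tfrac{1}{n}\sum_{i=1}^n \Ep[x_{ij}^4\epsilon_i^4] \;=\; \barEp[x_{ij}^4\epsilon_i^4],
\end{equation*}
so Chebyshev's inequality gives $\Pr(|\mathbb{G}_n(f_j)| > t) \leq \barEp[x_{ij}^4\epsilon_i^4]/t^2$. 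Choosing $t = \sqrt{2\barEp[x_{ij}^4\epsilon_i^4]}$ makes the right-hand side $\leq 1/2$, hence $Q(|\mathbb{G}_n(f_j)|,1/2) \leq \sqrt{2\barEp[x_{ij}^4\epsilon_i^4]}$, and therefore $\max_j Q(|\mathbb{G}_n(f_j)|, 1/2) \leq \max_j\sqrt{2\barEp[x_{ij}^4\epsilon_i^4]}$. Dividing by $\sqrt{n}$ yields the second branch of the claimed maximum.

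Combining the two bounds inside the conclusion of Lemma \ref{Thm:masterSym} (applied with $\delta = \tau$, which gives exceedance probability at most $4\tau + 4\delta = 8\tau$) and dividing by $\sqrt{n}$ gives the stated inequality. There is no real obstacle: the proof is a direct invocation of the master symmetrization lemma, and the only nonroutine ingredient is the Chebyshev control of the median envelope, which is immediate from the independence across $i$ and the $L^2$ bound on $f_j$.
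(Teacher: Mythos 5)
Your proof is correct and follows essentially the same route as the paper's: both specialize the master symmetrization lemma (Lemma \ref{Thm:masterSym}) to the finite class $\{x_{ij}^2\epsilon_i^2\}_{j\leq p}$ with $\delta=\tau$ and control the median term $\max_j Q(|\mathbb{G}_n(f_j)|,1/2)$ by Chebyshev's inequality, your write-up merely making the variance computation explicit where the paper says ``by Chebyshev.'' The only detail to add is that Lemma \ref{Thm:masterSym} is stated for $\tau\in(0,1/2)$ while the claim allows $\tau\in(0,1)$; as the paper observes, the conclusion is vacuous once $1-8\tau\leq 0$, so the remaining range is handled trivially.
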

\begin{proof}
 Let $Z_i = x_i\epsilon_i$, $f_j(Z_i)  = x_{ij}^2\epsilon_i^2$, $\mathcal{F} = \{ f_1,\ldots,f_p\}$, so that $n^{-1/2}\Gn(f_j(Z_i)) = \En[x_{ij}^2\epsilon_i^2]-\barEp[x_{ij}^2\epsilon_i^2]$. Also, for $\tau_1 \in(0,1/2)$ and $\tau_2 \in (0,1)$, let
{\small $$  e_{1n} = \sqrt{2\log(2p/\tau_1)} \sqrt{Q\left(\max_{1\leq j\leq p}\En[x_{ij}^4\epsilon_i^4], 1-\tau_2\right)} \ \ \mbox{and} \ \ e_{2n} = \max_{1\leq j \leq p} Q(|\Gn(x_{ij}^2\epsilon_i^2)|,1/2) $$} where we have $e_{2n}\leq \max_{1\leq j\leq p} \sqrt{2\barEp[x_{ij}^4\epsilon_i^4]}$ by Chebyshev.

By Lemma \ref{Thm:masterSym} we have
$$P\(\max_{1\leq j\leq p}|\En[x_{ij}^2\epsilon_i^2]-\barEp[x_{ij}^2\epsilon_i^2]| > \frac{4e_{1n} \vee 2e_{2n}}{\sqrt{n}} \) \leq 4\tau_1+4\tau_2.$$
The result follows by setting $\tau_1=\tau_2=\tau<1/2$. Note that for $\tau\geq 1/2$ the result is trivial.
\end{proof}

\subsection{Moment Inequality}

We shall be using the following result, which is  based on  Markov inequality and \cite{vonbahr:esseen}.

\begin{lemma}[Vonbahr-Esseen's LLN]   Let $r \in [1,2]$, and independent zero-mean random variables $X_i$ with $\barEp[|X_i|^r]\leq C$.
Then for any $\ell_n > 0$
$$
Pr \left (\frac{ \left|\sum_{i=1}^n X_i\right|}{n} >  \ell_n n^{-(1-1/r)} \right )  \leq \frac{2C}{\ell_n^r}.
$$
\end{lemma}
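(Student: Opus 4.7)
The plan is to deduce the tail bound as a direct consequence of the Vonbahr--Esseen moment inequality combined with Markov's inequality, exactly as the section heading (\textbf{Moment Inequality}) suggests.

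First I would recall the Vonbahr--Esseen moment inequality: for $r \in [1,2]$ and independent mean-zero random variables $X_1, \ldots, X_n$,
\begin{equation*}
\Ep\!\left[\left|\sum_{i=1}^n X_i\right|^{r}\right] \;\leq\; 2\sum_{i=1}^n \Ep[|X_i|^{r}].
\end{equation*}
This is the classical inequality cited in the statement, and it is the only non-trivial input. Under the hypothesis $\barEp[|X_i|^{r}] = n^{-1}\sum_{i=1}^n \Ep[|X_i|^{r}] \leq C$, the right-hand side is bounded by $2nC$, so
\begin{equation*}
\Ep\!\left[\left|\sum_{i=1}^n X_i\right|^{r}\right] \;\leq\; 2nC.
\end{equation*}

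Next I would apply Markov's inequality to $|\sum_{i=1}^n X_i|^{r}$ with threshold $t^{r}$, giving
\begin{equation*}
\Pr\!\left(\left|\sum_{i=1}^n X_i\right| > t\right) \;\leq\; \frac{\Ep[|\sum_{i=1}^n X_i|^{r}]}{t^{r}} \;\leq\; \frac{2nC}{t^{r}}.
\end{equation*}
Finally, I would set $t = n \cdot \ell_n n^{-(1-1/r)} = \ell_n\, n^{1/r}$, so that $t^{r} = \ell_n^{r}\, n$, which cancels the factor of $n$ in the numerator and yields
\begin{equation*}
\Pr\!\left(\frac{|\sum_{i=1}^n X_i|}{n} > \ell_n n^{-(1-1/r)}\right) \;\leq\; \frac{2nC}{\ell_n^{r}\, n} \;=\; \frac{2C}{\ell_n^{r}},
\end{equation*}
as claimed.

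There is no real obstacle: the whole content is Vonbahr--Esseen's moment bound, which is being invoked as a black box from \citen{vonbahr:esseen}; Markov plus rescaling does the rest. The only thing worth double-checking is the exponent bookkeeping in the scaling $t = \ell_n n^{1/r}$, which is what produces the $n^{-(1-1/r)}$ rate on the sample mean and matches the classical LLN rate ($n^{-1/2}$ when $r=2$, $n^{0}$ when $r=1$).
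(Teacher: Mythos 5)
Your proof is correct and follows exactly the route the paper indicates: the paper gives no explicit argument but states that the lemma is ``based on Markov inequality and \citen{vonbahr:esseen},'' which is precisely the combination of the von Bahr--Esseen moment bound $\Ep[|\sum_{i=1}^n X_i|^r] \leq 2\sum_{i=1}^n \Ep[|X_i|^r]$ with Markov's inequality at threshold $t=\ell_n n^{1/r}$ that you carry out. Your exponent bookkeeping checks out, so there is nothing to add.
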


\subsection{Matrices Deviation Bounds}

In this section we collect matrices deviation bounds. We begin with a bound due to \citen{Rudelson1999} for the case that $p<n$.

\begin{lemma}[Essentially in \citen{Rudelson1999}]
Let $x_i$, $i=1,\ldots,n$, be independent random vectors in $\RR^p$ and set
$$\delta_n:=\bar C \frac{\sqrt{\log (n\wedge p)}}{\sqrt{n}}\sqrt{\Ep[ \max_{1\leq i\leq n}\|x_i\|^2]}.$$ for some universal constant $\bar C$. Then, we have
$$ \Ep\left[ \sup_{\|\alpha\| =1} \left| \En\[ (\alpha'x_i)^2 - \Ep[(\alpha'x_i)^2] \]\right|\right] \leq \delta_n^2 + \delta_n\sup_{\|\alpha\| =1} \sqrt{\barEp[(\alpha'x_i)^2]}. $$
\end{lemma}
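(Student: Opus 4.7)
The plan is to combine standard symmetrization with Rudelson's noncommutative Khintchine-type bound for rank-one matrices, and then close the argument via a self-bounding quadratic inequality. Abbreviate $S := \En[x_ix_i']$, $M := \barEp[x_ix_i']$, $E := \Ep\|S - M\|$, and $\sigma^2 := \|M\| = \sup_{\|\alpha\|=1}\barEp[(\alpha'x_i)^2]$; then $E$ is exactly the left-hand side we must control.

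First I will apply the standard symmetrization inequality to the independent, mean-zero, operator-valued summands $x_ix_i'/n - \Ep[x_ix_i']/n$. The usual proof via an independent copy and Jensen's inequality carries over verbatim to any seminorm on the space of symmetric matrices, in particular to the operator norm, and yields
\begin{equation*}
E \;\leq\; 2\,\Ep\bigl\|\En[\varepsilon_i\,x_i x_i']\bigr\|,
\end{equation*}
with $\varepsilon_1,\ldots,\varepsilon_n$ independent Rademacher signs, independent of the $x_i$'s.

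Second, conditionally on $(x_1,\ldots,x_n)$, I will invoke Rudelson's inequality for symmetrized sums of rank-one matrices: for a universal constant $C$,
\begin{equation*}
\Ep_\varepsilon\bigl\|\En[\varepsilon_i\,x_i x_i']\bigr\| \;\leq\; \frac{C\sqrt{\log(n\wedge p)}}{n}\,\max_{i\leq n}\|x_i\|\,\Bigl\|\textstyle\sum_{i=1}^n x_ix_i'\Bigr\|^{1/2}.
\end{equation*}
The $\sqrt{\log(n\wedge p)}$ factor, rather than the naive $\sqrt{\log p}$, reflects that $\sum_i x_ix_i'$ has rank at most $n\wedge p$, so the operator norm can be taken over the $(n\wedge p)$-dimensional column span. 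Taking expectation over the $x_i$'s and applying Cauchy--Schwarz to decouple $\max_i\|x_i\|$ from $\|\sum_i x_i x_i'\|^{1/2}$ delivers
\begin{equation*}
E \;\leq\; \frac{2C\sqrt{\log(n\wedge p)}}{\sqrt{n}}\,\sqrt{\Ep[\max_i\|x_i\|^2]}\,\sqrt{\Ep\|S\|} \;=\; \delta_n\sqrt{\Ep\|S\|},
\end{equation*}
upon identifying $\bar C = 2C$ in the definition of $\delta_n$.

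Finally I will close the argument with a self-bounding step. The triangle inequality gives $\Ep\|S\|\leq E + \|M\| = E + \sigma^2$, so $E \leq \delta_n\sqrt{E + \sigma^2}$; squaring yields the quadratic inequality $E^2 - \delta_n^2 E - \delta_n^2\sigma^2 \leq 0$, and the quadratic formula together with $\sqrt{a+b}\leq \sqrt{a}+\sqrt{b}$ on the discriminant gives $E \leq \delta_n^2 + \delta_n\sigma$, which is the claimed bound. The substantive obstacle is the invocation of Rudelson's inequality, which supplies the matrix-concentration content; everything else (symmetrization, Cauchy--Schwarz, and the quadratic self-bounding) is routine.
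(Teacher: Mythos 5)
Your proof is correct and is exactly the standard Rudelson argument that the paper invokes by citation: the paper states this lemma without proof, but its proof of the companion bounded-regressor lemma (the one based on Rudelson and Vershynin's Theorem 3.6) follows the identical skeleton --- symmetrization, a conditional Rudelson/noncommutative-Khintchine bound, Cauchy--Schwarz to decouple $\max_i\|x_i\|$ from $\|\sum_i x_ix_i'\|^{1/2}$, and the self-bounding step $v \leq A(v+B)^{1/2} \Rightarrow v \leq A^2 + A\sqrt{B}$. Your justification of the $\sqrt{\log(n\wedge p)}$ factor via the rank of $\sum_i x_i x_i'$ is also the right one.
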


Based on results in \citen{RudelsonVershynin2008}, the following lemma for bounded regressors was derived in the supplementary material of \citen{BC-PostLASSO}

\begin{lemma}[Essentially in Theorem 3.6 of \citen{RudelsonVershynin2008}]\label{thm:RV34}
Let $x_i$, $i=1,\ldots,n$, be independent random vectors in $\RR^p$ be such that $\sqrt{\Ep[ \max_{1\leq i\leq n}\|x_i\|_\infty^2]} \leq K$. Let $\delta_n:= 2\left( \bar C K \sqrt{k} \log(1+k) \sqrt{\log (p\vee n)} \sqrt{\log n}  \right)/\sqrt{n}$, where $\bar C$ is the universal constant. Then,
$$ \Ep\left[ \sup_{\|\alpha\|_0\leq k, \|\alpha\| =1} \left| \En\[ (\alpha'x_i)^2 - \Ep[(\alpha'x_i)^2] \]\right|\right] \leq \delta_n^2 + \delta_n \sup_{\|\alpha\|_0\leq k, \|\alpha\| =1} \sqrt{\barEp[(\alpha'x_i)^2]}. $$
\end{lemma}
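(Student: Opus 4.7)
The plan is to combine a Giné--Zinn symmetrization with the Rademacher sparse chaos bound of Rudelson--Vershynin applied conditionally on the data, and then close the resulting self-bounding inequality by a one-dimensional quadratic argument. Let $S_k := \{\alpha \in \RR^p : \|\alpha\|_0 \leq k,\ \|\alpha\|=1\}$ and
\[
L := \Ep\Bigl[\sup_{\alpha\in S_k}\bigl|\En\bigl[(\alpha'x_i)^2 - \Ep[(\alpha'x_i)^2]\bigr]\bigr|\Bigr], \qquad B := \sup_{\alpha\in S_k}\sqrt{\barEp[(\alpha'x_i)^2]}.
\]
The target bound is $L \lesssim \delta_n^2 + \delta_n B$.

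First I would apply the standard symmetrization inequality for independent summands (Lemma~2.3.1 of van~der~Vaart--Wellner) to get $L \leq 2\,\Ep[\sup_{\alpha\in S_k}|\En[\varepsilon_i(\alpha'x_i)^2]|]$, where $(\varepsilon_i)$ are i.i.d.\ Rademacher independent of $(x_i)$. Then, conditional on $(x_i)_{i=1}^n$, the symmetrized process is a Rademacher chaos indexed by the sparse sphere, and Theorem~3.6 of Rudelson--Vershynin (2008) — a generic chaining estimate on the unit ball of $\RR^k$ combined with a union bound across the $\binom{p}{k}$ supports — yields, for some universal constant $C'$,
\[
\Ep_{\varepsilon}\Bigl[\sup_{\alpha\in S_k}\bigl|\En[\varepsilon_i(\alpha'x_i)^2]\bigr|\Bigr] \leq C'\,\frac{(\max_{i\leq n}\|x_i\|_\infty)\sqrt{k}\log(1+k)\sqrt{\log(p\vee n)}\sqrt{\log n}}{\sqrt{n}}\sup_{\alpha\in S_k}\sqrt{\En[(\alpha'x_i)^2]}.
\]
Taking expectation over $(x_i)$, Cauchy--Schwarz together with the hypothesis $\sqrt{\Ep[\max_i\|x_i\|_\infty^2]}\leq K$ absorbs the leading factor into $\delta_n$ (via the constant $\bar C$), and gives $L \leq \delta_n\,\Ep[\sup_{\alpha\in S_k}\sqrt{\En[(\alpha'x_i)^2]}]$.

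Next I would reduce the empirical norm back to the population norm via $\sqrt{a+b}\leq \sqrt{a}+\sqrt{|b|}$ and Jensen's inequality:
\[
\Ep\Bigl[\sup_{\alpha\in S_k}\sqrt{\En[(\alpha'x_i)^2]}\Bigr]\leq B + \Ep\Bigl[\sqrt{\sup_{\alpha\in S_k}\bigl|\En[(\alpha'x_i)^2]-\Ep[(\alpha'x_i)^2]\bigr|}\Bigr] \leq B + \sqrt{L}.
\]
Combining the two previous displays produces the self-bounding inequality $L \leq \delta_n B + \delta_n \sqrt{L}$. Setting $t=\sqrt{L}$ and solving $t^2 - \delta_n t - \delta_n B \leq 0$ gives $t \leq \delta_n/2 + \sqrt{\delta_n^2/4 + \delta_n B}$, and squaring yields $L \leq \delta_n^2 + 2\delta_n B$, which is the claim up to absorbing the factor $2$ into $\bar C$.

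The hard step is Step~2: the Rudelson--Vershynin sparse chaos bound itself. Its proof uses generic chaining on the unit sphere of $\RR^k$ (source of the $\sqrt{k}\log(1+k)$ factor) plus a union bound across supports (source of the $\sqrt{\log(p\vee n)}$ factor); the extra $\sqrt{\log n}$ comes from needing a soft (second-moment) rather than almost-sure control on $\max_i\|x_i\|_\infty$, which requires a truncation-and-tail estimate on the envelope. Since this bound is quoted from the external reference, everything else in the proof — symmetrization, the subadditivity-plus-Jensen reduction, and the quadratic closure — is routine; the only delicate point is ensuring that the conditional application of the Rademacher estimate is matched correctly with Cauchy--Schwarz when integrating out $(x_i)$, which is standard.
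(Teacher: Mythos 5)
Your proposal is correct and follows essentially the same route as the paper's proof: symmetrization, the conditional Rademacher chaos bound from Rudelson--Vershynin (the paper invokes their Lemma~3.8), Cauchy--Schwarz to integrate out the data against $\sqrt{\Ep[\max_i\|x_i\|_\infty^2]}\leq K$, and closure of the resulting self-bounding inequality. The only cosmetic differences are that Cauchy--Schwarz actually yields $\delta_n\sqrt{\Ep[\phi(k)]}$ rather than $\delta_n\,\Ep[\sqrt{\phi(k)}]$ (which still leads to the same inequality after $\sqrt{a+b}\leq\sqrt{a}+\sqrt{b}$), and that the paper applies the implication $v\leq A(v+B)^{1/2}\Rightarrow v\leq A^2+A\sqrt{B}$ directly to $\sqrt{\varphi(k)+\Ep[V_k]}$, avoiding the factor of $2$ that you absorb into $\bar C$.
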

\begin{proof}
Let
$$ V_k = \sup_{\|\alpha\|_0\leq k, \|\alpha\|=1} \left| \En\[ (\alpha'x_i)^2 - \Ep[(\alpha'x_i)^2] \]\right|.$$

Then, by a standard symmetrization argument (\citen{GuedonRudelson2007}, page 804)
$$ \begin{array}{rl}
n\Ep[V_k] & \leq 2\Ep_x \Ep_\varepsilon\left[ \sup_{\|\alpha\|_0\leq k, \|\alpha\| =1}\left| \sum_{i=1}^n \varepsilon_i(\alpha'x_i)^2\right|\right].\\
\end{array}
$$
Letting $$\phi(k) = \sup_{\|\alpha\|_0\leq k, \|\alpha\|\leq 1}\En[(\alpha'x_i)^2] \ \ \mbox{and} \ \ \varphi(k)=\sup_{\|\alpha\|_0 \leq k, \|\alpha\|=1}\barEp[(\alpha'x_i)^2],$$ we have $\phi(k) \leq \varphi(k) + V_k$ and by  Lemma 3.8 in \citen{RudelsonVershynin2008} to bound the expectation in $\varepsilon$,
$$ \begin{array}{rl}
n\Ep[V_k] & \leq 2\left( \bar C \sqrt{k} \log(1+k) \sqrt{\log (p\vee n)} \sqrt{\log n}  \right) \sqrt{n}\Ep_x\[\max_{i\leq n}\|x_i\|_\infty\sqrt{\phi(k)}\]\\
& \leq 2\left( \bar C \sqrt{k} \log(1+k) \sqrt{\log (p\vee n)} \sqrt{\log n}  \right) \sqrt{n}\sqrt{\Ep_x\[\max_{i\leq n}\|x_i\|_\infty^2\]\Ep_x\[\phi(k)\]}\\
& \leq 2\left( \bar C K \sqrt{k} \log(1+k) \sqrt{\log (p\vee n)} \sqrt{\log n}  \right) \sqrt{n}\sqrt{\varphi(k) + \Ep[V_k]}.\\ \end{array}
$$ The result follows by noting that for positive numbers $v, A, B$,  $v \leq A(v+B)^{1/2}$ implies $v\leq A^2 + A \sqrt{B}$.
\end{proof}

The following result establishes an approximation bound for sub-Gaussian regressors and was developed in \citen{RudelsonZhou2011}. Recall that a random vector $Z\in \RR^p$ is isotropic if $\Ep[ZZ'] = I$, and it is called  $\psi_2$ with a constant $\alpha$ if for every $w\in \RR^p$ we have
$$\|Z'w\|_{\psi_2}:= \inf\{ t : \Ep[{\rm exp}( \ (Z'w)^2/t^2)]\leq 2\} \leq \alpha\|w\|_2.$$ 

\begin{lemma}[Essentially in Theorem 3.2 of \citen{RudelsonZhou2011}]\label{thm:RZ32}
Let $\Psi_i$, $i=1,\ldots,n$, be i.i.d. isotropic random vectors in $\RR^p$ that are $\psi_2$ with a constant $\alpha$. Let $x_i = \Sigma^{1/2}\Psi_i$ so that $\Sigma = \Ep[x_ix_i']$.  For $m\leq p$ and $\tau \in (0,1)$ assume that
$$n\geq \frac{80m\alpha^4}{\tau^2}\log\left(\frac{12ep}{m\tau}\right).$$
Then with probability at least $1-2\exp(-\tau^2n/80\alpha^4)$, for all $u\in \RR^p$, $\|u\|_0\leq m$, we have
$$ (1-\tau)\|\Sigma^{1/2}u\|_2 \leq \sqrt{\En[(x_i'u)^2]} \leq (1+\tau)\|\Sigma^{1/2}u\|_2.$$\end{lemma}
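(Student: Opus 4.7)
\medskip
\noindent\textbf{Proof plan.} The plan is to carry out a standard $\varepsilon$-net plus union-bound argument for restricted isometry of sub-Gaussian matrices. First I would reparametrize by $v:=\Sigma^{1/2}u$, reducing the claim to the uniform bound
$$
\sup_{v\in\mathcal{K}_m,\ \|v\|_2=1}\bigl|\En[(\Psi_i'v)^2]-1\bigr|\leq 2\tau-\tau^2,
$$
where $\mathcal{K}_m:=\bigcup_{S\subseteq\{1,\ldots,p\},\,|S|=m}V_S$ and $V_S:=\Sigma^{1/2}\{u\in\RR^p:\supp(u)\subseteq S\}$ is an $m$-dimensional linear subspace of $\RR^p$. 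Isotropy gives $\Ep[(\Psi_i'v)^2]=\|v\|_2^2$, so this is a uniform concentration statement for the sparse quadratic form of $\En[\Psi_i\Psi_i']-I$.

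Next, for each $S$ with $|S|=m$ I would build a $\delta$-net $\mathcal{N}_S$ of the unit sphere of $V_S$ with $|\mathcal{N}_S|\leq(1+2/\delta)^m$ via the standard volumetric bound, and set $\mathcal{N}:=\cup_S\mathcal{N}_S$, of cardinality at most $\binom{p}{m}(1+2/\delta)^m\leq (3ep/(m\delta))^m$. For a \emph{fixed} unit $v\in V_S$, the $\psi_2$ hypothesis gives $\|\Psi_i'v\|_{\psi_2}\leq\alpha$, so $(\Psi_i'v)^2-1$ is sub-exponential with $\psi_1$-norm at most $C\alpha^2$ for an absolute $C$. Bernstein's inequality for i.i.d.\ sub-exponential summands yields, for every $t\in(0,1]$,
$$
\Pr\!\Bigl(\bigl|\En[(\Psi_i'v)^2]-1\bigr|>t\Bigr)\leq 2\exp\!\bigl(-c\,n\,t^2/\alpha^4\bigr),
$$
for an absolute $c>0$. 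Setting $t=\tau/2$ and union-bounding over $v\in\mathcal{N}$ bounds the net-level failure probability by $2\exp\!\bigl(m\log(3ep/(m\delta))-cn\tau^2/(4\alpha^4)\bigr)$.

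The third step will be to transfer the net bound to the full unit sphere. Fix $S$ and put $B_S:=P_S(\En[\Psi_i\Psi_i']-I)P_S$, with $P_S$ the orthogonal projection onto $V_S$; for any unit $v\in V_S$ choose $v_0\in\mathcal{N}_S$ with $\|v-v_0\|_2\leq\delta$, and a one-line bilinear expansion gives $|v'B_Sv-v_0'B_Sv_0|\leq(2\delta+\delta^2)\|B_S\|_{\mathrm{op}}$. Supremizing over $v$ produces the classical blow-up $\|B_S\|_{\mathrm{op}}\leq(1-2\delta-\delta^2)^{-1}\max_{v_0\in\mathcal{N}_S}|v_0'B_Sv_0|$, so choosing $\delta$ a small absolute constant (e.g.\ $\delta=1/6$) bounds the blow-up by $2$ and converts a net-level bound of $\tau/2$ into the sphere-level bound $\tau$, which in turn implies the two-sided inequality in the lemma. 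The hypothesis $n\geq 80m\alpha^4\tau^{-2}\log(12ep/(m\tau))$ is then tuned precisely so that the Bernstein exponent $cn\tau^2/(4\alpha^4)$ strictly dominates $m\log(3ep/(m\delta))$, leaving the stated tail $2\exp(-\tau^2 n/(80\alpha^4))$.

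The main obstacle will be the bookkeeping of the absolute constants $80$ and $12e$: they are calibrated jointly to the constant in the Bernstein inequality for $\psi_1$ summands, the choice of $\delta$ in the net construction, and the net-to-sphere blow-up. The direct $\varepsilon$-net scheme above will yield the conclusion with possibly larger universal constants; matching the constants $80$ and $12e$ exactly requires either Rudelson--Zhou's more refined Dudley/generic-chaining argument (which trades the $\binom{p}{m}$ union bound for an entropy integral and sharpens the log factor to $\log(12ep/(m\tau))$) or a careful joint optimization of $(\delta,t)$. For every downstream use of the lemma in this paper only the qualitative RIP-type form matters, so the direct net scheme sketched above is sufficient.
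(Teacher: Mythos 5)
Your proposal is sound, but note that the paper itself does not prove this lemma: it is stated in the Tools appendix as a restatement of Theorem 3.2 of Rudelson and Zhou (2011) and is used as an imported black box (unlike the neighboring bounded-regressor lemma, which does receive a proof via symmetrization and Rudelson--Vershynin). Your $\varepsilon$-net argument is the standard self-contained alternative, and each step checks out: the reparametrization $v=\Sigma^{1/2}u$ correctly turns the two-sided bound $(1\pm\tau)\|\Sigma^{1/2}u\|_2$ into a uniform deviation bound of size $2\tau-\tau^2$ for $\En[(\Psi_i'v)^2]-\|v\|_2^2$ over the union of $\binom{p}{m}$ subspaces $\Sigma^{1/2}\{u:\supp(u)\subseteq S\}$; the $\psi_2$ hypothesis makes $(\Psi_i'v)^2-1$ sub-exponential with $\psi_1$-norm of order $\alpha^2$, and since any isotropic $\psi_2$ vector has $\alpha^2\geq 1/\log 2>\tau/2$, the quadratic branch of Bernstein's inequality is the binding one, as you implicitly assume; and the net-to-sphere blow-up with $\delta=1/6$ gives factor $(1-2\delta-\delta^2)^{-1}=36/23<2$ as claimed. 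The one place where your argument genuinely diverges from the cited result is the constants: your scheme yields $1-2\exp(-c\tau^2 n/\alpha^4)$ under $n\gtrsim m\alpha^4\tau^{-2}\log(ep/(m\delta))$ with unspecified absolute constants, whereas the lemma asserts the specific pair $(80,12e)$, which Rudelson and Zhou obtain from a sharper chaining/entropy argument. You correctly observe that this is immaterial here: in the verification of Example 2 the lemma is invoked with $\tau=1/6$, $\alpha=\sqrt{8/3}$, $m=s\log n$ only to produce some admissible $n_0$ and some $\Delta_n\to 0$, and any absolute constants would do. So your proof is acceptable as a substitute for the citation, provided the statement's constants are relaxed to unspecified universal ones.
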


For example, Lemma \ref{thm:RZ32} covers the case of $x_i\sim N(0,\Sigma)$ by setting $\Psi_i\sim N(0,I)$ which is isotropic and $\psi_2$ with a constant $\alpha=\sqrt{8/3}$.


\bibliographystyle{econometrica}
\bibliography{mybibVOLUME}

\pagebreak

\begin{figure}
	 \includegraphics[width=\textwidth]{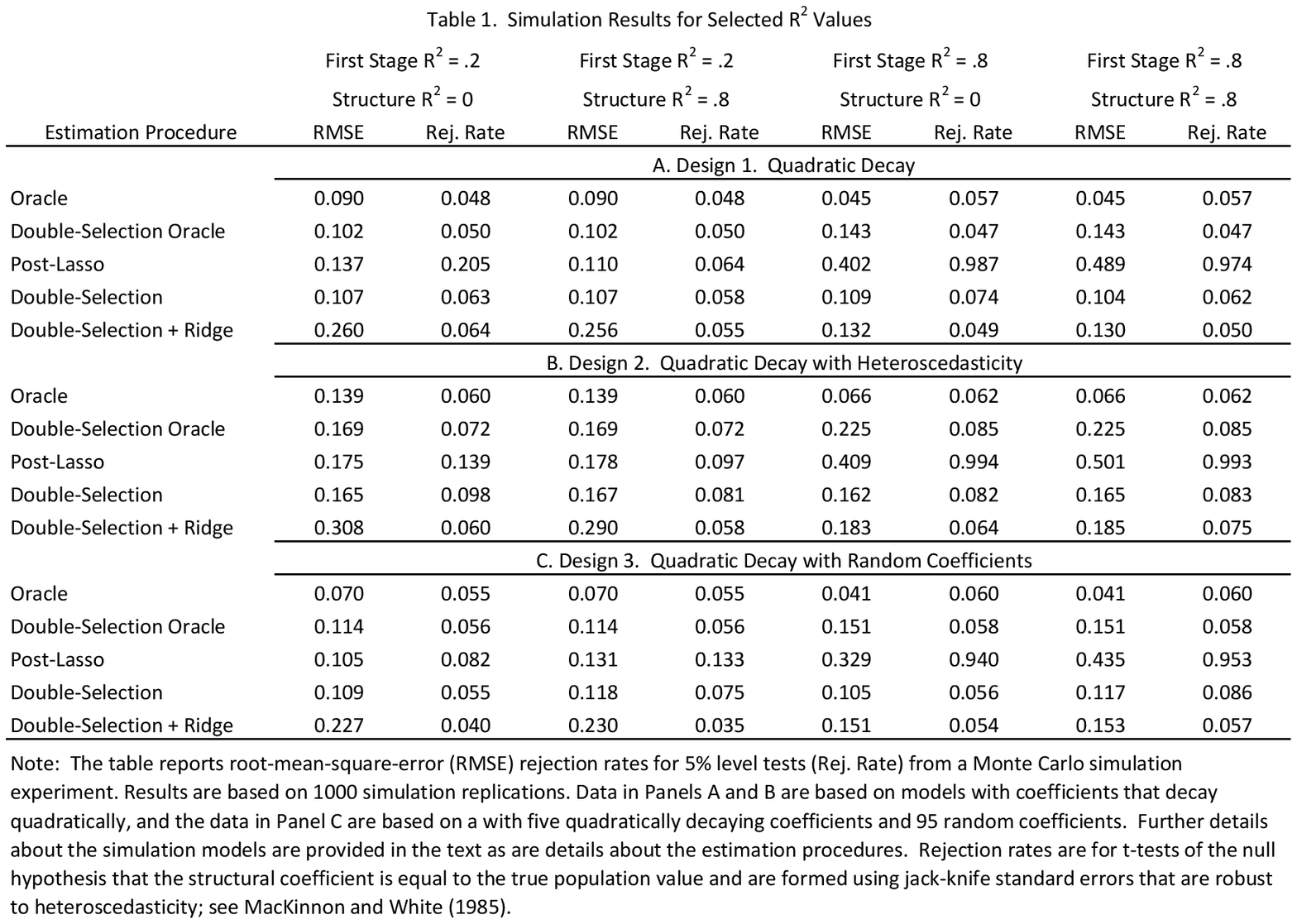}
	\label{fig:table1}
\end{figure}

\pagebreak

\begin{figure}
\includegraphics[width=\textwidth]{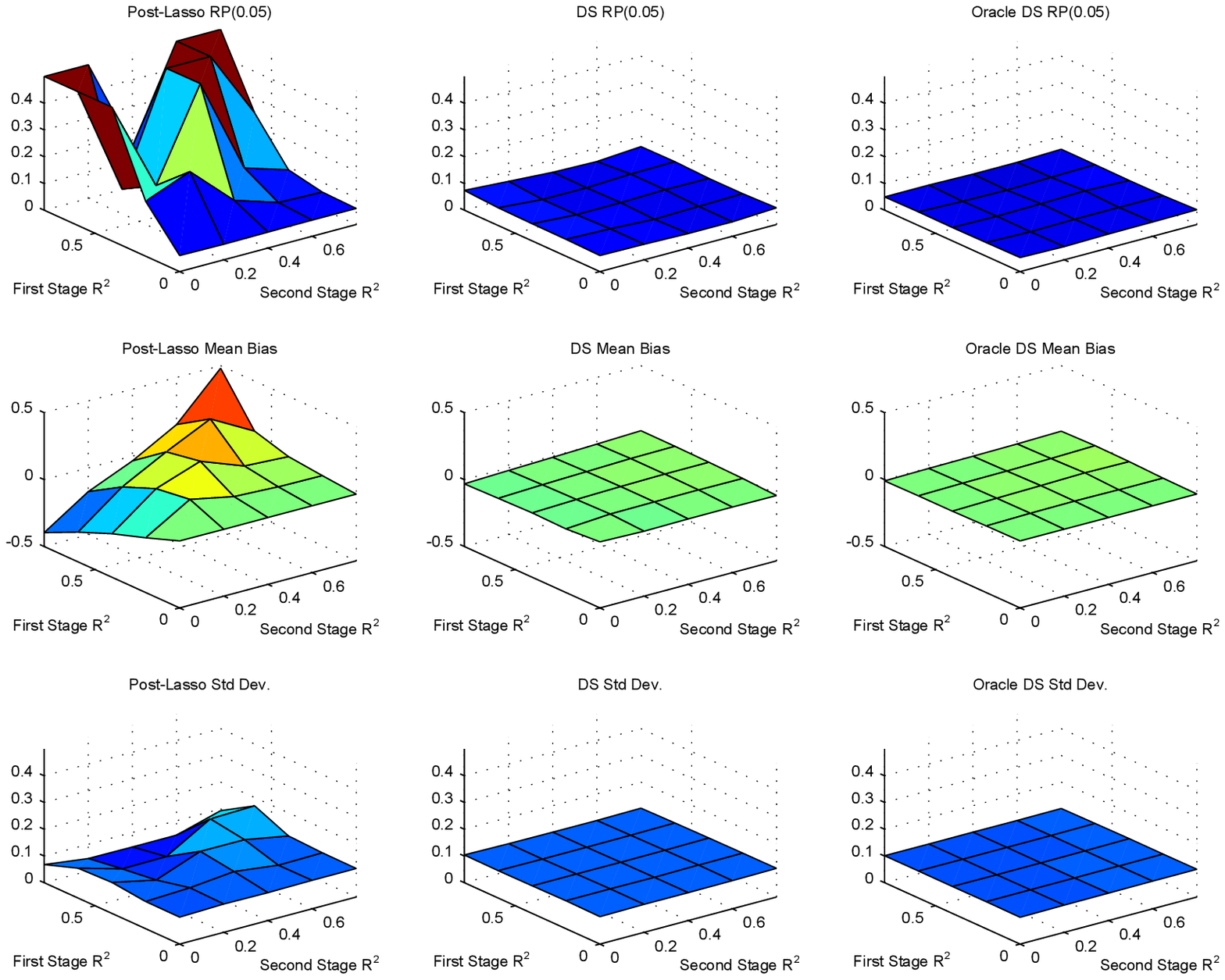}
	\label{fig:figure1}
\caption{This figure presents rejection frequencies for 5\% level tests, biases, and standard deviations for estimating the treatment effect from Design 1 of the simulation study which has quadratically decaying coefficients and homoscedasticity.  Results are reported for a one-step Post-Lasso estimator, our proposed double selection procedure, and the infeasible OLS estimator that uses the set of variables that have coefficients larger than 0.1 in either equation (\ref{eq: RPL1}) or (\ref{eq: RPL2}).  Reduced form and first stage $R^2$ correspond to the population $R^2$ of (\ref{eq: RPL1}) and (\ref{eq: RPL2}) respectively.  Note that rejection frequencies are censored at 0.5. }
\end{figure}

\pagebreak

\begin{figure}
\includegraphics[width=\textwidth]{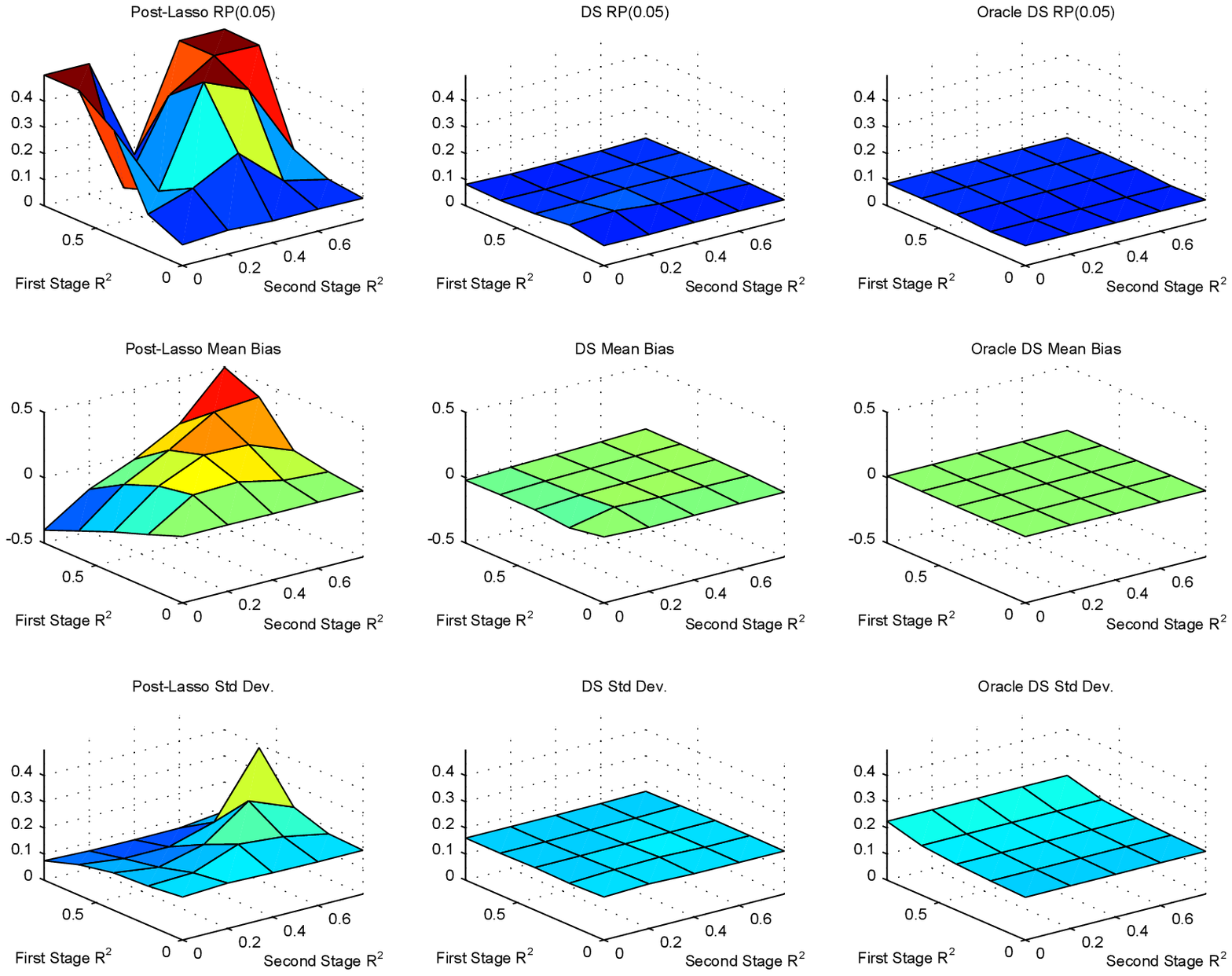}
	\label{fig:figure2}
\caption{This figure presents rejection frequencies for 5\% level tests, biases, and standard deviations for estimating the treatment effect from Design 2 of the simulation study which has quadratically decaying coefficients and heteroscedasticity.  Results are reported for a one-step Post-Lasso estimator, our proposed double selection procedure, and the infeasible OLS estimator that uses the set of variables that have coefficients larger than 0.1 in either equation (\ref{eq: RPL1}) or (\ref{eq: RPL2}).  Reduced form and first stage $R^2$ correspond to the population $R^2$ of (\ref{eq: RPL1}) and (\ref{eq: RPL2}) respectively.  Note that rejection frequencies are censored at 0.5. }
\end{figure}

\begin{figure}
\includegraphics[width=\textwidth]{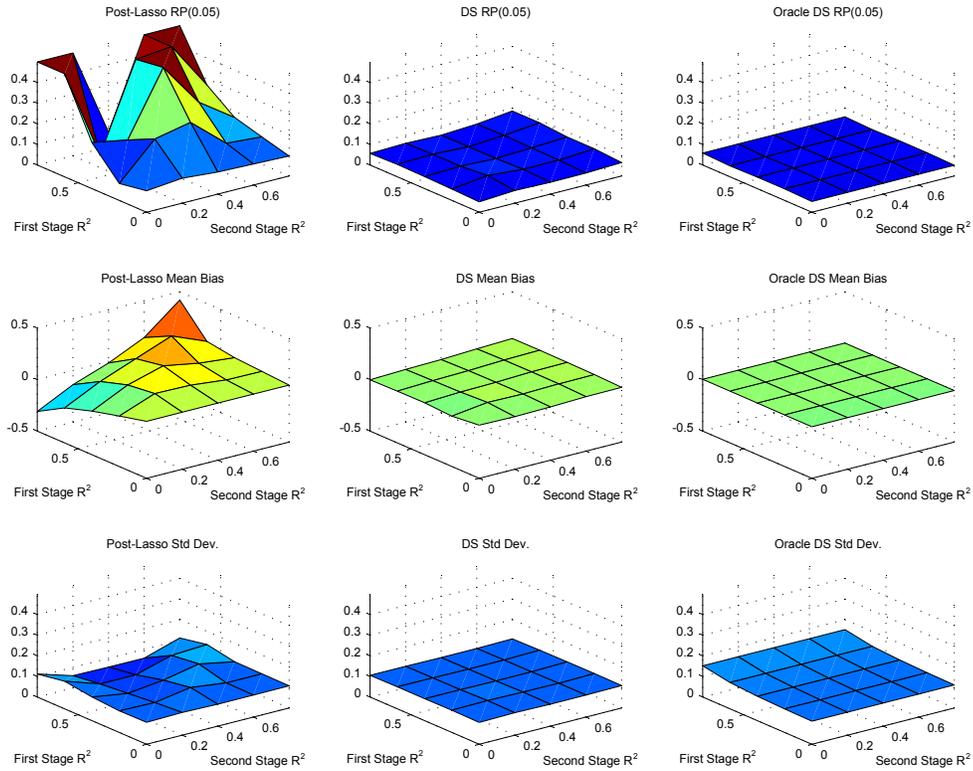}
	\label{fig:figure3}
\caption{This figure presents rejection frequencies for 5\% level tests, biases, and standard deviations for estimating the treatment effect from Design 3 of the simulation study which has five quadratically decaying coefficients and 95 Gaussian random coefficients.  Results are reported for a one-step Post-Lasso estimator, our proposed double selection procedure, and the infeasible OLS estimator that uses the set of variables that have coefficients larger than 0.1 in either equation (\ref{eq: RPL1}) or (\ref{eq: RPL2}).  Reduced form and first stage $R^2$ correspond to what would be the population $R^2$ of (\ref{eq: RPL1}) and (\ref{eq: RPL2}) if all of the random coefficients were equal to zero.  Note that rejection frequencies are censored at 0.5. }
\end{figure}

\begin{figure}
\includegraphics[width=\textwidth]{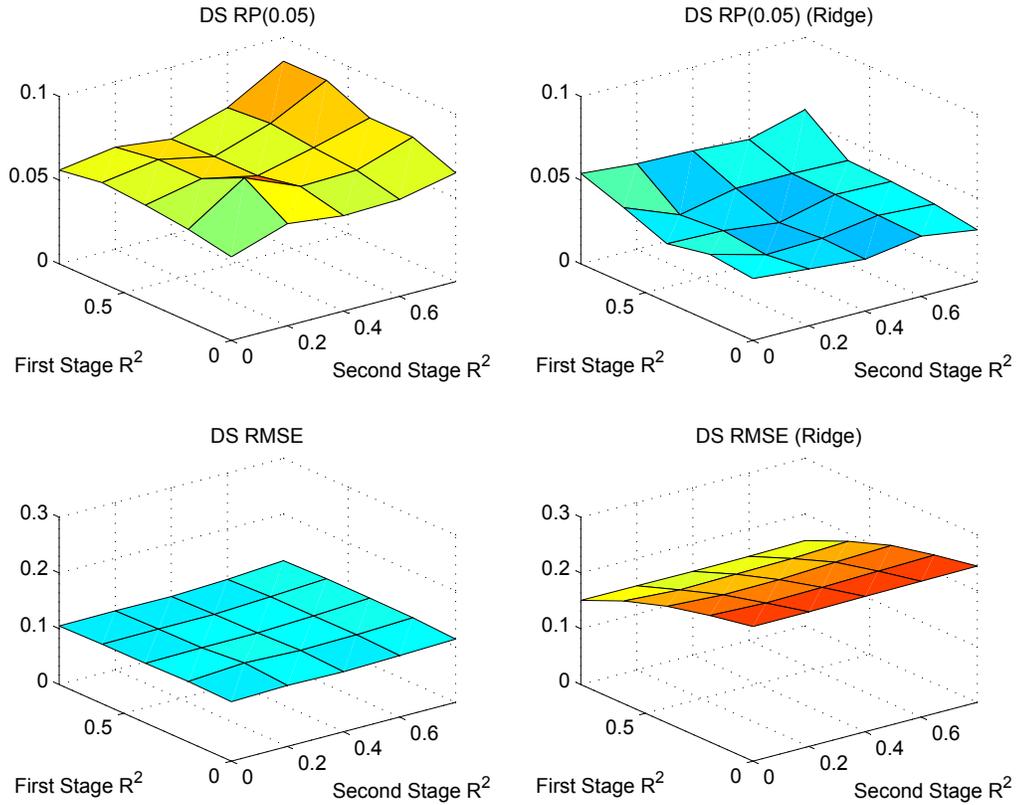}
	\label{fig:figure4}
\caption{This figure presents rejection frequencies for 5\% level tests and RMSE's for estimating the treatment effect from Design 3 of the simulation study which has five quadratically decaying coefficients and 95 Gaussian random coefficients.  Results in the first column are for the proposed double selection procedure, and the results in the second column are for the proposed double selection procedure when the ridge fit from (\ref{eq: RPL1}) is added as an additional potential control.  Reduced form and first stage $R^2$ correspond to what would be the population $R^2$ of (\ref{eq: RPL1}) and (\ref{eq: RPL2}) if all of the random coefficients were equal to zero.  Note that the vertical axis on the rejection frequency graph is from 0 to 0.1. }
\end{figure}

%
%

\pagebreak

\begin{figure}
	\includegraphics[width=\textwidth]{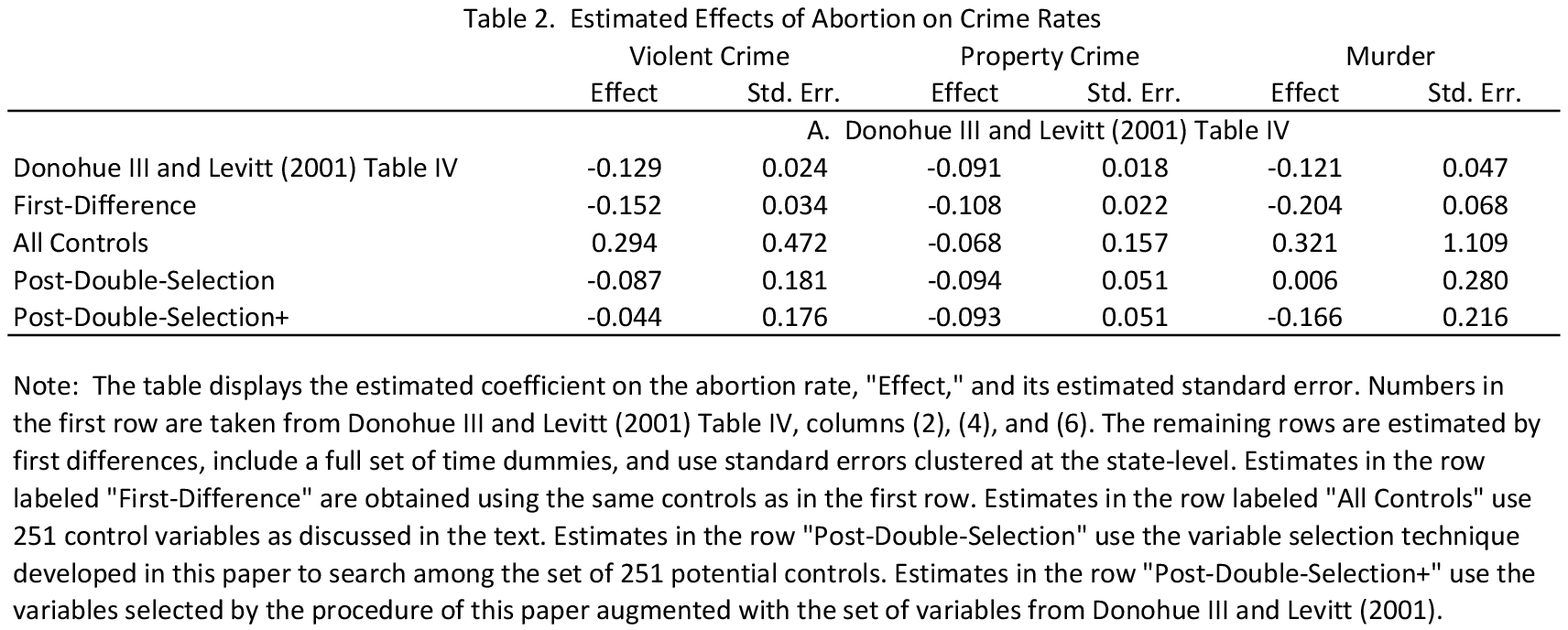}
	\label{fig:table2}
\end{figure}

\end{document}